\def\inserfig#1#2{\centerline{\includegraphics[width=#1\linewidth]{#2}}}
\newenvironment{proof}{\noindent\textit{Proof: }}{{\hfill $\Box$}}
\newtheorem{lemma}{Lemma}[section]
\newtheorem{theorem}[lemma]{Theorem}
\newtheorem{proposition}[lemma]{Proposition}
\newtheorem{corollary}[lemma]{Corollary}
\newtheorem{definition}[lemma]{Definition}
\newtheorem{remark}[lemma]{Remark}
\def\G{Gr}
\begin{document}

\title{\textbf{Split Decomposition and Graph-Labelled Trees: \\Characterizations and Fully-Dynamic Algorithms \\ for Totally Decomposable Graphs}\thanks{Work supported by the French research grant ANR-06-BLAN-0148-01 ``\textit{Graph Decompositions and Algorithms} - \textsc{graal}".
This paper completes and develops the extended abstract \cite{GP07}.
}}

\author{Emeric Gioan \and Christophe Paul\thanks{Research partially conducted while C. Paul was on Sabbatical at School of Computer Science, McGill University, Montr\'eal, Canada}}

\date{CNRS - LIRMM, Universit\'e de Montpellier 2, France \\~\\ \today}

\maketitle

\begin{abstract}
In this paper, we revisit the split decomposition of graphs and give new combinatorial and algorithmic results for the class of totally decomposable graphs, also known as the \emph{distance hereditary graphs}, and for two non-trivial subclasses, namely the \emph{cographs} and the \emph{$3$-leaf power graphs}. Precisely, we give strutural and incremental characterizations, leading to optimal fully-dynamic recognition algorithms for vertex and edge modifications, for each of these classes.
These results rely on the new combinatorial framework of \emph{graph-labelled trees} used to represent the split decomposition of general graphs. 
The point of the paper is to use bijections between the aforementioned 
graph classes and graph-labelled trees whose nodes are labelled by cliques and stars. 
We mention that this bijective viewpoint yields directly an intersection model for the class of distance hereditary graphs.
\end{abstract}

\section{Introduction}

The \emph{$1$-join composition} and its complementary operation, the \emph{split decomposition}, range among the classical operations in graph theory. It was introduced by Cunningham and Edmonds~\cite{Cun82,CE80} in the early 80's and has, since then, been used in various contexts such as perfect graph theory~\cite{Hsu87}, circle graphs~\cite{Bou87}, clique-with~\cite{Cou06} or rank-width~\cite{Oum05b}. The first polynomial time algorithm to compute the split decomposition of a graph, proposed in~\cite{Cun82}, runs $O(n^3)$ time complexity. It was later improved by Ma and Spinrad~\cite{SM94} who described an $O(n^2)$ time algorithm. So far Dahlhaus' linear time algorithm~\cite{Dah00} is the fastest. Also, we mention the recent work \cite{CHL08} which nicely reformulates underlying routines from \cite{Dah00}.

Roughly speaking, a split is a bipartition of the vertices of a graph satisfying certain properties (see Definition~\ref{def:split}). Computing the split decomposition of a graph consists in recursively decompose that graph according to bipartitions that are splits. This process naturally yields a (split) decomposition tree~\cite{Cun82,CE80} which represents the used bipartitions. However such a tree does not keep track of the adjacency of the input graph. Thereby alternative representations of the split decomposition have been proposed. So far, the \emph{split decomposition graph} appearing in~\cite{CD99,Lan01,GP03,Cou06} seems to be the most commonly used representation.
As an example of another related representation, let us mention the $\Delta$-confluent graphs used for
distance hereditary graph drawing~\cite{EGY05}.

This paper starts with an adaptation of the split decomposition graph into a new and simple combinatorial structure, namely \emph{graph-labelled trees}. A \emph{graph-labelled tree} is a tree in which every internal node $u$  is labelled by a graph $G_u$ whose vertices, called \emph{marker-vertices}, are in one-to-one correspondence with the tree-edges incident to $u$. The definition of graph-labelled trees is independent of the split decomposition. But equipped with the notion of \emph{accessibility}, it precisely catches the combinatorial structure studied in \cite{Cun82} and provides a representation of the adjacencies of the graph to be decomposed. 
A node or a leaf $u$ is \emph{accessible} from a leaf $l\neq u$ if for every tree-edges $e=wv$ and $e'=vw'$ on the $l,u$-path in $T$, $e$ and $e'$ are mapped to adjacent marker vertices in $G_u$. 
%
%
%
Every graph-labelled tree is associated with a graph, its \emph{accessibility graph}, whose vertex set is the leaf set of the tree. Two vertices $x$ and $y$ of the accessibility graph are adjacent if and only if the corresponding leaves are accessible from each another.

Surprisingly, revisiting the split decomposition under this original approach yields new combinatorial and algorithmic results, as well as alternative proofs or simpler constructions of previously known results. Section 2 introduces the combinatorial framework of graph-labelled trees which apply to arbitrary graphs. The main results of split decomposition theory are revisited from the graph-labelled trees viewpoint. The split decomposition can be seen as a refinement of the modular decomposition~\cite{Gal67,HP10}. We then describe links between these two graph decompositions techniques in terms of graph-labelled trees. We also establish useful general lemmas.

\medskip
The rest of the paper concentrates on \emph{totally decomposable} graphs (with respect to the split decomposition), also known as the \emph{distance hereditary graphs}~\cite{BM86,HM90}. Distance hereditary graphs play an important role in other classical decomposition techniques since they are exactly the graphs of rank-width $1$~\cite{Oum05b} and range among the elementary graphs of clique-width $3$~\cite{CHL00}. The family of distance hereditary graphs contains a number of well-studied graph classes such as \emph{cographs} which are the graphs totally decomposable by the modular decomposition and \emph{$3$-leaf powers} which form a subfamily of chordal distance hereditary graphs. We apply our techniques to these latter two graph families. Our results are consequences of 
characterizations of the three graph classes we consider (distance hereditary graphs, cographs and $3$-leaf powers). Each of these characterizations, translated into the graph-labelled tree setting, establishes a one-to-one correspondence between the graph class and a set of clique-star labelled trees\footnote{Clique-star (labelled) trees are graph-labelled trees whose graph-labels are cliques (complete graphs) or stars (complete bipartite graphs $K_{1,t}$).}  that satisfy some simple conditions on the distribution of star and clique labels on its nodes.

Our first result, although not the most important, witnesses the relevance of the graph-labelled tree approach to study the split decomposition. The bijection between the clique-star trees and distance hereditary graphs together with the notion of accessibility naturally yields an intersection model that characterizes distance hereditary graphs (Theorem 3.2). Though it was established that distance hereditary graphs form an intersection graph family~\cite{MM99}, 
no intersection model had been explicitely given (see~\cite{Spi}, or \cite{Spi03} page 309).

Among the main contributions of the paper, we develop vertex \emph{incremental characterizations} for distance hereditary graphs, cographs and $3$-leaf powers (see Section 3). That is, for each of these three graph classes, say $\mathcal{F}$, we provide a necessary and sufficient condition under which adding a vertex $x$ adjacent to a certain neighborhood $S$ in a given graph $G\in \mathcal{F}$, yields a graph $G'=G+(x,S)$ which also belongs to $\mathcal{F}$. In comparison, a vertex elimination ordering characterization (see \emph{e.g.}~\cite{BLS99}) only provides sufficient conditions under which a vertex can be added.
The incremental characterization of distance hereditary graphs (Theorem \ref{th:charac}) is new. Restricted to cographs (Theorem \ref{th:inc-cograph}), it is equivalent the known incremental characterization of cographs~\cite{CPS85} which is based on modular decomposition. We then derive a new incremental characterization of $3$-leaf powers (Theorem \ref{th:3-leaf}).

We also provide \emph{edge-modification characterizations} (see Section 5): necessary and sufficient conditions under which for a given graph $G$ belonging to a class of graphs $\mathcal{F}$, the addition (or deletion) of an edge $e$ of $G$ results in a graph of $\mathcal{F}$. 
Let us point out that an edge-modification characterization (or algorithm) cannot be used to derive a vertex-incremental characterization (or algorithm), since removing/adding an edge incident to a vertex may lead out of the class while adding/removing all edges adjacent to this vertex may not. Indeed we exhibit an example (Remark \ref{rk:contre-exemple}) of distance hereditary graph (and cograph) containing a vertex $x$ such that removing any edge incident to $x$ results in a non-distance hereditary graph.
An edge-modification characterization was known for distance hereditary graphs~\cite{TC07} and for cographs~\cite{SS04} but not for $3$-leaf powers. 
Our characterization for distance hereditary graphs consists in testing whether the path between the two leaves corresponding to the vertices incident to the modified edge has length at most $4$ and belongs to a small given finite set.
So, unlike the characterization proposed in ~\cite{TC07}, which is based on the global breadth-first search layering structure of distance hereditary graphs~\cite{HM90}, ours is really local, have simpler and shorter proofs and is a natural generalization of the edge-modification characterization of cograph of~\cite{SS04}. 
Our edge-modification characterizations of cographs and $3$-leaf powers are derived from our DH graph one.

\medskip
These characterizations (incremental and edge-modification) are then used to design \emph{fully-dynamic recognition algorithms}. For a class $\mathcal{F}$ of graphs, the task is to maintain a representation of the input graph under vertex and edge modifications as long as the graph belongs to $\mathcal{F}$. Let us point out that the series of modifications is not known in advance. 
In order to ensure locality of the computation, most of the known dynamic graph algorithms are based on decomposition techniques. For example, the SPQR-tree data structure has been introduced in order to dynamically maintain the 3-connected components of a graph which allows on-line planarity testing~\cite{DT96}. Existing literature on this problem includes representation of chordal  graphs~\cite{Iba99}, proper interval graphs~\cite{HSS02}, cographs~\cite{SS04}, directed cographs~\cite{CP04}, permutation graphs~\cite{CP05}. The data structures used for the last four graph families are strongly related to the modular decomposition tree~\cite{Gal67}.

For each of the three aforementioned classes of graphs, we provide an optimal fully-dynamic algorithm that maintains the split tree representation. The time complexity is linear in the number of edges involved in each modifications (\textit{i.e.} number of neighbors in case of vertex modifications). Our main algorithmic result is the vertex-insertion algorithm for distance hereditary graphs (Subsection \ref{sub:vertex-DH}). Briefly, it amounts to: first, a single search of the subtree of the split tree spanned by the neighbors of the new vertex $x$ to locate 
where the new leaf $x$ should be inserted (if possible); and then, a simple local transformation of the graph-labelled tree. As distance hereditary graphs form an hereditary class, the vertex-deletion routine consists of an easy local transformation. When adapted to cographs, our vertex-only dynamic algorithm (Subsection \ref{sub:vertex-cograph}) is equivalent to the one of \cite{CPS85}. No such algorithm was known for 3-leaf powers (Subsection \ref{sub:vertex-3leaf}). The edge-only dynamic algorithms are direct consequences of the edge-modification characterizations.

Finally, let us observe that as distance hereditary graphs, cographs and $3$-leaf power graphs are hereditary graph families, our fully dynamic recognition algorithms can be used in the context of static graphs as well. This yields, for each of the three graph classes, linear time recognition algorithms (Corollary \ref{cor:static}) to be compared with previous ones (\cite{HM90,DHP01,Bre05} and \cite{NUU07} for distance hereditary graphs). Moreover, our bijective representations allow to derive directly easy isomorphism tests for elements of these classes (Corollary \ref{cor:iso}).

The algorithmic results presented in this paper are summarized in the table below.

\begin{table}[h]
	\centering
		\begin{tabular}{|l|l|l|r|}
		\hline
			distance hereditary & vertex-only & Subsections \ref{sub:vertex-DH} and \ref{sub:vertex-del-DH} & new \\
			\cline{2-4}
			 graphs & edge-only & Subsection \ref{sub:edge-DH} & independent of and shorter than \cite{TC07} \\
		\hline
				 {\it refinement for} & vertex-only & Subsection \ref{sub:vertex-cograph} & equivalent to \cite{CPS85}\\
			\cline{2-4}
			cographs & edge-only & Subsection \ref{sub:edge-cograph} & equivalent to \cite{SS04}  \\
		\hline
				 {\it refinement for} & vertex-only & Subsection \ref{sub:vertex-3leaf} & new \\
			\cline{2-4}
			$3$-leaf powers  & edge-only & Subsection \ref{sub:edge-3leaf}& new  \\
			 \hline
		\end{tabular}
\end{table}

\section{Graph-labelled trees, split and modular decompositions}

The purpose of this section is to introduce the notion of \emph{graph-labelled tree} and to show that the theory of split decomposition~\cite{Cun82} as well as the theory of modular decomposition~\cite{Gal67} can be stated within this framework. Before that, let us first introduce the basic terminology.

In the paper, every  graph $G=(V(G),E(G))$, or $G=(V,E)$ when clear from context,  is simple and loopless. For a subset $S\subseteq V(G)$, $G[S]$ is the subgraph of $G$ induced by $S$. If $T$ is a tree and $S$ a subset of leaves of $T$, then $T(S)$ is the smallest subtree of $T$ spanning the leaves of $S$. If $x$ is a vertex of $G$ then $G-x=G[V(G)-\{x\}]$. Similarly if $x\notin V(G)$, $G+(x,S)$ is the graph $G$ augmented by the new vertex $x$ adjacent to $S\subseteq V(G)$. Similarly if $x$ and $y$ are two vertices of $G$ such that $xy\not\in E(G)$ (resp.  $xy\in E(G)$), then define $G+e=G'(V(G),E(G)\cup\{e\})$ (resp. $G-e=G'(V(G),E(G)\setminus\{e\})$) with $e=xy$. We denote $N(x)$ the neighborhood of a vertex $x$. The neighborhood of a set $S\subseteq V(G)$ is $N(S)=\{x\notin S\mid \exists y\in S, ~xy\in E(G)\}$. The \emph{clique} is the complete graph and the \emph{star} is the complete bipartite graph $K_{1,n}$. The universal vertex of the star is called its \emph{centre} and the degree one vertices its \emph{degree-1 vertices}.
Edges of a tree will be called \emph{tree-edges}, and internal vertices of a tree $T$ will be called \emph{nodes}.

\subsection{Graph-labelled trees}

\begin{definition}
A \emph{graph-labelled tree} $(T,\mathcal{F})$ is a tree $T$ in which every node $v$ of degree $k$ is labelled by a graph $G_v\in\mathcal{F}$ on $k$ vertices, called \emph{marker-vertices}, such that there is a bijection $\rho_v$ from the tree-edges of $T$ incident to $v$ to the marker-vertices of $G_v$. 
If $\rho_v(e)=q$ then $q$ is called an \emph{extremity} of $e$.
\end{definition}

Let $(T,\mathcal{F})$ be a graph-labelled tree and $l$ be a leaf of $T$.
A node or a leaf $u$ different from $l$ is \emph{$l$-accessible} if for every tree-edges $e=wv$ and $e'=vw'$ on the $l,u$-path in $T$, we have $\rho_v(e)\rho_v(e')\in E(G_v)$. By convention, the unique neighbor of the leaf $l$ in $T$ is also $l$-accessible.
See Figure~\ref{fig:ex_split_tree} for an example.%

\begin{definition} \label{def:accessibility}
The \emph{accessibility graph}  of  a \emph{graph-labelled tree} $(T,\mathcal{F})$ is the graph $\G(T,\mathcal{F})$ whose vertex set is the leaf set  of $T$, and in which there is an edge between $x$ and $y$ if and only if $y$ is $x$-accessible. In this setting, we say that
$(T,\mathcal{F})$ \emph{is a graph-labelled tree} of $\G(T,\mathcal{F})$.
\end{definition}

An example of a graph-labelled tree and its accessibility graph is given on Figure~\ref{fig:ex_split_tree}. 
We often abuse the language and call a leaf of $T$ a vertex of the accessibility graph and vice versa if convenient. 

\begin{figure}[htbh]

 
\inserfig{0.95}{ex_split_tree}
\caption{A graph-labelled tree and its accessibility graph. The leaf  $12$ is $4$-accessible (and vice-versa), hence vertices $4$ and $12$ are adjacent in the accessibility graph. Every node is $4$-accessible. 
}
\label{fig:ex_split_tree}
\end{figure}

\begin{lemma} \label{lem:connected}
Let  $(T,\mathcal{F})$ be a graph-labelled tree. The accessibility graph $\G(T,\mathcal{F})$ is connected if and only if for every node $v$ of $T$ the graph $G_v\in \mathcal{F}$ is connected.
\end{lemma}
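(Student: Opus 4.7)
The plan is to prove the two directions separately. The forward direction follows by contrapositive from a single local obstruction argument, while the backward direction is established by induction on the number of internal nodes of $T$, splitting the tree along an internal tree-edge and gluing smaller accessibility paths.

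For the direction $(\Rightarrow)$, suppose some label $G_v$ is disconnected, with $C_1, C_2$ a non-trivial partition of $V(G_v)$ having no edges between them. Each $C_i$ corresponds via $\rho_v^{-1}$ to a non-empty set of tree-edges at $v$, and removing $v$ from $T$ produces subtrees, each of which contains at least one leaf; let $L_i$ be the union over the tree-edges in $\rho_v^{-1}(C_i)$ of the leaves of the corresponding subtrees. For any $x \in L_1$ and $y \in L_2$, the unique $T$-path between $x$ and $y$ goes through $v$ using one tree-edge from each side, so the accessibility condition at $v$ would require an edge between $C_1$ and $C_2$ in $G_v$, which does not exist. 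Hence $x$ and $y$ are non-adjacent in $G(T,\mathcal{F})$; since the same obstruction applies to \emph{every} pair in $L_1 \times L_2$, no accessibility edge crosses the bipartition, and $G(T,\mathcal{F})$ is disconnected.

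For the direction $(\Leftarrow)$, assume every $G_v$ is connected and induct on the number of internal nodes of $T$. The base cases, namely $T$ reduced to a single edge or $T$ a star around a unique internal node $v$, are immediate: the accessibility graph is respectively $K_2$ or isomorphic to $G_v$, both connected. Otherwise, pick an internal tree-edge $e = uv$ with both endpoints internal (such an edge exists whenever $T$ has at least two internal nodes). Split $T$ along $e$ into two graph-labelled trees $T_u'$ and $T_v'$: in $T_u'$ the subtree on the $v$-side is collapsed into a fresh leaf $\ell_v$ attached to $u$ via $e$, and symmetrically for $T_v'$. All labels are kept unchanged, so the hypothesis is preserved, and both trees have strictly fewer internal nodes; by induction their accessibility graphs are connected. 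Given leaves $x$ and $y$ on the two sides of $e$, I combine an accessibility path in $T_u'$ from $x$ to $\ell_v$ with one in $T_v'$ from $\ell_u$ to $y$ to produce a sequence of leaves of $T$. Each consecutive pair lying in the same half is already accessible in $T$, since the $T$-path between them coincides with the half-tree $T$-path and every traversed internal node keeps its original label. The only critical step is the ``splice'' between the predecessor of $\ell_v$ and the successor of $\ell_u$.

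The main obstacle is justifying this splice: I must check that the accessibility conditions at $u$ and at $v$ imposed by the resulting $T$-path coincide with those already certified in the two subproblems. This works precisely because the $T_u'$-path ending at $\ell_v$ must use the tree-edge $e$ as its last edge at $u$, so its accessibility condition at $u$ is an edge of $G_u$ involving $\rho_u(e)$; symmetrically the $T_v'$-path starting at $\ell_u$ enforces an edge of $G_v$ involving $\rho_v(e)$. These are exactly the conditions required at $u$ and $v$ by the $T$-path joining the two boundary leaves across $e$, and every remaining internal node on that $T$-path lies entirely in one half, so its accessibility condition is inherited directly from one of the two subproblems. This yields an accessibility path from $x$ to $y$ in $T$ and completes the induction.
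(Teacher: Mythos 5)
Your proof is correct. The forward direction is the same contrapositive argument as in the paper: a bipartition of $V(G_v)$ with no crossing edge forces every leaf-to-leaf tree-path through $v$ to violate the accessibility condition at $v$, so the induced bipartition of the leaves has no crossing edge in $G(T,\mathcal{F})$. For the converse, the paper also inducts on the number of nodes, but its reduction is one-sided: it locates a node $u$ all of whose neighbours but one are leaves, contracts $u$ together with its pendant leaves into a single new leaf $l_u$, and then describes $G$ as obtained from the smaller accessibility graph $G'$ by substituting the leaves of $u$ for the vertex $l_u$, with adjacencies governed by $G_u$ and by the accessibility of the marker vertex $\rho_v(uv)$; connectivity of $G$ then follows from connectivity of $G'$ and of $G_u$. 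You instead cut at an arbitrary internal tree-edge $e=uv$, recurse on both halves, and splice two accessibility paths across $e$; the verification that the condition recorded at $u$ (resp.\ $v$) in the half-tree, which involves $\rho_u(e)$ (resp.\ $\rho_v(e)$), is exactly the condition demanded by the full $T$-path is done carefully and is right. Your route avoids describing the composed graph explicitly, at the price of an explicit path-splicing argument; the paper's pendant-node reduction is essentially the special case of your edge split in which one side consists of a single node and its attached leaves. One small point to make explicit: your splice only produces walks between leaves on opposite sides of $e$, so you should add the one-line remark that, both sides being nonempty, a same-side pair is connected by routing through any leaf of the other side (or observe directly that a half-tree accessibility path avoiding the fresh leaf transfers verbatim to $T$).
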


\begin{proof}
Assume there is a node $v$ of $T$ such that $G_v$ is not connected and let $C_v$ be a connected component of $G_v$. Let $L$ be the set of leaves belonging to a subtree attached to a marker-vertex of $C_v$. Then by Definition~\ref{def:accessibility}, for any leaf $l'\notin L$, none of the leaves of $L$ is $l'$-accessible.
Thereby in $\G(T,\mathcal{F})$, the set of vertices in $L$ is disconnected from the rest of the graph.

Assume for every node $v$, the graph-label $G_v$ is connected. We prove that $G=\G(T,\mathcal{F})$ is connected by induction of the number $k$ of nodes of $T$. If $k=1$, this is obviously true since $\G(T,\mathcal{F})$ and $G_v$ are isomorphic, where $v$ is the only node of $T$. Assume that $T$ contains $k>1$ nodes. Let $u$ be a node such that all its neighbors but one, say $v$, are leaves (there always exists such a node). 
Let $p$ be the marker-vertex of $G_v$ such that $\rho_v(uv)=p$. Let $(T',\mathcal{F}')$ be the graph-labelled tree obtained from $(T,\mathcal{F})$ by replacing $u$ and its leaves by a new leaf $l_u$. Notice that by construction, every leaf $l$ such that $p$ is $l$-accessible is  $l_u$-accessible.  Observe that $G$ is obtained from $G'=\G(T',\mathcal{F}')$ as follows: $V(G)=V(G')\setminus\{l_u\}\cup L_u$, where $L_u$ is the set of leaves attached to $u$ in $T$; every vertex $x\in L_u$ such that $p$ was $x$-accessible in  $(T,\mathcal{F})$ is adjacent to every neighbor of $l_u$ in $G'$; the adjacencies between the new vertices are those defined by $G_u$. As by assumption both $G'$ (induction hypothesis) and $G_u$ are connected, $G$ is also connected.
\end{proof}

\medskip
From now on, unless explicitly stated, we consider connected graphs (i.e. the graphs belonging to $\mathcal{F}$ in a graph-labelled tree $(T,\mathcal{F})$ are also connected, by Lemma \ref{lem:connected}). The next lemma is central to proofs of further theorems.

\begin{lemma} \label{lem:accessible}
Let $(T,\mathcal{F})$ be a graph-labelled tree of a connected graph $G$ and let $v$ be a node of $T$. Then every maximal tree of $T-v$ contains a leaf $l$ such that $v$ is $l$-accessible.
\end{lemma}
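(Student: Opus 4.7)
The plan is to proceed by induction on $|V(T')|$, where $T'$ denotes the maximal subtree of $T-v$ under consideration. Throughout, let $u$ be the unique vertex of $T'$ adjacent to $v$ in $T$, and set $e = uv$. For the base case $|V(T')|=1$, since every internal node of $T$ has degree at least two, its component in $T-v$ would contain at least one other vertex; hence $u$ must be a leaf of $T$, and by the convention in Definition~\ref{def:accessibility} the node $v$ is $u$-accessible. So the choice $l=u$ works.

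For the inductive step $|V(T')|>1$, the vertex $u$ is an internal node of $T$, so the label $G_u$ is defined, and by Lemma~\ref{lem:connected} (using that $G$ is connected) $G_u$ is connected as well. The marker vertex $y = \rho_u(e)$ therefore has at least one neighbour $z$ in $G_u$. Let $e' = uw$ be the tree-edge with $\rho_u(e') = z$, and let $T''$ be the connected component of $T-u$ containing $w$. Then $T''$ is a maximal subtree of $T-u$, it lies entirely inside $T'-u$, and $|V(T'')|<|V(T')|$. Applying the induction hypothesis to the node $u$ and to the maximal subtree $T''$ of $T-u$ yields a leaf $l \in T''$ such that $u$ is $l$-accessible.

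To conclude, observe that the $l,v$-path in $T$ is obtained by appending the edge $e = uv$ to the $l,u$-path. All internal nodes of this extended path other than $u$ coincide with internal nodes of the $l,u$-path, where the accessibility condition is inherited from the $u$-accessibility of $l$. At the pivot $u$ itself, the two path-edges incident to $u$ are $e' = uw$ and $e = uv$, whose marker vertices $z$ and $y$ are adjacent in $G_u$ by our choice of $z$. Hence $v$ is $l$-accessible, completing the induction.

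The main subtlety I expect is verifying that $T''$ is really a maximal subtree of $T-u$ in the original tree $T$, not merely of $T'-u$; this holds because removing $u$ from $T$ decomposes the tree into one connected component per neighbour of $u$ in $T$, and the component containing $w$ lies entirely inside $T'-u$. Conceptually, the core idea is that the connectedness of each label $G_u$, guaranteed by Lemma~\ref{lem:connected}, supplies at every step a valid continuation of an accessible path, thereby allowing us to reach a leaf realising $v$-accessibility within the prescribed subtree.
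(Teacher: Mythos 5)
Your proof is correct and follows essentially the same route as the paper's: induction on the size of the maximal subtree, using connectedness of the label $G_u$ (via Lemma~\ref{lem:connected}) to pick a neighbour $z$ of $\rho_u(uv)$, recursing into the corresponding component of $T-u$, and extending the accessible path by the single edge $uv$. Your treatment of the base case and the remark that the recursive component is a genuine maximal subtree of $T-u$ are the same points the paper handles, just stated more carefully.
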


\begin{proof}
Let $u$ be a neighbor of node $v$ in $T$ and $T_u$ be the maximal tree of $T-v$ containing $u$. The property trivially holds if $u$ is a leaf. So assume $T_u$ contains $k\geqslant 1$ (non-leaf) nodes. If $u$ is the only node of $T_u$, as $G_u$ is connected, there exists a leaf $l$ neighboring $u$ such that the marker-vertex $\rho_u(lu)$ is adjacent in $G_u$ to the marker-vertex $\rho_u(uv)$. Thereby $v$ is $l$-accessible. Assume by induction that the property is satisfied for every tree with $k'<k$ nodes.  As $G_u$ is connected, $u$ has a neighbor $w\neq v$ such that $\rho_u(uv)$ and $\rho_u(uw)$ are  adjacent in $G_u$. Let $T_w$ be the maximal tree of $T_u-u$ containing $w$. By induction hypothesis, $T_w$ contains a leaf $l$ to which $u$ is $l$-accessible. By the  choice of $w$, $v$ is also $l$-accessible.
\end{proof}

\begin{corollary} \label{cor:neighbor-node}
Let $(T,\mathcal{F})$ be a graph-labelled tree of a connected graph $G$. Let $l$ be a leaf of $T$, and $e=uv$, $e'=uv'$ be distinct tree-edges such that $u$ is a $l$-accessible and $e$ belongs to the $u,l$ path in $T$. Then $\rho_u(e)\rho_u(e')\in E(G_u)$ if and only if there exists a $l$-accessible leaf $l'$ in the maximal tree $T_{v'}$ of $T-e'$ containing~$v'$.
\end{corollary}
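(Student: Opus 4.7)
My plan is to handle the two implications separately: the reverse direction is essentially a direct unpacking of the accessibility definition at $u$, while the forward direction combines Lemma~\ref{lem:accessible} with the hypothesis $\rho_u(e)\rho_u(e')\in E(G_u)$ to exhibit the desired leaf.

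For the forward direction, assume $\rho_u(e)\rho_u(e')\in E(G_u)$. Since $T_{v'}$ is a maximal subtree of $T-u$, Lemma~\ref{lem:accessible} supplies a leaf $l'\in T_{v'}$ such that $u$ is $l'$-accessible. I claim this $l'$ is also $l$-accessible. Because $e=uv$ lies on the $u,l$-path, the leaf $l$ sits in the component of $T-u$ containing $v$, which is different from $T_{v'}$; hence the $l,l'$-path in $T$ necessarily passes through $u$. The adjacency conditions along the portion from $l$ to $u$ hold because $u$ is $l$-accessible, those along the portion from $u$ to $l'$ hold because $u$ is $l'$-accessible, and the single remaining pair of tree-edges to verify sits at $u$ itself: the two tree-edges of the $l,l'$-path incident to $u$ are exactly $e$ and $e'$, whose marker vertices are adjacent in $G_u$ by hypothesis.

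For the converse, suppose there is an $l$-accessible leaf $l'\in T_{v'}$. Since $e$ lies on the $u,l$-path and $e'=uv'$ leads into $T_{v'}$, the leaves $l$ and $l'$ lie in distinct components of $T-u$, so the $l,l'$-path passes through $u$ and its two tree-edges at $u$ are precisely $e$ and $e'$. Applying the definition of $l$-accessibility of $l'$ at the node $u$ yields $\rho_u(e)\rho_u(e')\in E(G_u)$.

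The only delicate point is in the forward direction, where I must check that (a) $T_{v'}$ genuinely is a maximal subtree of $T-u$ so that Lemma~\ref{lem:accessible} applies, and (b) the concatenation of the $l,u$ accessibility chain with the $u,l'$ accessibility chain, glued by the adjacency supplied by the hypothesis at $u$, covers every consecutive pair of tree-edges along the $l,l'$-path. Beyond this routine bookkeeping, I do not anticipate any real obstacle.
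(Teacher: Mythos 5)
Your proof is correct and follows essentially the same route as the paper: the reverse direction is a direct unpacking of the accessibility definition at $u$, and the forward direction invokes Lemma~\ref{lem:accessible} on the component $T_{v'}$ of $T-u$ and then glues the two accessibility chains at $u$ using the hypothesised adjacency. Your write-up is in fact slightly more explicit than the paper's about the gluing step.
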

\begin{proof}
If there exists a $l$-accessible leaf $l'$ in the maximal tree of $T-e'$ containing~$v'$, then by Definition~\ref{def:accessibility}, we have $\rho_u(e)\rho_u(e')\in E(G_u)$. So assume $\rho_u(e)\rho_u(e')\in E(G_u)$. 
By Lemma~\ref{lem:accessible}, $T_{v'}$ contains a leaf $l'$ such that $u$ is $l'$-accessible. As $u$ is also $l$-accessible, then $l'$ is $l$-accessible.
\end{proof}

\medskip
The above Corollary \ref{cor:neighbor-node} can be rephrased as follows: if $u$ and $v$ are two adjacent $l$-accessible nodes, then there exists a $l$-accessible leaf $l'$ such that the $l,l'$-path contains the tree-edge $uv$.

\begin{corollary} \label{cor:subgraph}
Let $(T,\mathcal{F})$ be a graph-labelled tree of a connected graph $G$. Then every graph $G_v\in\mathcal{F}$ is isomorphic to an induced subgraph of $G$.
\end{corollary}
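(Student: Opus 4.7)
The plan is to build an injective map $\varphi\colon V(G_v)\to \mathrm{leaves}(T)$ and verify that it identifies $G_v$ with an induced subgraph of $G=G(T,\mathcal{F})$. The idea is, for each marker vertex $x\in V(G_v)$, to pick a leaf $\ell_x$ on the far side of the tree-edge $\rho_v^{-1}(x)$ that is ``reachable'' from $v$ through this edge, and then to observe that adjacency between $\ell_x$ and $\ell_y$ in $G$ reduces, after stripping the accessible portions on either side of $v$, to adjacency of $x$ and $y$ in $G_v$.

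First, fix a marker vertex $x\in V(G_v)$, let $e_x=\rho_v^{-1}(x)$, and let $T_x$ be the maximal subtree of $T-v$ incident to $e_x$. Since $G$ is connected, Lemma~\ref{lem:connected} guarantees that every label graph is connected, so Lemma~\ref{lem:accessible} applies: $T_x$ contains a leaf $\ell_x$ such that $v$ is $\ell_x$-accessible. Set $\varphi(x)=\ell_x$. The subtrees $\{T_x\}_{x\in V(G_v)}$ are pairwise disjoint, hence $\varphi$ is injective. This step uses Lemma~\ref{lem:accessible} exactly once per marker vertex.

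Next, I would verify that $xy\in E(G_v)$ if and only if $\ell_x\ell_y\in E(G)$ for any two distinct marker vertices $x,y$. The unique $\ell_x,\ell_y$-path in $T$ passes through $v$ and has $e_x,e_y$ as its two tree-edges at $v$. On the $\ell_x,v$-portion and the $v,\ell_y$-portion, the fact that $v$ is $\ell_x$-accessible and $\ell_y$-accessible ensures the accessibility condition at every intermediate node. The only remaining local condition along the full $\ell_x,\ell_y$-path is the one at $v$ itself, and by Definition~\ref{def:accessibility} it reads $\rho_v(e_x)\rho_v(e_y)=xy\in E(G_v)$. Hence the two adjacencies coincide, exhibiting $G_v$ as an induced subgraph of $G$ on the vertex set $\varphi(V(G_v))$.

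The main obstacle is conceptual rather than technical: one must recognize that accessibility, being a conjunction of local conditions along a path, concatenates cleanly when two accessible sub-paths are joined at a common internal node, provided one adds the single local condition at that junction. Once this is noted, the result follows as a short direct application of Lemma~\ref{lem:accessible}, with no further combinatorial work required.
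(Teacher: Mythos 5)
Your proof is correct and follows essentially the same route as the paper: both apply Lemma~\ref{lem:accessible} once per subtree of $T-v$ to obtain a leaf from which $v$ is accessible, and then observe that adjacency among the chosen leaves reduces to the single accessibility condition at $v$, i.e.\ to adjacency in $G_v$. The only difference is that you spell out the final verification, which the paper leaves implicit with ``it follows that''.
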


\begin{proof}
Let $u_1,\dots u_k$ be the neighbors of node $v$ in $T$ and $T_1,\dots T_k$ be the corresponding maximal trees of $T-v$. By Lemma~\ref{lem:accessible}, for all $i$, $1\leqslant i\leqslant k$, the subtree $T_i$ of $T$ contains a leaf $l_i$ such that $v$ is $l_i$-accessible. It follows that the induced subgraph $G[\{l_1\dots l_k\}]$ is isomorphic to $G_v$.
\end{proof}

\medskip
Let $(T,\mathcal{F})$ be a graph-labelled tree of a graph $G$. Let us observe that a graph-labelled tree of any induced subgraph $H=G[X]$ can be retrieved from $(T,\mathcal{F})$. Let $T(X)$ be the smallest subtree of $T$ with set of leaves $X$. For any $G_v\in \mathcal{F}$ labelling a node $v$ of $T'$, let $G'_v$ be the subgraph induced by the marker-vertices associated with tree-edges belonging to $T'$. Then set $\mathcal{F}_X=\{G'_v\mid v\in T(X)\}$ and for every $v\in T(X)$, $\rho'_v$ is the bijection between the tree-edges of $T(X)$ incident to $v$ and the vertices of $G'_v$ such that $\rho'_v(e)=p$ if and only if  $\rho_v(e)=p$. By construction we have $\G(T(X),\mathcal{F}_X)=H$.
Notice that the degree two nodes of $T(X)$ can be removed by contracting one of their two incident tree-edges.

\subsection{Split decomposition}

\begin{definition}{\rm \cite{Cun82} \label{def:split}}
A \emph{split} of a graph $G$ is a bipartition $(V_1,V_2)$ of $V(G)$ such that 1) $|V_1|\geqslant 2$ and $|V_2|\geqslant 2$; and 2) every vertex of $N(V_1)$ is adjacent to every vertex of $N(V_2)$.
\end{definition}

A graph is \emph{degenerate} (with respect to the split decomposition) if every partition of its set of vertices into two non-singleton parts is a split.
The only degenerate graphs are known to be the cliques and the stars. 
A graph without any split is called \emph{prime} (with respect to the split decomposition).

The split decomposition of a graph $G$, as originally studied in \cite{Cun82}, consists of: finding a split $(V_1,V_2)$, decomposing $G$ into $G_1=G[V_1\cup\{x_1\}]$, with $x_1\in N(V_1)$ and $G_2=G[V_2\cup\{x_2\}]$ with $x_2\in N(V_2)$, 
$x_1$ and $x_2$ being called \emph{split-marker-vertices}; 
and then recursevily decomposing $G_1$ and $G_2$. When the process stops, the resulting graphs are called \emph{components} of the split decomposition. Adding, at each decomposition step, an edge between the pair of split-marker-vertices yields \emph{split decomposition graph}. Though the idea of a tree decomposition appears in~\cite{Cun82}, Cunningham's main result states the uniqueness of the set of  components of a split decomposition but does not focus on the structure linking them together. 
As we will see, the graph-labelled tree framework yields a natural formulation of Cunningham's result in terms of tree. 
To clarify the link between the two representations, let us point out that the split-marker-vertices in the above terminology will correspond in our setting in terms of graph-labelled trees to the marker-vertices which are extremities of internal tree-edges.

\begin{lemma} \label{lem:split-edge}
Let $(T,\mathcal{F})$ be a graph-labelled tree with no binary node and $T_1$, $T_2$ be the maximal trees of $T-e$ where $e$ is a tree-edge non-incident to a leaf. Then the bipartition $(L_1,L_2)$ of the leaves of $T$, with $L_i$ being the leaf set of $T_i$ for $i\in\{1,2\}$,
and assuming $|L_i|>1$, defines a split in the graph $\G(T,\mathcal{F})$.
\end{lemma}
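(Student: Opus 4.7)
Let $e=uv$ be a tree-edge non-incident to a leaf, and let $T_1$, $T_2$ be the two components of $T-e$ with $u\in T_1$ and $v\in T_2$; denote their leaf sets by $L_1$ and $L_2$. First I would verify that $|L_1|,|L_2|\geq 2$. Since $e$ is non-incident to a leaf, $u$ and $v$ are internal nodes, and under the standing convention of the split-decomposition setting (degree-two nodes can be contracted, as noted at the end of Section 2.1) they have degree at least three. Thus $u$ has at least two neighbours in $T_1$; by Lemma \ref{lem:accessible} each maximal subtree of $T_1-u$ attached to such a neighbour contains a leaf, producing at least two elements of $L_1$. The same applies symmetrically on the $v$-side.

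\textbf{The key adjacency.} Let $A=N(L_1)\cap L_2$ and $B=N(L_2)\cap L_1$ (so that the cross-edges go between $A$ and $B$). I need to show that every $a\in A$ is adjacent in $G(T,\mathcal{F})$ to every $b\in B$. Pick such $a$ and $b$. By definition of $A$, there exists $x\in L_1$ with $ax\in E(G(T,\mathcal{F}))$; the accessible $a,x$-path in $T$ crosses the edge $e$, so letting $e_v$ (resp. $e_x$) denote the tree-edge on this path incident to $v$ (resp. $u$) distinct from $e$, accessibility gives $\rho_v(e_v)\rho_v(e)\in E(G_v)$ and $\rho_u(e)\rho_u(e_x)\in E(G_u)$. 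Analogously, $b\in B$ yields $y\in L_2$ and an accessible $b,y$-path, from which, letting $f_u$ and $f_v$ denote the tree-edges incident to $u$ and $v$ along this path, one extracts $\rho_u(e)\rho_u(f_u)\in E(G_u)$ and $\rho_v(f_v)\rho_v(e)\in E(G_v)$.

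\textbf{Stitching.} Now assemble the $a,b$-path in $T$: it is the concatenation of the $a,v$-segment (a sub-path of the $a,x$-path, hence every intermediate node inherits its required marker adjacency from the accessible $a,x$-path), the edge $e$, and the $u,b$-segment (a sub-path of the $b,y$-path, same reason). At node $v$ the two tree-edges on the $a,b$-path are $e_v$ and $e$, whose marker-vertex adjacency has just been established; symmetrically at $u$ the two relevant edges are $e$ and $f_u$, also adjacent. So every node on the $a,b$-path satisfies the accessibility condition, whence $b$ is $a$-accessible and $ab\in E(G(T,\mathcal{F}))$, confirming that $(L_1,L_2)$ is a split.

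\textbf{Main obstacle.} There is no deep difficulty; the single conceptual step is noticing that the two required adjacencies at the crossing nodes $u$ and $v$ can be harvested from \emph{different} witness paths (one witnessing $a\in A$, the other witnessing $b\in B$) and then recombined along the $a,b$-path, because the intermediate nodes on each half of the $a,b$-path coincide with intermediate nodes of the corresponding witness path. Once this book-keeping is clear, the proof is short and entirely local to the two endpoints of $e$.
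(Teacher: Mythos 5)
Your proof is correct and takes essentially the same route as the paper's one-line argument, which observes that for $l_1\in L_1$ and $l_2\in L_2$ adjacency is equivalent to the conjunction of ``$t_2$ is $l_1$-accessible'' and ``$t_1$ is $l_2$-accessible'', two conditions each depending on only one endpoint of the pair; your witness-path stitching is exactly the ``if'' direction of that factorization, spelled out. You additionally check $|L_1|,|L_2|\geqslant 2$, a point the paper's proof silently omits.
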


\begin{proof}
Let $e=t_1t_2$ and let $l_1$ and $l_2$ be leaves of $L_1$ and $L_2$ respectively. By definition of $\G(T,\mathcal{F})$, $l_1$ and $l_2$ are adjacent if and only if $t_2$ is $l_1$-accessible and $t_1$ is $l_2$-accessible. It follows that $(L_1,L_2)$ defines a split of $\G(T,\mathcal{F})$.
\end{proof}

\medskip

\begin{figure}[h]


\inserfig{0.7}{split_join}
\caption{The node-split and the node-join operations on a graph-labelled tree.}
\label{fig:split-join}
\end{figure}

We can naturally define the \emph{node-split} operation and its converse, the \emph{node-join}, on a graph-labelled tree $(T,\mathcal{F})$ as follows (see Figure~\ref{fig:split-join}):
\begin{itemize}
\item[$\bullet$] 
\emph{Node-split in $(T,\mathcal{F})$:} Let $v$ be a node of $T$ whose graph $G_v$ has a split $(A,B)$.
Let $G_{A}$ and $G_{B}$ be the subgraphs resulting from the split $(A,B)$ of $G_v$ and $a$, $b$ be the respective split-marker-vertices. Splitting the node $v$ consists of substituting $v$ by two adjacent nodes $v_A$ and $v_B$, respectively labelled by $G_{A}$ and $G_{B}$, such that for every $p\in V(G_A)$ different from $a$, $\rho^{-1}_{v_A}(p)=\rho^{-1}_{v}(p)$ and $\rho^{-1}_{v_A}(a)=v_Av_B$ (similarly for every $q\in V(G_B)$ different from $b$, $\rho^{-1}_{v_B}(q)=\rho^{-1}_{v}(q)$ and $\rho^{-1}_{v_B}(b)=v_Av_B$).


\item[$\bullet$] 
\emph{Node-join in $(T,\mathcal{F})$:} Let $uv$ be a tree-edge of $T$. Then joining the nodes $u$ and $v$ consists of contracting the tree-edge $uv$ and substituting $u$ and $v$ by a single node $w$ labelled by the graph $G_w$ defined as follows:
$$V(G_w)=(\ V(G_u)-\{\rho_u(uv)\}\ )\ \cup\ (\ V(G_v)-\{\rho_v(uv)\}\ )$$
$$E(G_w)= \Bigl(\ \bigl(\ E(G_u)\cup E(G_v)\ \bigr)\ \cap\ \bigl(\ V(G_w)\times V(G_w)\ \bigr)\ \Bigr)\  \cup\  \Bigl(\ N_{G_v}\bigl(\rho_v(uv) \bigr)\ \times\ N_{G_u}\bigl(\rho_u(uv)\bigr)\ \Bigr)$$
For every marker-vertex $p\in V(G_w)$, $\rho^{-1}_w(p)=\rho^{-1}_v(p)$ if $p\in V(G_v)$ and  $\rho^{-1}_w(p)=\rho^{-1}_u(p)$ if $p\in V(G_u)$.
\end{itemize}
\medskip

Observe that if $(T,\mathcal{F})$ is obtained from $(T',\mathcal{F}')$ by a node-join or a node-split operation, then it follows from the definitions that $\G(T,\mathcal{F})=\G(T',\mathcal{F}')$. This show that a given graph is not representated by a unique graph-labelled tree.

\begin{figure}[h]


\inserfig{0.55}{clique_and_star_join}
\caption{Node-split and node-join operations on cliques and stars.}
\label{fig:clique-star-join}
\end{figure}

Among the node-join operations, let us distinguish: the \emph{clique-join}, operating on two neighboring nodes labelled by cliques, and the \emph{star-join}, operating on star-labelled neighboring nodes $u$, $v$ such that the tree-edge $uv$ links the centre of one star to a degree-1 vertex of the other. The converse operations are called respectively 
\emph{clique-split} and \emph{star-split}.  See Figure \ref{fig:clique-star-join}.
Also, if a node $v$ of a graph-labelled tree has degree $2$ in a graph-labelled tree, then $G_v$  consists of an edge between two marker vertices and thereby $v$ can be contracted without loss of information.
A graph-labelled tree 
$(T,\mathcal{F})$ is \emph{reduced} if every node has degree $>2$ and neither a clique-join nor a star-join can be applied. So hereafter we only consider graphs with at least $3$ vertices.

We are now able to reformulate the main split decomposition theorem first established in~\cite{Cun82}.
For completeness of the paper, a direct proof of Theorem \ref{th:cun} in terms of graph-labelled trees is provided in the appendix.

\begin{theorem}[Cunningham's Theorem reformulated]  \label{th:cun}
For every connected graph $G$, there exists a unique reduced graph-labelled tree $(T,\mathcal{F})$ such that  $G=\G(T,\mathcal{F})$ and every graph of $\mathcal{F}$ is prime or degenerate.
\end{theorem}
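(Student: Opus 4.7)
I would prove the theorem in two phases: first existence by an explicit construction, then uniqueness via Cunningham's original split decomposition theorem.

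For existence, start from the trivial graph-labelled tree consisting of a single node $v$ labelled by $G$ itself (so that every leaf of $T$ is adjacent to $v$, and $\rho_v$ puts the leaves in bijection with $V(G)$). Its accessibility graph is clearly $G$. Now, as long as some node $v$ of the current tree carries a label $G_v$ admitting a non-trivial split, apply the split operation to $v$; this preserves the accessibility graph by the remark following its definition, and strictly increases the number of nodes while keeping the leaf set fixed, so the process halts. When it halts, every label is either prime or degenerate. Finally, exhaustively apply clique-joins and star-joins, and contract any degree-$2$ node: each of these operations preserves the accessibility graph, stays within the prime-or-degenerate family (a clique-join of two cliques is a clique, and a star-join of two stars along a centre-extremity edge is a star), and strictly decreases the number of nodes. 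The resulting graph-labelled tree is reduced by construction and still has accessibility graph $G$.

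For uniqueness, I would invoke Cunningham's original theorem~\cite{Cun82}, which asserts that the multiset of prime and degenerate quotients arising in any maximal split decomposition of $G$ is determined by $G$, provided one quotients out by the ambiguity in successively decomposing cliques and stars. Using Lemma~\ref{lem:split-edge}, each internal tree-edge of a reduced $(T,\mathcal{F})$ yields a split of $G$, and conversely each split of $G$ can be shown to refine a bipartition of the leaf set of $T$ induced by some edge or node of $T$. The tree-edges of $(T,\mathcal{F})$ then correspond exactly to \emph{strong splits} of $G$ (those compatible with every other split), and the set of strong splits of a connected graph forms a laminar-like structure that determines a unique tree; the label at each node is the quotient graph of $G$ on the corresponding equivalence class of marker vertices, and is therefore itself determined. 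The reducedness condition is precisely what forbids the two local moves (clique-split and star-split) leaving the accessibility graph unchanged, which is why it singles out a canonical representative.

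The main obstacle is the uniqueness statement, and the delicate point within it is the correct handling of degenerate labels: two adjacent clique nodes, or two star nodes joined along a centre-to-extremity edge, produce accessibility graphs that could equally well have been encoded by a single clique or star label. Showing that reducedness (together with contraction of degree-$2$ nodes) rules out exactly these non-canonical configurations, and that no further hidden ambiguity remains at prime labels, is where the argument really has content; the iterative existence argument is essentially bookkeeping in comparison.
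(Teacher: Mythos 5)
Your existence argument is sound and matches what the paper does implicitly (iterate split operations, which terminate because every internal node keeps degree at least $3$ over a fixed leaf set; then reduce by joins and degree-$2$ contractions); the paper does not even spell this half out. The problem is the uniqueness half, which is where all the content of the theorem lives, and there your proposal has a genuine gap: both of the routes you sketch stop exactly at the hard step. If you invoke Cunningham's original theorem as a black box, you still owe a precise translation between his recursive ``multiset of quotients'' formulation and reduced graph-labelled trees, and that translation is essentially the theorem being proved --- the paper deliberately avoids this by giving a self-contained argument. If instead you follow your second route, the load-bearing claim is the sentence ``each split of $G$ can be shown to refine a bipartition of the leaf set of $T$ induced by some edge or node of $T$,'' and you never show it. A priori, $G$ could admit a split that is not visible in a given reduced prime/degenerate tree at all, in which case two non-isomorphic such trees could represent the same graph and the ``strong splits determine the tree'' argument would never get started. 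Lemma~\ref{lem:split-edge} only gives the easy direction (tree-edges yield splits); the converse is the whole difficulty.

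The paper's proof of Theorem~\ref{th:cun} consists almost entirely of establishing this converse as Proposition~\ref{prop:split_list}: every split $(A,B)$ of $G$ is the leaf bipartition induced by removing an edge of a tree obtained from $(T,\mathcal{F})$ by at most one split of a degenerate node. Its proof is a full page of case analysis on where the frontier sets $A'=N(B)$ and $B'=N(A)$ sit in the tree, using the accessibility machinery (Lemmas~\ref{lem:connected} and~\ref{lem:accessible}) to show first that the subtrees spanned by $A'$ and $B'$ share at most one node (any node seeing both sides in the relevant way is forced to be a clique, and two adjacent such nodes would violate reducedness), and then to treat the edge case and the node case separately, forcing the offending node to be a star or a clique respectively and exhibiting the single degenerate split that realizes $(A,B)$. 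You correctly identified that the degenerate labels are the delicate point, but identifying the difficulty is not resolving it: as written, the proposal reduces the theorem to an unproved statement that carries its entire weight.
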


For a connected graph $G$, the \emph{split tree} $ST(G)$ of $G$ is the unique reduced graph-labelled tree $(T,\mathcal{F})$ in the above Theorem \ref{th:cun}. As an example, see Figure~\ref{fig:ex_split_tree} where the graph-labelled tree is effectively reduced.

\begin{corollary} \label{cor:split_list}
Let $ST(G)=(T,\mathcal{F})$ be the split tree of a connected graph $G=(V,E)$.
Then every split of the graph $G$ is the bipartition of the set of leaves of $T$ induced by removing a tree-edge
of $(T',\mathcal{F}')$, a graph-labelled tree which is obtained from $(T,\mathcal{F})$ by at most one node-split operation on a degenerate node.
\end{corollary}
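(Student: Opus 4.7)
The plan is to construct a graph-labelled tree $T^*$ for $G$ in which an internal edge explicitly realizes the split $(A,B)$, and then show that $T^*$ differs from $ST(G)$ by at most one split operation on a degenerate node. The construction reverses Cunningham's split step: I form $G_A = G[A \cup \{m_A\}]$ and $G_B = G[B \cup \{m_B\}]$ with markers $m_A \in N_G(A)$ and $m_B \in N_G(B)$, which are connected graphs (since $(A,B)$ is a split of the connected graph $G$). By Theorem~\ref{th:cun}, the reduced split trees $ST(G_A)$ and $ST(G_B)$ exist; let $u$ and $v$ denote the nodes adjacent to the leaves $m_A$ and $m_B$ respectively. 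I then glue the two trees: remove the two marker leaves and identify their parent edges into a single new internal edge $uv$, setting $\rho_u(uv) = m_A$ and $\rho_v(uv) = m_B$. Call the resulting graph-labelled tree $T^*$. By the invariance of the accessibility graph under the split/join operations noted earlier, $G(T^*) = G$, and by Lemma~\ref{lem:split-edge}, removing the edge $uv$ from $T^*$ induces exactly the bipartition $(A,B)$.

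The next step is to compare $T^*$ with $ST(G)$. Since $ST(G_A)$ and $ST(G_B)$ are each reduced, the internal edges of $T^*$ inherited from either tree are already non-joinable and no inherited node has degree at most $2$. The only edge of $T^*$ that can violate reducedness is the new edge $uv$: specifically, $uv$ is joinable precisely when either a clique-join applies (both $G_u$ and $G_v$ are cliques) or a star-join applies (both are stars with the edge $uv$ linking a centre to an extremity). If no such join applies, then $T^*$ is already reduced, hence by the uniqueness part of Theorem~\ref{th:cun}, $T^* = ST(G)$, so the edge $uv$ of $ST(G)$ itself directly realizes $(A,B)$ and zero split operations are performed. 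Otherwise, applying the unique applicable join at $uv$ yields a graph-labelled tree $T^{**}$, and the conclusion I aim for is $T^{**} = ST(G)$; then reversing this join by a single split operation on the resulting degenerate node of $ST(G)$ recovers $T^*$ together with its edge realizing $(A,B)$.

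The main obstacle is verifying that $T^{**}$ is reduced. The key idea is that any further join available in $T^{**}$ would have to occur at an edge incident to the merged node $w$, but any such join would translate directly back into a join available in $ST(G_A)$ or in $ST(G_B)$ before the gluing. In the clique-join case, $w$ is a clique and each of its neighbours is inherited from a neighbour of $u$ in $ST(G_A)$ or of $v$ in $ST(G_B)$; these neighbours cannot themselves be cliques, otherwise $ST(G_A)$ or $ST(G_B)$ would have admitted a clique-join, contradicting their reducedness. The star-join case follows from a parallel analysis that keeps track of the centre/extremity roles of the markers at $w$'s neighbours, again reducing to the reducedness of the two input split trees to exclude any new centre-to-extremity configuration. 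Once $T^{**}$ is confirmed reduced, the uniqueness in Theorem~\ref{th:cun} gives $T^{**} = ST(G)$, completing the plan.
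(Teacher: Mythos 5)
Your proof is correct, but it takes a genuinely different route from the paper's. The paper proves this statement via Proposition~\ref{prop:split_list} in the appendix, working entirely \emph{inside} the given tree: it introduces the frontier sets $A'$, $B'$ of the split, shows that the spanning subtrees $T[A']$ and $T[B']$ share at most one node (two adjacent ``mixed'' nodes would both be forced to be cliques, contradicting reducedness), and then case-analyzes whether an edge or a single node of $T$ separates $A'$ from $B'$, showing in the node case that the node is degenerate and admits the required split. You instead \emph{reconstruct}: you build a tree realizing $(A,B)$ by gluing $ST(G_A)$ and $ST(G_B)$ along a new edge, observe that all inherited edges stay non-joinable, and invoke the uniqueness of Theorem~\ref{th:cun} to identify the result (after at most one join at the new edge) with $ST(G)$. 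Your route is shorter and more conceptual, but it leans on the full strength of Cunningham's uniqueness theorem, whereas the paper's argument deliberately avoids it --- Proposition~\ref{prop:split_list} is the engine the appendix uses to \emph{prove} Theorem~\ref{th:cun}, so your proof could not be substituted for it in the paper's logical development (it is, however, non-circular as a proof of the corollary, which is stated after the theorem). Two points in your write-up deserve an explicit line: the claim $G(T^*)=G$ is not literally an instance of the split/join invariance as stated (you glue two separately built trees rather than operate on one), though it follows either by a direct check of accessibility across the new edge using $N_{G_A}(m_A)=A'$ and the split property, or by viewing $T^*$ as reachable from the one-node tree labelled $G$ via splits and joins; and in the star-join case you should verify that the merged node $w$ is a star whose centre is the centre of the node contributing the extremity side of $uv$, so that any centre-to-extremity configuration at an edge $wz$ pulls back to one already present in $ST(G_A)$ or $ST(G_B)$ --- the bookkeeping does close, but it is the one place where the argument could silently go wrong.
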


The next Lemma will be crucial for algorithm complexity means.

\begin{lemma} \label{lem:tree-size}
Let $ST(G)=(T,\mathcal{F})$ be the split tree of a connected graph $G$. For every vertex $x\in V(G)$, $T(N(x))$ has at most $2.|N(x)|$ nodes.
\end{lemma}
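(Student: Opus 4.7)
The plan is to bound the internal nodes of $T' := T(N(x))$ by combining a degree-sum argument on $T'$ with a structural constraint on ``degree-$2$ in $T'$'' nodes that comes from the reducedness of the split tree. Set $k := |N(x)|$ and let $m$ denote the number of internal nodes of $T'$. Every internal node of $T'$ must have $T'$-degree at least $2$, otherwise it would be a leaf of $T'$; but leaves of $T'$ lie in $N(x) \subseteq V(G)$, which are leaves of $T$, not internal nodes of $T$. I will classify internal nodes as \emph{branching} (degree $\geq 3$ in $T'$) or \emph{path} (degree exactly $2$ in $T'$).

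First I would apply the tree degree-sum identity $\sum_v \deg_{T'}(v) = 2(k+m-1)$ to $T'$. Separating the contribution of leaves (each degree $1$, total $k$) from internal nodes (contributing at least $3$ for branching and exactly $2$ for path), a routine calculation yields the standard Steiner-tree bound $|\text{branching}| \leq k - 2$.

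For path nodes, I would exploit reducedness of $ST(G)$ via Theorem~\ref{th:cun}. At a non-root path node $v$, the marker $x_v \in V(G_v)$ corresponding to the tree-edge on the $v$-to-$x$ path has $G_v$-degree exactly one (the unique ``good-child'' direction in $T'$); its unique neighbour $y_v$ in $G_v$ marks the good-child edge. Since clique labels have all markers of degree $\geq 2$, $G_v$ must be a star (with $x_v$ an extremity and $y_v$ the centre) or a prime. If the good child $w$ of $v$ is also a path node and $G_w$ is a star, then the tree-edge $vw$ links the centre $y_v$ of $G_v$ at one end to the extremity $x_w$ of $G_w$ at the other — exactly the star-join configuration forbidden in a reduced split tree. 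Thus adjacent star-labelled path nodes along the good-child direction are ruled out, and the map ``non-root path node $\mapsto$ its good child'' injects non-root path nodes into $N(x) \cup \{\text{non-root branching}\}$ (with the inverse being ``parent in $T$''), yielding $|\text{path}| \leq k + |\text{branching}|$.

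Combining the two bounds, with a separate treatment of the root of $T'$, then gives $m \leq 2k$. The main obstacle I anticipate is twofold. First, when $G_v$ or $G_w$ is prime rather than degenerate, reducedness supplies no analogous ``prime-join'' reduction, so ruling out chains of adjacent path primes requires a different argument — perhaps via a split-induced inconsistency or by reinvoking Corollary~\ref{cor:split_list}. Second, tightening the naive accounting (which only gives $m \leq 2k + O(1)$) to the stated constant $2$ calls for a finer charging scheme that pairs each path node with a specific leaf in $N(x)$ through the path structure, ensuring no leaf is overcharged.
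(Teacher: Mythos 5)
Your overall strategy is the paper's: orient $T(N(x))$ away from $x$, use Corollary~\ref{cor:neighbor-node} to show that at a degree-$2$ node $v$ of $T(N(x))$ the marker pointing towards $x$ has degree $1$ in $G_v$, deduce that $G_v$ is a star whose centre points to the unique child, and invoke the star-join prohibition in a reduced tree to forbid two consecutive degree-$2$ nodes. The one genuine gap is the prime case, which you explicitly leave open as an ``obstacle''. It is not an obstacle: a connected graph on at least $4$ vertices containing a degree-$1$ vertex $a$ is never prime, because if $b$ is the unique neighbour of $a$ then $(\{a,b\},\,V\setminus\{a,b\})$ is a split --- one checks $N(V_2)=\{b\}$ and $N(V_1)\subseteq N(b)$, so the split condition of Definition~\ref{def:split} holds. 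This is precisely the paper's one-line remark that ``the minimum degree of a prime graph is at least $2$''. Combined with the fact that a clique label on a node of degree $\geq 3$ has all markers of degree $\geq 2$, the degree-$1$ marker forces $G_v$ to be a star outright; no ``prime-join'' analogue is needed, and without this observation your case analysis does not close, since nothing you invoke excludes a chain of prime-labelled degree-$2$ nodes.

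On the counting, a secondary remark: your two bounds (path nodes inject into leaves and branching nodes, and branching nodes number at most $k-2$) combine to $m\leq k+2(k-2)=3k-4$, not $2k+O(1)$ as you state, so the slack is larger than you estimate. To be fair, the paper's own proof is equally terse at this final step (``any node of degree $2$ is followed by a node with degree $>2$; hence the result'') and supplies no finer charging scheme; the structural content --- when $T(N(x))$ is directed from $x$, every degree-$2$ node is immediately followed by a leaf or a node of degree $>2$ --- is identical in both write-ups, and a linear bound in $|N(x)|$ is all that is used in the subsequent complexity analysis. So the counting is not where your proposal diverges from the paper; the missing min-degree observation for prime labels is.
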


\begin{proof}
Let $u$ and $v$ be two adjacent nodes in $T(N(x))$ such that  $v$ has degree $2$ in $T(N(x))$ and $u$ is on the $x,v$-path. Let $a$ be the marker-vertex of $G_v$ such that $\rho^{-1}_v(a)=uv$. Then $a$ has degree $1$ in $G_v$ otherwise, by Corollary \ref{cor:neighbor-node}, node $v$ would have degree $>2$. Hence $G_v$ is not prime (a graph with a pendant vertex has a split), hence it is a star with centre $b$ such that $ab$ is an edge of $G_v$. Let $w$ be the node neighbor of $v$ such that $\rho^{-1}_v(b)=vw$. If $w$ is not a leaf, then $w$ has degree $>2$ in $T(N(x))$, otherwise it would be a star $\rho_w(vw)$ being a degree one marker-vertex and the tree would not be reduced.
So $T(N(x))$ does not contains two adjacent degree two nodes.
Hence the result.
\end{proof}

\subsection{Modular decomposition}
\label{sec:mod-dec}

The modular decomposition of a graph is a well understood decomposition process (see~\cite{MR84} for a complete survey). However the purpose of this section is to show that the graph labelled-trees are also a natural tool to represent the modular decomposition. Thereby it provides a framework common to the split and the modular decomposition. 

\begin{definition} \label{def:module}
A \emph{module} of a graph $G$ is a set $M$ of vertices such that every vertex $x$ outside $M$ is either adjacent to all the vertices of $M$ ($M\subseteq N(x)$) or to none of them ($M\cap N(x)=\emptyset$). 
\end{definition}

Singleton vertex sets and the whole vertex set are the \emph{trivial} modules of $G=(V,E)$.
A graph is \emph{degenerate} with respect to the modular decomposition, or \emph{$M$-degenerate} (to avoid confusion with the split decomposition), if every subset of its vertices is a module. The $M$-degenerate graphs are cliques or stables (the graph with an empty edge set - or independent set). Intuitively,  cliques and \emph{stables}  play the same role with respect to the modular decomposition than cliques and stars with respect to the split decomposition.
A graph is \emph{prime} with respect to the modular decomposition, or \emph{$M$-prime}, whenever all its modules are trivial. 

If $\mathcal{P}=\{M_1,\dots M_k\}$ is a partition of the vertex set of a graph $G$, the \emph{quotient graph} $G{/\mathcal{P}}$ is defined as the unique (up to isomorphism) subgraph induced by a subset $P\subset V$ such that for all $i$, $1\leqslant i\leqslant k$, $|P\cap M_i|=1$. Each vertex $x_i\in P\cap M_i$ is called the \emph{representative} of $M_i$, for $i$, $1\leqslant i\leqslant k$.

As the split decomposition, the modular decomposition of a graph $G=(V,E)$ is commonly understood as a recursive process: 1) find a partition of the vertex set $V$ into modules say $\mathcal{P}=\{M_1,\dots M_k\}$; and 2) recursively decompose the subgraphs $G[M_i]$ for all $i$, $1\leqslant i\leqslant k$. This naturally yields a rooted tree decomposition. In 1967, Gallai~\cite{Gal67} showed that every graph $G$ has a canonical \emph{modular decomposition tree}, denoted $MD(G)$, which is obtained by choosing at the each step of the recursive process the coarsest possible partition.
The leaf set of $MD(G)$ is the vertex set of $G$ and each node is labelled by the quotient graph associated with the corresponding partition. These graph labels are either clique, stable or graphs that are $M$-prime graphs.
In the usual terminology, clique labelled nodes are called \emph{series} (or $1$-nodes) and stable labelled nodes are called \emph{parallel} nodes (or $0$-nodes). The canonicity of the modular decomposition tree results from the constraint that no series node (resp. parallel node) is a child of a series node (resp. parallel node). Two vertices $x$ and $y$ are adjacent in $G$ if and only if their representative vertices are adjacent in the quotient graph $G/\mathcal{P}$.
Figure~\ref{fig:md-split} shows an example of a graph and its modular decomposition tree.

\begin{figure}[h]
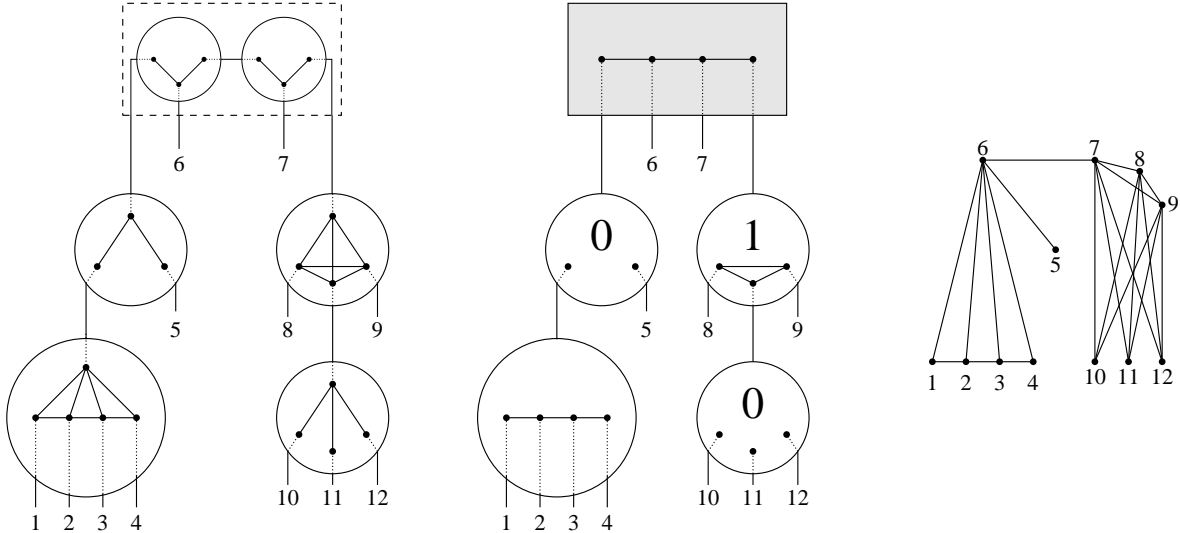

\inserfig{0.95}{ex_MD_split}
\caption{A graph on the right, its modular decomposition tree in the middle, and its split tree on the left. 
The node in a larger circle is prime in each decomposition.
The grey squared node, call it $u$, corresponds to the root of the modular decomposition tree.
Node $u$ is $M$-prime, but not prime for the split decomposition. 
Observe that a node-split on $u$ yields stars whose centres are not towards each other.
The modular graph-labelled tree is obtained by the converse node-join operation, i.e. replacing the dashed squared subtree of the split tree by $u$.}
\label{fig:md-split}
\end{figure}

Let us now describe how the modular decomposition tree $MD(G)$ of a connected graph $G$ naturally transforms into a reduced graph labelled tree $(T_M,\mathcal{F}_M)$ whose accessibility graph is $G$ (see Figure~\ref{fig:md-split}):

\begin{enumerate}
\item Unless the root of $MD(G)$ has degree two, $T_M$ is isomorphic to the tree underlying $MD(G)$. If $MD(G)$ has a binary root, then $T_M$ is isomorphic to the tree resulting from the contraction in $MD(G)$ of one of the tree-edges incident to the root.

\item For a node $u$, distinct from the root of $MD(G)$, 
with associated quotient graph $G{/\mathcal{P}_u}$ labelling $u$ in $MD(G)$,
the label $G_u$ in $(T_M,\mathcal{F}_M)$ is obtained by adding a universal marker-vertex to
$G{/\mathcal{P}_u}$ which is mapped to the tree-edge $uv$ where $v$ is the father of $u$ in $MD(G)$.
\end{enumerate}

Note that if $u$ is a parallel node in $MD(G)$, then it becomes a star node in $(T_M,\mathcal{F}_M)$. It is straightforward to see from the definitions that $G$ is the accessibility graph of $(T_M,\mathcal{F}_M)$. 
Let us also point out that the root node of $MD(G)$ is binary if $G$ has a universal vertex $x$ and $G-x$ is $M$-prime or if $\overline{G}$ is the disjoint union of two connected components. 
Finally, $(T_M,\mathcal{F}_M)$ is reduced since two series nodes or two parallel nodes are not adjacent in the modular decomposition tree. We will call  \emph{modular graph-labelled tree} this graph-labelled tree $(T_M,\mathcal{F}_M)$.

We can now reformulate Gallai's theorem~\cite{Gal67} in term of graph-labelled trees.

\begin{theorem}[Gallai's Theorem reformulated]  \label{th:gallai}
For every connected graph $G$, there exists a unique reduced graph-labelled tree $(T_M,\mathcal{F}_M)$ with $G=\G(T_M,\mathcal{F}_M)$ such that $T_M$ contains a node or a tree-edge $r$, called the \emph{root}, and for every node $v\neq r$, we have
1) the graph $G_v$ contains a universal vertex $x$ such that $G_v-x$ is $M$-prime or $M$-degenerate, 
and 2) the tree-edge associated with $x$ in $T_M$ is on the path between $v$ and $r$. 
\end{theorem}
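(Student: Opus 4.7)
The plan is to derive the theorem from Gallai's classical theorem on modular decomposition by invoking the explicit construction described just before the statement, and then verify that this construction is in fact bijective between modular decomposition trees and the reduced graph-labelled trees characterized by conditions (1)--(2).

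For existence, I would start from $MD(G)$ and apply the two-step transformation: (i) either keep the tree, or, if the root of $MD(G)$ has degree two, contract one of the two tree-edges incident to the root, producing $T$; (ii) for every node $u$ distinct from the root of $MD(G)$, obtain the label $G_u$ by adjoining to the quotient $G/\mathcal{P}_u$ a new universal marker vertex $x_u$ that corresponds (under $\rho_u$) to the tree-edge from $u$ to its father in $MD(G)$. By construction $x_u$ is universal in $G_u$ and $G_u - x_u = G/\mathcal{P}_u$ is either $M$-prime or $M$-degenerate (clique or stable); this gives property (1). Property (2) holds since the marker vertex pointing ``upward'' in $T$ is precisely the newly-added universal vertex, and by design the path to the root $r$ (either the preserved root-node or the contracted root-edge) uses this edge. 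To see that $G$ is the accessibility graph, I would check that for any two leaves $x,y$, the nodes on the $x,y$-path that lie strictly between $x$ and the lowest common ancestor $\mathrm{lca}(x,y)$ in $MD(G)$ contribute universal marker vertices, so accessibility through them is automatic; the only genuine adjacency test occurs at $\mathrm{lca}(x,y)$ and reduces exactly to adjacency of the representatives in the corresponding quotient graph, which is the defining property of modules. Finally, the tree is reduced because $MD(G)$ never has two adjacent series nodes or two adjacent parallel nodes; the former forbids clique-joins, the latter, after the transformation which turns parallel nodes into stars whose centres are directed towards the root, forbids star-joins; also no internal node has degree two except possibly the root of $MD(G)$, which is precisely the case handled by the edge-contraction step.

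For uniqueness, suppose $(T,\mathcal{F})$ is any reduced graph-labelled tree with $G = G(T,\mathcal{F})$ satisfying (1) and (2). Conditions (1)--(2) let us canonically root $T$ (at the node or edge $r$), and for every non-root node $v$ identify the distinguished universal marker vertex $x_v$. Removing $x_v$ from each $G_v$ yields a rooted tree whose node labels are the quotient graphs $G_v - x_v$; I claim this is exactly $MD(G)$ (up to the optional edge/node root). Indeed, by the accessibility definition, the set of leaves in the subtree hanging from any tree-edge $uv$ below $r$ forms a module of $G$, because $x_v$ being universal in $G_v$ means that every vertex of $G$ outside this subtree is uniformly adjacent or uniformly non-adjacent to the whole subtree (depending on whether $x_v$'s neighbour on the path to $r$ is $l$-accessible or not). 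Conversely, the $M$-primeness/degeneracy of $G_v - x_v$ together with reducedness (no clique-join, no star-join, no degree-two node) forces these modules to be exactly the \emph{strong} modules of $G$, whose nesting defines $MD(G)$ uniquely by the classical Gallai theorem. The special treatment of a binary root is exactly accounted for by the possibility of a root-edge rather than a root-node, which matches the single degree-of-freedom in step (i) of the construction.

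The main obstacle I expect is the uniqueness direction, specifically the identification of the subtree-leaf sets as \emph{strong} modules rather than just arbitrary modules: one must use both the labelling constraint (a universal vertex whose complement is prime or degenerate) and the reducedness (no star-join, no clique-join) to rule out finer or coarser decompositions, so that the resulting rooted tree coincides with $MD(G)$ and not with some alternative modular decomposition tree. Once this is in place, the bijection with the construction preceding the theorem gives uniqueness immediately.
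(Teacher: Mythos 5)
Your proposal follows essentially the same route as the paper, which does not give a formal proof of this theorem at all: it presents exactly the construction you describe (adjoin a universal marker vertex to each quotient graph, pointed towards the father; contract an edge at a binary root) and asserts that correctness of the accessibility graph and reducedness are ``straightforward from the definitions,'' with uniqueness inherited from the classical Gallai theorem. Your elaboration of the accessibility check via the lowest common ancestor, of reducedness via the non-adjacency of like-typed nodes in $MD(G)$, and of uniqueness via recovering the strong modules is a faithful (and more detailed) expansion of that same argument, and the one step you flag as remaining --- that the subtree leaf-sets are precisely the strong modules --- is exactly the content the paper delegates to the classical theorem.
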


\begin{lemma}\label{rk:mod-to-split}
Let $G$ be a connected graph. In $MD(G)$, the label of a non-root node $u$ is $M$-prime if and only if its corresponding label in the modular graph-labelled tree $(T_M,\mathcal{F}_M)$ is prime for the split decomposition.
\end{lemma}

\begin{proof}
Follows from the definitions of split and module, and from the construction above.
\end{proof}
\bigskip

Using 
Lemma \ref{rk:mod-to-split} we can describe how the split tree and the modular graph labelled tree can be retrieved from each other:

\begin{itemize}
\item \emph{From the modular graph labelled tree $(T_M,\mathcal{F}_M)$ to the split tree $ST(G)$}: If the root of $T_M$ is not a node, then $ST(G)=(T_M,\mathcal{F}_M)$. If the root of $T_M$ is a node $u$, then 
substitute the split tree of $G_u$ to node $u$ (i.e. node-split $(T_M,\mathcal{F}_M)$ according to the splits of $G_u$ and lastly make clique-joins or star-joins to get a reduced graph labelled tree).

\item \emph{From the split tree $ST(G)$ to the modular graph labelled tree $(T_M,\mathcal{F}_M)$}: If $ST(G)=(T,\mathcal{F})$ contains at least two node, then pick
a node $u$  such that every incident tree-edge but one, say $e$, is adjacent to a leaf, test if $\rho_u(e)$ is a universal vertex of $G_u$.
If so, then delete $u$ from $T$ (i.e. replace it with a leaf) and
repeat until no deletion is possible.
The set of remaining nodes induces a subtree $T'$ of $T$. Then $(T_M,\mathcal{F}_M)$ results from the series of node-joins applied on each internal tree-edge of $T'$
(i.e. substituting a single node labelled by the accessibility graph of $T'$ to $T'$).
\end{itemize}

It is worth to notice that a subtree of the split tree, namely $T'$, plays the role of the root of the modular decomposition tree, 
though, unlike the modular decomposition tree, the split tree is fundamentally unrooted.
Figure \ref{fig:md-split} illustrates these two decompositions on an example.

\section{Split tree characterizations of restricted graph classes}

This section presents bijective and incremental characterizations of distance hereditary graphs, cographs and 3-leaf power graphs, in terms of their split tree. The characterization of distance hereditary graphs yields an intersection model which answers an open question (see~\cite{Spi03}, page 309). Incremental characterizations of each of these three graph classes are also derived. Such a result was already known for cographs~\cite{CPS85} (based on the modular decomposition tree), but not for distance hereditary graphs neither for $3$-leaf powers. These characterizations will be the basis of the vertex-only fully-dynamic recognition algorithms developed in Section 4.

\subsection{Distance hereditary graphs}

\begin{definition}
A graph $G$ is \emph{distance hereditary} (DH for short) if for every connected subgraph $H$ of $G$, the distance between any two vertices $x$ and $y$ in $H$ is the same than the distance between $x$ and $y$ in $G$.
\end{definition}

A graph is \emph{totally decomposable} by the split decomposition if every induced subgraph with at least $4$ vertices contains a split. By \cite{HM90}, it is known that a graph is DH if and only if  it is totally decomposable by the split decomposition, i.e. the nodes of its split tree are labelled by cliques and stars. Hence DH graphs are exactly accessibility graphs of clique-star labelled trees,  \emph{clique-star trees} for short. Among the possible clique-star trees, the split tree is the unique reduced one. In other words, there is a bijection between DH graphs and reduced clique-star trees. Figure~\ref{fig:ex_split_tree_DH} gives an example. We mention that ternary clique-star trees were used in~\cite{EGY05} to draw DH graphs.

\begin{figure}[h]


\inserfig{0.95}{ex_split_tree_DH}
\caption{A clique-star reduced tree and its accessibility DH graph.}
\label{fig:ex_split_tree_DH}
\end{figure}

Let us notice that the classical construction
of DH graphs~\cite{BM86} (there exists a linear ordering for vertex-insertion such that each new vertex $y$ is (a) true twin, (b) false twin, or (c) pendant)
is easy to read on the clique-star tree, see Figure \ref{fig:incremental}. We also mention that DH graphs can be characterized by forbidden induced subgraphs~\cite{BM86} (see Section 5 for details).

\begin{figure}[h]


\inserfig{0.55}{incremental}
\caption{Usual (static) incremental construction of DH graphs: a) adding a true twin $y$ of vertex $x$ amounts to insert a degree $3$ clique node on the tree-edge incident to leaf $x$ and attach leaf $y$ to that node; b) adding a false twin $y$ of vertex $x$ amounts to insert a degree $3$ star node on the tree-edge incident to leaf $x$ such that $x$ and $y$ are mapped to the extremities of the star; and c) adding a pendant vertex $y$ to vertex $x$ amounts to insert a degree $3$ star node whose centre is mapped to $x$ and to which $y$ is attached.}
\label{fig:incremental}
\end{figure}

In what follows, we will call simply \emph{clique node}, resp. \emph {star node}, a clique labelled node, resp. star labelled node. 

\medskip
\noindent
{\textbf{\em An intersection model.}}
Given a family $S$ of sets, one can define the intersection graph $\mathcal{I}(S)$ as the graph whose vertices are the elements of $S$ and there is an edge between two elements if and only if  they intersect.
Many restricted graph families are defined or characterized as the intersection graphs (e.g. chordal graphs, interval graphs\dots~see \cite{MM99}). Graph families supporting an intersection model can be characterized without even specifying the model~\cite{MM99}. This result applies to DH graphs, but no model has been yet given (see \cite{Spi03}, page 309). Based on clique-star trees, an intersection model can be easily derived. Note that it can be equivalently stated by considering only reduced clique-star trees, or even only ternary ones. We call \emph{accessibility set} of a leaf $l$ in a graph-labelled tree the set of pairs $\{l,l'\}$ with $l'$ a $l$-accessible leaf, or, equivalently, the set of paths in the tree joining $l$ to a $l$-accessible leaf $l'$.
Notice that an accessibility set could also be defined as the set  of paths in the clique-star tree from a given leaf to its accessible leaves.

\begin{theorem}[Intersection model] \label{th:intersection-model}
A  graph is distance hereditary if and only if it is the intersection graph of a family of accessibility sets of leaves in a set of clique-star trees.
\end{theorem}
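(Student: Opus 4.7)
The plan is to prove both implications by a direct appeal to the bijection, recalled just above the theorem, between connected distance hereditary graphs and clique-star trees, together with two elementary observations on accessibility sets. First, accessibility between leaves is a symmetric relation, since the defining condition in Definition~\ref{def:accessibility} is symmetric in the two endpoints of the tree-path. Second, every pair in the accessibility set $A(l)$ of a leaf $l$ contains $l$ by construction, so for any two leaves $l_1,l_2$ one has $A(l_1)\cap A(l_2)\subseteq \{\{l_1,l_2\}\}$, and this intersection is non-empty if and only if $l_1$ and $l_2$ are leaves of the same tree and each is accessible from the other.

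For the ``only if'' direction, I would take a DH graph $G$, decompose it into its connected components $G_1,\ldots,G_k$, and associate with each $G_i$ its split tree $ST(G_i)$, which is a clique-star tree with accessibility graph $G_i$. For every vertex $v$ of $G$, viewed as a leaf of the unique $ST(G_i)$ containing it, let $A(v)$ be its accessibility set inside that tree. Combining the two observations above with Definition~\ref{def:accessibility}, $A(u)\cap A(v)\neq\emptyset$ holds exactly when $u,v$ lie in the same $ST(G_i)$ and $v$ is $u$-accessible there, i.e. exactly when $uv\in E(G)$. Hence $G$ is the intersection graph of the family $\{A(v): v\in V(G)\}$.

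For the ``if'' direction, suppose $G$ is the intersection graph of a family of accessibility sets of leaves taken from a collection $T_1,\ldots,T_k$ of clique-star trees. The pair-containment observation immediately forces any non-empty intersection $A(l_1)\cap A(l_2)$ to occur inside a single tree, so $G$ is the disjoint union of the accessibility graphs $G(T_i,\mathcal{F}_i)$. Each $G(T_i,\mathcal{F}_i)$ is distance hereditary by the characterization of totally decomposable graphs as accessibility graphs of clique-star trees, and the class of DH graphs is closed under disjoint union, hence $G$ is DH.

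The only point requiring care is the bookkeeping around the two observations on accessibility sets, both of which are immediate from the definitions, so I do not foresee any real obstacle; the substantive content of the theorem is entirely absorbed into the already-established bijection between DH graphs and clique-star trees.
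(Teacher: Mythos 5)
Your proof is correct and takes essentially the same route as the paper, whose entire proof is the single sentence that the theorem ``follows directly from the representation of DH graphs as accessibility graphs of clique-star trees''; you have merely made explicit the two observations the paper leaves implicit (symmetry of accessibility and $A(l_1)\cap A(l_2)\subseteq\{\{l_1,l_2\}\}$) and the use of one tree per connected component. The only nitpick is in the ``if'' direction: the family need not consist of the accessibility sets of \emph{all} leaves, so $G$ is in general an induced subgraph of the disjoint union of the accessibility graphs rather than equal to it, but since distance hereditary graphs form a hereditary class this changes nothing.
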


\begin{proof}
Follows directly from the representation of DH graphs as accessibility graphs of clique-star trees.
\end{proof}

\medskip
Observe that finding an intersection model always amounts to characterize adjacencies in terms of an independent structure  (in our case the clique-star trees) in which some objects correspond to vertices and any arbitrary set of those objects induces a graph belonging to the required graph class. 
In that sense, our intersection model can be compared with other well-known intersection models. For example, consider the subtrees of a tree model of chordal graphs~\cite{Gav74}. This model could be derived from the characterization of chordal graphs as the set of graphs having a tree-decomposition~\cite{RS86} in which every node induces a clique. Likewise, our DH intersection model derives from the fact that DH graphs are the graphs whose split tree is a clique-star tree. Both models rely on some tree-like structure. In the model of chordal graphs, the subtrees represent the 
interlacing
structure of the sets $\mathcal{C}_x$ of clique bags, where, for each vertex $x$, $\mathcal{C}_x$ is the set of bags containing $x$. 
In the DH model the accessibility sets represent the 
interlacing structure of the sets of alternating paths with a common leaf in the tree,
depending on the way cliques and stars are spread over the nodes of the tree.

\bigskip
\noindent
{\textbf{\em Incremental characterization.}}
Let $G$ be a connected DH graph and let $ST(G)=(T,\mathcal{F})$ be its split tree. 
Given a subset $S$ of $V(G)$ and $x\not\in V(G)$, we want to know whether the graph $G+(x,S)$ is DH or~not. We first discard the obvious case where $|S|=1$ which consists in adding a pendant vertex $x$ attached to $y\in V(G)$. In that case, it is well known that $G+(x,S)$ is a DH graph if and only if  $G$ is.

\begin{definition} \label{def:accessible}
For $S\subseteq V(G)$, let  $T(S)$ be the smallest subtree of $T$  
with set of leaves $S$.
Let $u$ be a node of $T(S)$.
\vspace{-0.2cm}
\begin{enumerate}
\item $u$ is \emph{fully accessible} (w.r.t. $S$) if every maximal tree of $T-u$ contains a leaf $l\in S$;
\item $u$ is \emph{singly accessible} (w.r.t. $S$)  if it is a star node and exactly two maximal trees of $T-u$ contain a leaf 
$l\in S$ 
among which the maximal tree containing the neighbor $v$ of $u$ such that $\rho_u(uv)$ is the centre~of~$G_u$;
\item $u$ is \emph{partially accessible} (w.r.t. $S$) otherwise.
\end{enumerate}
\end{definition}

We say that a star node $v$ is \emph{oriented towards} a tree-edge (or a node) $t$ of $T$ if the tree-edge $e$ such that $\rho_v(e)$ is the centre of $G_v$ is on the path in $T$ between $t$ and $v$.
Figure \ref{fig:ex_caract} illustrates Definition \ref{def:accessible} above and Theorem \ref{th:charac}  below.

\begin{figure}[htbh]
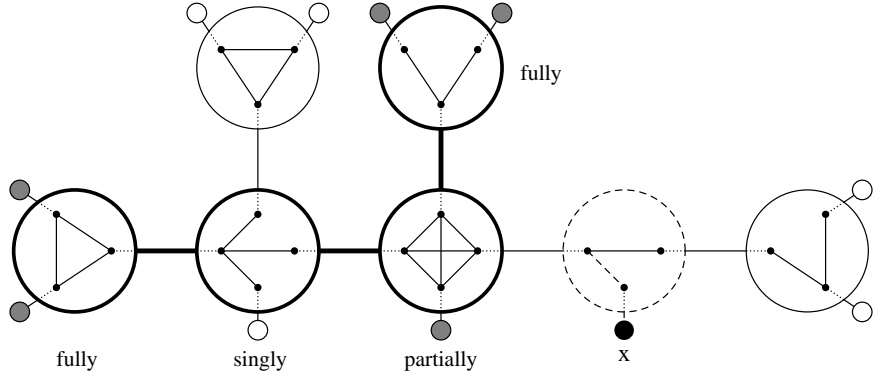

	\inserfig{0.7}{ex_caract_DH}
\caption{
Consider that the dashed node is omitted, that is precisely: the dashed node and its three incident tree-edges are deleted, and replaced with a tree-edge between its two adjacent nodes. Then the figure represents the split tree $ST(G)$ of a DH graph.   Elements of $S\subseteq V(G)$ are represented as grey leaves. The subtree $T(S)$ is represented with bold nodes and tree-edges. Check that the properties of Theorem \ref{th:charac} are satisfied:
the fully accessible star node is oriented towards the unique partially accessible node,
whereas the singly accessible star node is not.
So, if a vertex $x$ is added to $G$ with neighborhood $S$, then the graph $G+x$ is DH. Its split tree $ST(G+x)$ is obtained by inserting the dashed node.%
}
	\label{fig:ex_caract}
\end{figure}

\begin{theorem}[Vertex incremental characterization] \label{th:charac}
Let $G$ be a connected distance hereditary graph and $ST(G)=(T,\mathcal{F})$ be its split tree. Then $G+(x,S)$, with $|S|>1$ is distance hereditary  if and only~if:
\begin{enumerate}
\item at most one node of $T(S)$ is partially accessible; 
\item every clique node of $T(S)$ is either fully or partially accessible;
\item if there exists a partially accessible node $u$ in $T(S)$, then every star node $v\not=u$ of $T(S)$ is oriented towards $u$ if and only if it is fully accessible;
otherwise, there exists a tree-edge $e$ of $T(S)$ towards which every star node of $T(S)$ is oriented if and only if it is fully accessible.
\end{enumerate}
\end{theorem}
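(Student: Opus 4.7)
My plan is to prove both implications by explicitly relating the split tree $(T,\mathcal F)=ST(G)$ and a clique/star-labelled graph-labelled tree representing $G+(x,S)$; the bridge is the characterization, recalled just before the statement, that a graph is distance hereditary if and only if its reduced split tree has only clique and star labels.

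For sufficiency I would construct a clique/star graph-labelled tree $(T^+,\mathcal F^+)$ whose accessibility graph is $G+(x,S)$, by inserting a new leaf $l_x$ into $T$ at a position dictated by conditions 1 and 3. If there is a (unique, by condition 1) partially accessible node $u$ of $T(S)$, let $A$ denote the set of markers of $G_u$ pointing to subtrees of $T-u$ meeting $S$ and $B$ the complementary set; I split $u$ along the bipartition $(A,B)$ into two adjacent nodes $u_A,u_B$ and attach $l_x$ to $u_A$ as a new marker. A small sub-case analysis on whether $u$ is a clique or a star, plus a separate handling of the degenerate case $|B|=1$ (in which $l_x$ is just attached to $u$ without any split), shows that the resulting labels remain cliques or stars; condition 3 is exactly what forces the centres of the neighbouring star nodes of $T(S)$ to point consistently towards $u$, which is what makes the attachment of $l_x$ coherent. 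Otherwise, condition 3 supplies a tree-edge $e$ of $T(S)$ that I subdivide by inserting a new node $w$ to which $l_x$ is attached, with $w$ labelled as a clique or a star oriented to preserve the accessibility through $e$. In both branches I verify, using Definition~\ref{def:accessibility}, that in $(T^+,\mathcal F^+)$ the leaf $l_x$ is $s$-accessible exactly when $s\in S$ and that the accessibilities among the original leaves are unchanged; hence $(T^+,\mathcal F^+)$ is a clique/star graph-labelled tree of $G+(x,S)$, so $G+(x,S)$ is distance hereditary.

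For necessity I would assume $G+(x,S)$ is distance hereditary, let $(T',\mathcal F')=ST(G+(x,S))$, and write $l_x$ for the leaf of $x$ and $v_x$ for its unique neighbour in $T'$. Removing $l_x$ from $T'$ and applying at most one reduction step (a degree-$2$ contraction of $v_x$, possibly followed by one clique- or star-join) recovers $ST(G)=(T,\mathcal F)$. Tracing this reduction backwards pins down a ``trace'' of $v_x$ inside $(T,\mathcal F)$: either a node of $T$, which will be the unique partially accessible node of $T(S)$, or a tree-edge $e$ of $T$, which will play the role of the edge of the second alternative of condition 3. Corollary~\ref{cor:subgraph} (every label is an induced subgraph of the accessibility graph) together with Lemma~\ref{lem:accessible} (every maximal subtree of $T-v_x$ contains a $v_x$-accessible leaf) identifies the $S$-intersecting subtrees of $T-v_x$ with the neighbours of $l_x$'s marker in $G_{v_x}$; from there, the classification of nodes of $T(S)$ as fully, singly, or partially accessible, and the orientations of their star labels, is read off directly from the label of $v_x$, yielding conditions 1 and 3. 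Condition 2 is automatic since a clique node cannot be singly accessible by definition.

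The main obstacle is the case analysis on the label of $v_x$ and on whether the marker vertex $\rho_{v_x}(v_xl_x)$ is a clique vertex or a star extremity. The case of a star centre never arises: removing $l_x$ would then disconnect the extremity subtrees of $v_x$ in the accessibility graph, contradicting the assumption that $G$ is connected. The delicate step is handling the reduction when $v_x$ has degree $3$ in $T'$: deleting $l_x$ forces a degree-$2$ contraction, and the contracted node may further need to be merged with one of its neighbours via a clique- or star-join. Tracking the insertion site faithfully across such a double reduction, and verifying that the orientations of the star nodes in $T(S)$ still match the second alternative of condition 3 when the site has collapsed into a tree-edge of $T$, is where the argument needs most care; Corollary~\ref{cor:split_list} (every split of $G$ is induced by a tree-edge of $ST(G)$ modulo at most one split of a degenerate node) is the key tool that keeps this step tractable.
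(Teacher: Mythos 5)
Your overall strategy coincides with the paper's: for sufficiency, build a clique-star graph-labelled tree of $G+(x,S)$ by local surgery near the partially accessible node or the distinguished tree-edge of condition~3; for necessity, start from $ST(G+(x,S))$, delete the leaf of $x$, undo at most one join, and read the classification of the nodes of $T(S)$ off the label of the node carrying $x$ via Lemma~\ref{lem:accessible} and Corollary~\ref{cor:neighbor-node}. The necessity half of your plan is essentially the paper's argument and is sound.

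The sufficiency half, however, has a genuine gap in the partially-accessible case: attaching $l_x$ \emph{directly} to $u_A$ as a new marker vertex cannot in general produce a clique or star label with the correct accessibility. If $u$ is a clique, the new marker would have to be adjacent to all of $A$ but \emph{not} to the marker of the edge $u_Au_B$ (otherwise $x$ becomes adjacent to leaves reached through $B$, which are outside $S$), and a clique plus a vertex adjacent to all but one of its vertices is neither a clique nor a star. If $u$ is a star whose centre lies in $B$, the new marker attached as an extremity of $u_A$ gives $x$ access only through the edge $u_Au_B$, i.e.\ to non-$S$ leaves, and to none of the $A$-subtrees. Your ``degenerate case $|B|=1$'' fails for the same reason: take $G=K_{1,3}$ with centre $c$ and leaves $l_1,l_2,l_3$ (a single star node) and $S=\{c,l_1,l_2\}$; the node is partially accessible, $G+(x,S)$ is distance hereditary, yet no way of adding $x$'s marker to that star yields a star or clique with the right adjacencies. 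The paper avoids this by never enlarging $G_{u_A}$: it subdivides the edge $u_Au_B$ by a fresh \emph{ternary} node carrying $x$, labelled a clique or a star (with prescribed centre) according to which sides of that edge contain leaves of $S$, thereby reducing the partially-accessible case to the tree-edge case. Even then, the split of $u$ must be chosen with care when $u$ is a star whose centre points into an $S$-meeting subtree together with at least two $S$-meeting extremities (in the $K_{1,3}$ example above, the correct split separates the extremities towards $l_1,l_2$ from the centre and the extremity towards $l_3$, and the inserted node is a clique); so the ``small sub-case analysis'' you defer is precisely where the content of the construction lies, and as currently set up it would not close.
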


\begin{proof}
\begin{itemize}
\item[$\Rightarrow$] Since $G+(x,S)$ is a DH graph, it is the accessibility graph of  a ternary clique-star tree $(\tilde{T},\tilde{\mathcal{F}})$. Let $u$ be the node of $\tilde{T}$ to which $x$ is attached and let $v$, $w$ be its neighbors. Now consider the clique-star tree  $(T',\mathcal{F}')$ obtained by applying every possible clique-join or a star-join to tree-edges different from $uv$ and $uw$. Notice that $ST(G)$ is obtained by 1) removing the leaf $x$ and the marker vertex $\rho_u(xu)$, 2) performing a node-join to get rid of the degree two node $u$ thereby creating a tree-edge $vw$, and 3) if needed apply a node-join on the tree-edge $vw$.

Assume the node-join on $vw$ is not required to obtain $ST(G)$. Then every node of $T(S)$ is a node of $T'$. By construction, every leaf of $S$ is $x$-accessible in $(T',\mathcal{F}')$. Then the three conditions are a consequence of Corollary~\ref{cor:neighbor-node}. Precisely, observe that if $u$ is a clique node, then $T(S)$ does not contain any partially accessible node, every star node is oriented towards the tree-edge $vw$ if and only if it is fully accessible. If $u$ is a star node, then $\rho_u(xu)$ is a degree-1 marker vertex. In that case, if $\rho_u(uv)$ is the centre $G_u$ and $v$ is a star node, then $v$ is the only partially accessible node in $T(S)$ (the case $\rho_u(uw)$ is the centre $G_u$ and $w$ is a star node is symmetric). 

Assume $ST(G)$ is obtained after a node-join on $vw$ which results on a new node $u'$. Then every node of $T(S)$ except $u'$  corresponds to a node of $T'$. Again by Corollary~\ref{cor:neighbor-node} the nodes of $T(S)$ different than $u'$ are all singly or fully accessible, and a star node is oriented towards $u'$ if and only if it is fully accessible. 
If $x$ is adjacent to a star node $u$ in $T'$, or if $x$ is adjacent to a clique node $u$ in $T'$ and $u'$ is a star, then it is straightforward to check that $u'$ is
partially accessible and the conditions are satisfied. 
If $x$ is adjacent to a clique node $u$ in $T'$ and $u'$ is a clique, then $u'$ is
fully accessible and a star node is oriented towards any tree-edge incident to $u'$ if and only if it is fully accessible, so the conditions are satisfied. 

\item[$\Leftarrow$] Assume there is no partially accessible node. So there exists a tree-edge $e=uv$ of $T(S)$ towards which the star nodes of $T(S)$ are oriented if and only if they are fully accessible. 
Let $(T',\mathcal{F}')$ be the clique-star tree obtained by: 1) subdividing $e=uv$ into $e_u=uw$ and $e_v=wv$; 2) attaching the leaf $x$ to $w$ (which is thereby a ternary node); 3) making $w$ a clique node if the two maximal trees of $T-e$ contain a leaf of $S$, otherwise $w$ is a star node whose centre is $\rho_w(wu)$.

Every node of $T(S)$ is either fully accessible or singly accessible, a node of degree $2$ in $T(S)$ is singly accessible. Let $w'$ be a node on the path in $T$ between any $y\in S$ and $e$ and let $e_y$, $e_x$ be the two tree-edges of that path incident to $w'$. By Definition~\ref{def:accessible}, we have that $\rho_{w'}(e_x)\rho_{w'}(e_y)\in E(G_{w'})$. It follows that every $y\in S$ is a neighbor of $x$ in $\G(T',\mathcal{F}')$. Let us now prove that every $z\notin S$ is not a neighbor of $x$ in $\G(T',\mathcal{F}')$, thereby proving that $\G(T',\mathcal{F}')=G+(x,S)$. Let $w'$ be the node of $T(S)$ which is the closest to the leaf $z$, and let $e_{w'}$ be the tree-edge incident to $w'$ in the path between $w'$ and $z$. 
By the choice of $w'$, $w'$ cannot be fully accessible (otherwise it would not be the closest to $z$). So $w'$ is singly accessible and thereby is a star node. Its centre is not oriented towards $e$ by condition 3, and not oriented towards $e_{w'}$ by Definition \ref{def:accessible}. It follows that the neighbor $w''$ of $w'$ on the path between $w'$ and $e$ is not $z$-accessible. Thus $z$ is not a neighbor of $x$ in $\G(T',\mathcal{F}')=G+(x,S)$.

Assume there is a partially accessible node $u$. Then it suffices to node-split the node $u$ into two new nodes $v$ and $w$, such that $v$ is adjacent to the neighbors of $u$ not belonging to $T(S)$ and $w$ to those belonging to $T(S)$. Now star-nodes of $T(S)$ are oriented towards the new tree-edge $e=vw$, and the same construction and arguments than above apply.
\end{itemize}%
Note that the complete and detailed case by case description of the constructions involved in this proof is made in the algorithmic Section \ref{sec:vertex-algo}.%
\end{proof}

\subsection{A split decomposition characterization of cographs}

A cograph is a $P_4$-free graph~\cite{Sum73} (see Figure~\ref{fig:p4}). This graph family is also known as the graphs totally decomposable by the modular decomposition: i.e. their modular decomposition tree does not contain any $M$-prime node. Moreover cographs are known to be DH graphs.

\begin{figure}[htbh]
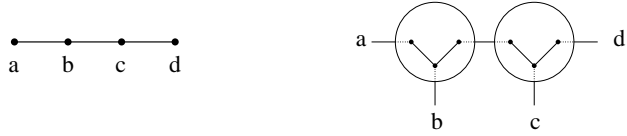

\inserfig{0.5}{p4}
\caption{The $P_4$ is the smallest graph that is not a cograph. Although its split tree only contains star nodes, there is no tree-root toward which the stars are oriented.}
\label{fig:p4}
\end{figure}

\begin{theorem}[Cograph split tree characterization]\label{th:charact-cograph}
A connected graph $G=(V,E)$ is a cograph if and only if its split tree $ST(G)$ is its modular graph-labelled tree and is a clique-star tree.
\end{theorem}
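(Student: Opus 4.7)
The plan is to leverage the key property stated just before Theorem~\ref{th:charact-cograph} (that the label of a non-root node $u$ of $MD(G)$ is $M$-prime if and only if the corresponding label in the modular graph-labelled tree $(T_M,\mathcal{F}_M)$ is prime for the split decomposition), together with Cunningham's uniqueness theorem (Theorem~\ref{th:cun}) and the translation between MD node types (series, parallel, $M$-prime) and their counterparts in $(T_M,\mathcal{F}_M)$.

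For the forward direction, I would first observe that since $G$ is a cograph, $MD(G)$ contains no $M$-prime node: every internal node is series or parallel. Applying the construction of $(T_M,\mathcal{F}_M)$, each series label remains a clique (a clique plus a universal marker is still a clique) and each parallel label becomes a star (a stable set plus a universal marker is a star). Since $G$ is connected, the root of $MD(G)$ is series, so every label in $(T_M,\mathcal{F}_M)$ is a clique or a star. Combined with the earlier observation of the paper that $(T_M,\mathcal{F}_M)$ is already reduced---two series or two parallel nodes are not adjacent in $MD(G)$, which rules out clique-join and (with a small verification in the binary-root contraction case, yielding only centre-to-centre star configurations) star-join---the uniqueness part of Cunningham's theorem gives $(T_M,\mathcal{F}_M) = ST(G)$, establishing both required properties.

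For the backward direction, I would assume $ST(G) = (T_M,\mathcal{F}_M)$ is a clique-star tree and show that every node of $MD(G)$ is series or parallel, which is equivalent to $G$ being a cograph. For non-root nodes, the key property directly gives the conclusion: an $M$-prime non-root node would yield a prime-for-split label in $(T_M,\mathcal{F}_M)$, contradicting the clique-star assumption. For the root $r$, connectedness of $G$ forces $r$ to be series or $M$-prime. If $r$ were $M$-prime, its label in $(T_M,\mathcal{F}_M)$ would be the quotient graph $G/\mathcal{P}_r$, an $M$-prime graph; but cliques and stars on at least three vertices have non-trivial modules, so no $M$-prime graph is a clique or a star, giving a contradiction. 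Hence $r$ is series, and $MD(G)$ has no $M$-prime node, so $G$ is a cograph.

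The main delicacy lies with the root of $MD(G)$: its label in $(T_M,\mathcal{F}_M)$ is defined slightly differently (no added universal marker) and can involve the binary-root contraction, so the key property quoted for non-root nodes does not directly apply. The root case must be handled by a separate module-counting argument distinguishing $M$-prime graphs from cliques and stars, and the reducedness of $(T_M,\mathcal{F}_M)$ must be verified by tracking what the contraction does to centres and extremities of the adjacent star labels.
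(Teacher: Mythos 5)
Your proof is correct and takes essentially the same route as the paper: the forward direction is the paper's argument (the modular graph-labelled tree of a cograph is a reduced clique-star tree, hence equals $ST(G)$ by the uniqueness in Cunningham's theorem), and your backward direction is just the direct form of the paper's contrapositive, both resting on the fact that an $M$-prime label cannot become a clique or a star in the modular graph-labelled tree. Your extra care about the root and the binary-root contraction fills in details the paper leaves implicit but does not change the approach.
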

\begin{proof}
Assume that $G$ is a cograph. By Theorem~\ref{th:gallai}), $MD(G)$ does not contains any $M$-prime node, the modular graph-labelled tree of $G$ only contains clique and star nodes.
Moreover by definition $(T_M,\mathcal{F_M})$ is reduced, it is also the split tree $ST(G)$.

Assume that $G$ is not a cograph. Then the modular graph-labelled tree contains a node $u$ such that $G_u$ is neither a star nor a clique. If $G_u$ is prime with respect to the split decomposition, we are done (since then $ST(G)$ is not a clique-star tree). So assume the graph $G_u$ contains a split, then the node set of $ST(G)$ and of the modular graph-labelled tree are not the same. That ends the proof.
\end{proof}

\medskip
Thanks to the construction of the modular graph-labelled tree (see Section~\ref{sec:mod-dec}), we can rephrase Theorem~\ref{th:charact-cograph} as follows:

\begin{corollary}\label{cor:charact-cograph}
A connected graph $G=(V,E)$ is a cograph if and only if $ST(G)$ is a clique-star tree and either contains a clique node or a tree-edge towards which all the star nodes are oriented. Such a clique-node or tree-edge will be called hereafter the \emph{tree-root} of $ST(G)$.
\end{corollary}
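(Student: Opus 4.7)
The plan is to derive the corollary directly from Theorem~\ref{th:charact-cograph} together with Gallai's Theorem~\ref{th:gallai}. Theorem~\ref{th:charact-cograph} reduces the question to identifying, among reduced clique-star trees, those that coincide with the modular graph-labelled tree of $G$; Theorem~\ref{th:gallai} characterizes the latter by the existence of a root $r$ (node or edge) such that every other node $v$ contains a universal marker vertex whose associated tree-edge lies on the $v$-to-$r$ path. The key elementary observation is that, in a star, the centre is the unique universal vertex, whereas in a clique every vertex is universal. Hence, for a star $v\neq r$, the Gallai condition on $v$ is exactly ``$v$ is oriented toward $r$'', while for a clique $v\neq r$ it is automatically satisfiable by selecting the marker vertex associated to the first tree-edge of the $v$-to-$r$ path.

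For the ($\Leftarrow$) direction, I would take the prescribed tree-root $r$ as the Gallai root and verify the two conditions of Theorem~\ref{th:gallai}. For each star node $v\neq r$: the centre of $v$ is its unique universal vertex, $G_v$ minus the centre is a stable (hence $M$-degenerate), and by the orientation hypothesis the centre's tree-edge lies on the $v$-to-$r$ path. For each clique node $v\neq r$: every vertex is universal, the marker vertex on the first tree-edge of the $v$-to-$r$ path serves, and $G_v$ minus that vertex is a smaller clique (also $M$-degenerate). The uniqueness part of Theorem~\ref{th:gallai} then forces $ST(G)$ to coincide with the modular graph-labelled tree of $G$, so Theorem~\ref{th:charact-cograph} yields that $G$ is a cograph.

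For the ($\Rightarrow$) direction, assuming $G$ is a cograph, Theorem~\ref{th:charact-cograph} says $ST(G)$ is a clique-star tree equal to the modular graph-labelled tree. I would exhibit a tree-root by starting from an arbitrary star node of $ST(G)$ and tracing the tree-edge mapped to its centre, repeating the process; by reducedness of $ST(G)$ (which forbids star-joins, i.e.\ centre-to-extremity star-star adjacencies) together with the cograph hypothesis, this trace terminates either at a clique node, at an internal edge joining two star nodes with their centres facing each other, or, in the degenerate single-star case (arising when $G$ is itself a star, e.g.\ $G=K_{1,n}$), at the leaf edge mapped to the unique star's centre. In every case the terminus is a clique node or a tree edge of $ST(G)$, and by construction all star nodes of $ST(G)$ are oriented toward it. The main technical subtlety is precisely this case analysis, especially treating the degenerate single-star configuration uniformly so that a tree-root in the corollary's sense always exists for a connected cograph.
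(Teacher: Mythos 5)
Your proposal is correct and follows essentially the same route as the paper, which simply asserts that the corollary is a rephrasing of Theorem~\ref{th:charact-cograph} via the construction of the modular graph-labelled tree; your appeal to the uniqueness clause of Theorem~\ref{th:gallai}, together with the observation that the centre is the unique universal vertex of a star while every vertex of a clique is universal, is precisely the detail the paper leaves implicit. The only part you flag as delicate --- locating the tree-root in the forward direction, including the single-star case $K_{1,n}$ --- is handled in the paper by the same construction (the root of $MD(G)$ of a connected cograph is a series node, possibly binary and contracted), so there is no substantive divergence.
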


\begin{figure}[htbh]


\inserfig{0.85}{cotree}
\caption{A cograph on the right,  its split tree  on the  left, and its cotree in the middle. 
The star nodes, corresponding to $0$ labels in the cotree, are oriented towards the tree-root (grey tree-edge).}%
\label{fig:cotree}
\end{figure}

For the sake of simplicity, let us denote the tree-root of the split tree $ST(G)$ of a cograph by the set $R$ of nodes of $T$ it contains: that is we set $R=\{u\}$ if the $R$ is a clique-node $u$ and $R=\{u,v\}$ if the $R$ is a tree-edge $uv$ with $u$ and $v$ being star nodes.
\smallskip


Observe that, to get a cograph vertex incremental characterization, we could simply test, given a cograph $G$, first if the graph $G+x$ is a DH graph using Theorem~\ref{th:charac}, and then if the node to which $x$ is attached in $ST(G+x)$ does not create a contradiction with Corollary~\ref{cor:charact-cograph}. This second condition amounts to test a local condition on $ST(G+x)$, and would be enough
for algorithmic purpose to refine the main DH algorithm of Section  \ref{sec:vertex-algo} in terms of cographs as done in Section \ref{sub:vertex-cograph}.
However, the following theorem establishes a more precise property directly on $ST(G)$.

\begin{theorem}[Cograph vertex incremental characterization] \label{th:inc-cograph}
Let $G$ be a connected cograph and $ST(G)=(T,\mathcal{F})$ be its split tree with tree-root $R$. Then $G+(x,S)$ is a cograph if and only if:
\begin{enumerate}
\item it is a distance hereditary graph (see conditions of Theorem~\ref{th:charac}) and
\item the subtree $T(S)$ 
of $T$ either intersects $R$ or contains a node adjacent to a node of $R$.
\end{enumerate}
\end{theorem}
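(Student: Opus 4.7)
Since every cograph is distance hereditary, condition~1 is necessary by Theorem~\ref{th:charac}. Assuming condition~1, the graph $G+(x,S)$ is distance hereditary, so by Corollary~\ref{cor:charact-cograph} it is a cograph if and only if its split tree $(T',\mathcal{F}')=ST(G+(x,S))$ admits a tree-root. Hence the plan reduces to showing that, under condition~1, condition~2 holds if and only if $(T',\mathcal{F}')$ admits a tree-root. The key technical tool is the relation between $(T',\mathcal{F}')$ and $(T,\mathcal{F})=ST(G)$: letting $w$ denote the node of $T'$ adjacent to leaf $x$, one may pass from $(T,\mathcal{F})$ to $(T',\mathcal{F}')$ via the construction in the proof of Theorem~\ref{th:charac}, which subdivides some tree-edge $e=uv$ of $T$ (with $u\in T(S)$) by the new ternary node $w$ and attaches $x$ as a leaf to $w$, possibly followed by joins to re-reduce; conversely, $(T,\mathcal{F})$ is recovered from $(T',\mathcal{F}')$ by removing leaf $x$ and possibly joining $w$ with a neighbour if $w$ becomes of degree~$2$.

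For the $(\Leftarrow)$ direction I would assume conditions~1 and~2 and start from the explicit construction of $(T',\mathcal{F}')$ from $(T,\mathcal{F})$ given in the proof of Theorem~\ref{th:charac}, possibly preceded by a split of a partially accessible node. Condition~2 allows the edge $e=uv$ used in that construction to be chosen so that $u\in R$ or $u$ is adjacent to a node of~$R$. The new ternary node $w$ inserted on $e$ is then either a clique (when $S$ has leaves on both sides of $e$), which integrates transparently into any root orientation, or a star whose centre is mapped to $\rho_w(wu)$ and thus oriented towards $u$, hence towards~$R$. Since every star of $T$ is already oriented towards~$R$ by assumption, one checks that after performing the joins needed to obtain a reduced tree, every star of $(T',\mathcal{F}')$ is oriented towards a common node or edge~$R'$, which is either $R$ or the image of $R$ modified by the join at $w$. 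Therefore $(T',\mathcal{F}')$ admits a tree-root, and $G+(x,S)$ is a cograph.

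For the $(\Rightarrow)$ direction I would assume that $(T',\mathcal{F}')$ admits a tree-root~$R'$ and distinguish cases according to whether $w\in R'$, $w$ is adjacent to $R'$, or $w$ is further from $R'$. In each case, starting from the orientation of stars in $T'$ towards $R'$, I would track how those orientations survive the removal of $x$ and the possible join at $w$. The resulting tree-root $R$ of $(T,\mathcal{F})$ is either $R'$ itself (when $w\notin R'$) or the image of $R'$ under the contraction of $w$ (when $w\in R'$); in both situations, the edge $e=uv$ where $w$ sat ends up in or incident to~$R$. Since the construction forces $u\in T(S)$, condition~2 follows immediately.

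I expect the main obstacle to be the bookkeeping of star orientations through the reductions, in particular in the forward direction when $w$ lies in $R'$ and the join at $w$ may coalesce $R'$ with adjacent material, and in the backward direction when a partially accessible node must first be split before the construction from Theorem~\ref{th:charac} applies. The third clause of Theorem~\ref{th:charac} — which constrains which star nodes of $T(S)$ are oriented towards the splitting edge — will be essential here: it guarantees that the local orientations near $T(S)$ remain compatible with the global orientation towards $R$, so that no contradictory star orientations appear after inserting or contracting~$w$.
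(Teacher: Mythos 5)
Your proposal is correct and follows essentially the same route as the paper: reduce via Corollary~\ref{cor:charact-cograph} to the existence of a tree-root in $ST(G+(x,S))$, then case-analyse how the insertion transforms $ST(G)$ and track the tree-root and the star orientations through the subdivision, split and join operations. The case analysis you defer as ``bookkeeping'' is precisely the body of the paper's proof, but your description of how it goes --- in particular that condition~2 forces the insertion point determined by condition~3 of Theorem~\ref{th:charac} to sit at or next to $R$, so the new node's orientation stays compatible with a (possibly updated) tree-root --- matches the paper's argument.
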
 

\begin{proof}
As  every star-node of the split tree of a cograph is oriented towards the root, $ST(G)$ and $T(S)$ have a natural orientation. This implies that condition 2 above can be rephrased as follows: \emph{if $T(S)$ does not intersect $R$, then $T(S)$ has a unique root node which is adjacent to a node of $R$}.
\begin{itemize}
\item[$\Rightarrow$] If $G+(x,S)$ is a cograph, then it is a DH graph. By the structure of their split tree (see Theorem~\ref{th:charact-cograph}), observe that every node of the tree-root is $l$-accessible for every leaf $l$. Let us consider the three different ways $ST(G)$ can be transformed into $ST(G+(x,S))$:
\begin{enumerate}
\item \textit{Vertex $x$ has been attached to a node $u$ of $ST(G)$.} Then the tree-root $R$ of $ST(G)$ is still the tree-root of $ST(G+(x,s))$. By Corollary~\ref{cor:charact-cograph}, $R$ either contains a clique node or two star nodes $v$ and $w$ oriented towards the tree-edge $vw$ ($u$ may belong to $R$). Observe that in both cases, the nodes of $R$ are $x$-accessible. By Corollary~\ref{cor:neighbor-node}, the set $S$ intersects the leaf set of at least two maximal trees of $T-R$. Thereby $R$ intersects the node set of $T(S)$.

\item \textit{A node $u$ of $ST(G)$ is node-split into two adjacent nodes $v$ and $v'$ and the tree-edge $vv'$ is subdivided to insert a new node $w$ adjacent to $x$.} If $u$ does not belong to the tree-root $R$ of $ST(G)$, then as in the first case the tree-root remains unchanged and $R$ intersects the node set of $T(S)$. Assume $R=\{u\}$. By Corollary~\ref{cor:charact-cograph}, $u$ is a clique-node and the new node $w$ is a star node, say oriented towards $v$. Observe that every maximal subtrees of $T-u$ is now attached to either $v$ or $v'$ which both have degree at least $3$, and that by Corollary~\ref{cor:neighbor-node} each of these subtree attached to $v$ contains a leaf in $S$. Thereby $R$ belongs to the node set of $T(S)$. So assume that $R=\{u,v\}$, which implies that $u$ is a star-node (Corollary~\ref{cor:charact-cograph}). Again by Corollary~\ref{cor:neighbor-node}, $T-u$ contains at least two maximal trees of $T-u$ with a leaf in $S$ and at least one of these maximal trees is the one containing node $v$. It follows that $R$ is a subset of the node set of $T(S)$. 

\item \textit{A tree-edge of $ST(G)$ is subdivided to insert a new node $w'$ adjacent to $x$}. As clique-nodes and star-nodes alternate everywhere in $ST(G)$ but possibly at the tree-root $R=\{u,v\}$, the subdivided tree-edge is: either a) the tree-edge $uv$ joining the vertices of the tree-root; b) or is incident to a leaf $l$; or c) incident to unique node $u$ of the tree-root and $u$ is a clique-node. Let us consider these three different cases:
\begin{enumerate}
\item Assume the subdivided tree-edge is $uv$ with $R=\{u,v\}$. If the tree-root does not contains a leaf, then by Corollary~\ref{cor:charact-cograph} node $w'$ is a clique node. It follows that the two maximal trees of $T-uv$ contain leaves of $S$, implying that $R$ is a subset of the node set of $T(S)$. The tree-root of $ST(G+(x,S))$ is now $\{w'\}$. If the tree-root  $R=\{u,v\}$ contains a leaf, say $v$, then $T(S)$ either contains $v$ or $S$ contains two leaves in different maximal trees of $T-u$, which implies that the node set of $T(S)$ intersects $R$. 
\item Assume the subdivided tree-edge is $wl$ with $l$ a leaf. Then the tree-root of $ST(G+(x,S))$ is still $R$  and the same arguments than in case 1 above apply.
\item Assume the subdivided tree-edge is $uv$ with $R=\{u\}$ ($u$ is a clique-node). The node $w'$ is a star-node oriented towards the star-node $v$. In that case at least two maximal trees of $T-v$ contains leaves in $S$ and thereby $v$ belongs to $T(S)$. So we are in the situation that $T(S)$ does not intersect $R$ but has a neighbor, namely $v$, in the tree-root. 
\end{enumerate}
\end{enumerate}

\item[$\Leftarrow$] We need to show that the second condition implies that all the star nodes are oriented towards the root of $ST(G+(x,S))$ (condition 2 of Theorem~\ref{th:charact-cograph}). This is trivially the case if no new node has been created while transforming $ST(G)$ into $ST(G+(x,S))$. This is also true if a new clique-node has been created. So assume that a new star-node $w$ has been inserted. Either the node $w$ arises from the subdivision of a clique-node $u$ or from the subdivision of a tree-edge. Consider the former case. If $\{u\}$ is not the tree-root $R$ of $ST(G)$, then the tree root of $ST(G+(x,S))$ is still $R$. As nodes of the tree-root are $x$-accessible, the result follows. Otherwise if $R=\{u\}$, then the new tree root of $ST(G+(x,S))$ is one of the two new clique-nodes resulting from the subdivision of $u$. The result trivially holds. Consider now the latter case ($w$ is inserted on a tree-edge). This tree-edge has to contain a leaf, say $l$ adjacent to $u$. If $\{u,l\}$ is not the tree-root $R$, then as before, $R$ is still the tree-root of $ST(G+(x,S))$ and thereby $w$ is oriented towards $R$ since $R$ is $x$-accessible. Otherwise if $R=\{u,l\}$, then the tree-root of $ST(G+(x,S))$ is either $u$ (if $u$ is a clique-node) or $\{u,w\}$ (otherwise). It follows that in every cases all the star-nodes of $ST(G+(x,S))$ are oriented towards $R$.
\end{itemize}
\end{proof}

Observe that, unlike in the vertex incremental characterization of DH graph (see Theorem~\ref{th:charac}), Theorem~\ref{th:inc-cograph} does not require the restriction $lS|>1$. This case is indeed captured by condition 2 on $T(S)$.
\subsection{$3$-leaf powers}

\begin{definition}
For an integer $k$, a graph $G=(V,E)$ is a \emph{$k$-leaf power} if there is a tree $T$ whose leaf set is $V$ and such that $xy\in E$ if and only if the distance in $T$ between leaves $x$ and $y$ is at most $k$, $d_T(x,y)\leqslant k$. The tree $T$ is called \emph{root-tree} of $G$.
\end{definition}

The family of $k$-leaf power has been introduced in~\cite{NRT02} in the context of phylogenetic tree reconstruction. Forbidden induced subgraph characterizations are known for $k\leqslant 4$.
In~\cite{BL06}, $3$-leaf powers have been characterized as the graphs resulting from the substitution of vertices of a tree by cliques. This leads to the following alternative characterization (see Figure \ref{fig:3-leaf}).

\begin{figure}[htbh]
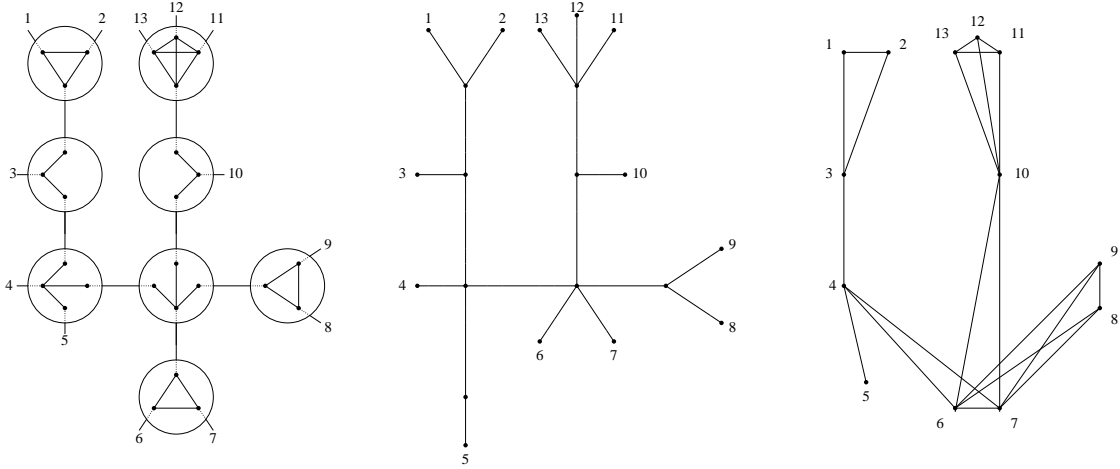

\inserfig{0.9}{3leaf}
\caption{A $3$-leaf power graph on the right, its split tree on the left, and a root-tree of this graph in the middle.}
\label{fig:3-leaf}
\end{figure}

\begin{theorem}[$3$-leaf power split tree characterization] \label{th:3-leaf}
A connected graph $G=(V,E)$ is a $3$-leaf power if and only if 
\begin{enumerate}
\item its split tree $ST(G)=(T,\mathcal{F})$ is a clique-star tree ($G$ is distance hereditary);
\item the set of star nodes forms a connected subtree of $T$; 
\item if $u$ is a star node, then the tree-edge $e$ such that $\rho_u(e)$ is the centre of the star, is incident to a leaf or a clique node.
\end{enumerate}
\end{theorem}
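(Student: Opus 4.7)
The proof is in two directions. For the forward direction ($\Rightarrow$), we assume that $G$ is a $3$-leaf power. Since $3$-leaf powers are distance hereditary (as stated in the introduction to this section), the split tree $ST(G)$ is a clique-star tree, giving condition~(1). For condition~(3), suppose for contradiction that the centre-edge of some star $u$ in $ST(G)$ is incident to another star $v$. Reducedness of $ST(G)$ forbids the star-join configuration (centre of one mapped to extremity of the other), so the tree-edge $uv$ must link the centres of both $u$ and $v$. Applying Lemma~\ref{lem:accessible} to extract two leaves from different extremity-subtrees of each of $u$ and $v$, a direct accessibility computation through the tree-edge $uv$ shows these four leaves induce a $C_4$ in $G$, contradicting the chordality of $3$-leaf powers.

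For condition~(2), suppose the star nodes of $ST(G)$ do not form a connected subtree. We choose stars $u, v$ such that the $u$-$v$ path in $T$ contains some clique node but no other star; since reducedness forbids two adjacent clique-nodes, we may take this path to be exactly $u$-$w$-$v$ with $w$ a clique. Condition~(3), already proved, restricts the centres of $u$ and $v$ to be incident to a leaf or a clique, so each of the two centres either points toward $w$ or away from it. We case-split on these four orientation choices. When both centres point toward $w$, picking two extremity-subtree leaves from each star gives an induced $C_4$. When both point away from $w$, combining centre-direction leaves $p_u, p_v$, extremity-direction leaves $q_u, q_v$, and a leaf $r$ drawn from a third subtree of $w$ (all provided by Lemma~\ref{lem:accessible}) produces an induced bull. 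The mixed case yields an induced dart in the same fashion. In every case we contradict the forbidden-subgraph characterization of $3$-leaf powers~\cite{BL06}.

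For the reverse direction ($\Leftarrow$), we construct a root tree $T_R$ of $G$ directly from $(T,\mathcal{F}) = ST(G)$. Replace every clique node $W$ of $T$ by a node $w^*$ of $T_R$, reattaching the leaves of $T$ attached to $W$ directly to $w^*$. For each star $U$ whose centre is incident to a leaf $l_c$ (permitted by condition~(3)), create a node $u^*$ of $T_R$ with $l_c$ attached directly; when the centre of $U$ is incident to a clique $W$, we merge by setting $u^* = w^*$. Every extremity-attached leaf of a star is then suspended via a fresh Steiner node of $T_R$ at distance~$2$ from $u^*$, and extremity-to-extremity adjacencies between two stars (the only other star-star configuration permitted by reducedness and condition~(3)) become direct edges of $T_R$. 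A case-by-case verification shows that two leaves of $T$ are accessible in $(T,\mathcal{F})$ if and only if the corresponding vertices are at tree-distance at most $3$ in $T_R$, so that $G$ is realized as the $3$-leaf power of $T_R$.

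The main obstacle is the case analysis in the forward direction's treatment of condition~(2): each of the orientation subcases must be resolved by exhibiting an explicit induced $C_4$, bull, or dart, and in each case the existence of the witnessing leaves must be justified via Lemma~\ref{lem:accessible} applied to the appropriate subtrees. The reverse direction's construction is essentially routine, but verifying that distance~$\leq 3$ in $T_R$ matches accessibility in $(T,\mathcal{F})$ relies crucially on conditions~(2) and~(3) to guarantee that accessibility propagates correctly through the Steiner-and-merge structure; without them, the natural construction fails, as already illustrated by the split tree of $C_4$, which satisfies condition~(1) alone.
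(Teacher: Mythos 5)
Your proof is correct in substance but follows a genuinely different route from the paper's in both directions. For the forward direction the paper does not argue via forbidden subgraphs at all: it starts from a root-tree $T'$ of $G$, labels each internal node $n(x)$ of $T'$ by a star whose centre is mapped to the pendant leaf $x$, reduces the resulting graph-labelled tree by joins, and then reintroduces the true-twin classes as clique nodes subdividing leaf edges; the three conditions are then read off this construction. You instead derive conditions (2) and (3) by contradiction, extracting witness leaves via Lemma~\ref{lem:accessible} and exhibiting an induced $C_4$, bull or dart; this imports the chordal plus (bull, dart, gem)-free characterization of $3$-leaf powers from \cite{BL06}, which the paper alludes to but never uses. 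Your case analysis is complete and the accessibility computations check out: reducedness forces the centre-to-centre configuration in your condition-(3) argument, reducedness forces a single intermediate clique $w$ in your condition-(2) argument, and the four orientation subcases do yield $C_4$ (both centres toward $w$), the bull (both away, using the third subtree of $w$) and the dart (mixed). The paper's route buys independence from the forbidden-subgraph list; yours buys locality, tracing each violated condition to an explicit small obstruction, in the spirit of Section~5.

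For the converse, your construction (contract each star onto the clique or leaf its centre points to, suspend extremity leaves by fresh degree-two nodes) is in fact more robust than the paper's one-line recipe, which only subdivides the leaf edges at star extremities and, taken literally, does not handle a star whose centre points to a clique node (already for the diamond, whose split tree is a ternary star with centre towards a triangle node, mere subdivision puts adjacent vertices at distance $4$). There is, however, one concrete omission in your construction: you never say what happens when an \emph{extremity} of a star is incident to a clique node. This configuration does occur (e.g., a path of three size-two clique modules has a central star whose two extremities each meet a clique node), and without the corresponding edge $u^*w^*$ in $T_R$ the tree is disconnected and the distance claim fails; you need this clause alongside the extremity-to-extremity and extremity-to-leaf cases. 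With that added, the distance-at-most-$3$ verification goes through, and, as you correctly observe, it genuinely uses condition (2) — which forces each clique to have at most one star neighbour, so two distinct stars are never contracted onto the same clique — as well as condition (3).
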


\begin{proof}
We assume that $G$ is not a clique nor a star, otherwise the statement is trivially true.
\begin{itemize}
\item[$\Rightarrow$] 
As $G$ is a $3$-leaf power there exists a root-tree $T'$ whose leaf set is $V$. Assume first that no pair of leaves are at distance two in $T'$.  For a leaf $x$, we denote by $n(x)$ its unique neighbor. Clearly $x$ and $y$ are adjacent in $G$ if and only if $n(x)$ and $n(y)$ are adjacent in $T'$. As $G$ is connected, every node of $T'$ is the neighbor of some leaf.  Let us construct a graph-labelled tree $(T',\mathcal{F}')$ such that $\G(T',\mathcal{F}')=G$. The graph label $G_v$ of each node $v=n(x)$ is a star whose centre is $\rho_v(xv)$. It is clear that two leaves of $(T',\mathcal{F}')$ are adjacent in $\G(T',\mathcal{F}')$ if and only if there are attached to the centre of two neighboring stars in $T'$: i.e. $\G(T',\mathcal{F}')=G$. As no pair of leaves are at distance two, $T'$ may contain some node of degree $2$. Then performing a node-join on each such node $u$ and its non-leaf neighbor $v$, yiedls a graph-labelled tree $(T,\mathcal{F})$ which is reduced and which only contains stars: this is the split tree $ST(G)$.

Now assume that $T'$ contains some pairs of leaves at distance $2$. Such a pair of leaves defines a pair of true twins in $G$. Let $\mathcal{P}$ be that partition of  $V(G)$ (leaf set of $T'$) into maximal sets of true twins (or maximal clique modules). The split tree of the quotient graph $G{/\mathcal{P}}$
is obtained as described above. Now the clique modules are reintroduced by performing true twins insertions (see Figure~\ref{fig:incremental}) in the split tree. Let $x$ be a leaf of $ST(G{/\mathcal{P}})$ and $M_x$ be the corresponding clique module. Then subdivide the tree-edge incident to $x$ by a clique node of degree $1+|M_x|$ (see Figure~\ref{fig:incremental} for a true twins augmentation). This yields a split tree of $G$ having the expected properties.

\item[$\Leftarrow$] Assume that $ST(G)=(T,\mathcal{F})$ satisfies conditions 1, 2 and 3. Then the root-tree $T'$ whose leaf set is $V$ (i.e. equal to the leaf set of $T$) is obtained as follows: 1) contract every tree-edge $uv$ of $T$ such that $u$ is a clique node and $v$ is a star node; and 2) subdivide every tree-edge $e=vl$ of $T$ 
such that $l$ is a leaf, $v$ is a star node and $\rho_v(e)$ is not the centre of the star $G_u$. Let us prove the correctness of this construction.

Assume first that $ST(G)$ only contains star nodes. Let $l$ be a leaf and $u$ be its neighbor.
Suppose that $\rho_u(lu)$ is not the centre of the star $G_u$. As $e=lu$ is a subdivided tree-edge, $d_{T'}(l,l')\geqslant 3$ with every leaf $l'\neq l$. In this case no contraction is performed, and thereby the distances between leaves do not decrease. Observe then that the only leaf $l'$ such that $d_{T'}(l,l')=3$ is attached to the centre of the star $G_u$ (i.e. $\rho_u(l'u)$). It is clear that $l'$ is the only leaf accessible to $l$ in $ST(G)$, i.e. adjacent in $G$. So suppose that $\rho_u(lu)$ is the centre of the star $G_u$. As just argued, $d_{T'}(l,l')=3$ for every leaf $l'\neq l$ adjacent to $u$ and $l$, $l'$ are pairwise accessible in $ST(G)$ so adjacent in $G$. So consider a leaf $l'$ adjacent to a node $v$ distinct from $u$. Observe that if $u$ and $v$ are not adjacent, then $d_{T'}(l,l')>3$ and by condition 3 $l$ cannot be accessible from $l'$. Otherwise ($u$ and $v$ are adjacent nodes), if $\rho_v(l'v)$ is the centre of $G_v$ then $d_{T'}(l,l')=d_T(l,l')=3$ which is fine since $l$ is accessible from $l'$. If $\rho_v(l'v)$ is not the centre of $G_v$ then $d_{T'}(l,l')=d_T(l,l')+1=4$ but then $l$ is not accessible from $l'$. It follows that $l$ and $l'$ are at distance $3$ is $T'$ if and only if there are adjacent in $G$.

To conclude consider the case where $ST(G)$ contains some clique nodes. Observe that by condition 2, a clique node $u$ is adjacent to at most one star node. Observe also that every pair of leaves adjacent to the same clique node are (adjacent) twins. Now if we save only one representative leaf per clique node, we obtain a graph $G'$ whose split tree $ST(G')$ only contains star nodes (replace every clique node by the corresponding representative leaf). We have shown that our root-tree construction is valid for $G'$. By the observations above, to obtain the root-tree it suffices to add every non-representative leaf $l$ adjacent to the same node than its representative. Observe that this finally amount to contract the tree-edge between clique-nodes and star-nodes. This conclude the proof.
\end{itemize}
\end{proof}

Observe that, to get a 3-leaf power vertex incremental characterization, we could simply test, given a 3-leaf power graph $G$, first if the graph $G+x$ is a DH graph using Theorem~\ref{th:charac}, and then if the node to which $x$ is attached in $ST(G+x)$ does not create a contradiction with Theorem~\ref{th:3-leaf}. This second condition amounts to test a local condition on $ST(G+x)$, and would be enough
for algorithmic purpose to refine the main DH algorithm of Section  \ref{sec:vertex-algo} in terms of 3-leaf power graphs as done in Section \ref{sub:vertex-3leaf}.
However, the following theorem establishes a more precise property directly on $ST(G)$.

\begin{theorem}[$3$-leaf power vertex incremental characterization]\label{th:3-leaf-inc}
Let $G$ be a connected \break
\hbox{$3$-leaf 
power} and $ST(G)=(T,\mathcal{F})$ be its split tree. Then $G+(x,S)$ is a $3$-leaf power if and only if
\begin{enumerate}
\item it is a distance hereditary graph (see conditions of Theorem~\ref{th:charac});

\item if $S=\{y\}$, then either $y$ is adjacent in $T$ to a star node, or $T$ has a only one node;

\item if $|S|>1$, then
\begin{enumerate}
\item if $T(S)$ does not contain a partially accessible node, then the tree-edge, towards which the fully-mixed star nodes are oriented (see Theorem~\ref{th:charac}), is incident to a clique node or a leaf;

\item if $T(S)$ contains a partially accessible node $u$, then $u$ is a clique node, and either $S$ is the set of leaves adjacent to $u$ or $u$ is the only node of $ST(G)$.
\end{enumerate}
\end{enumerate}
\end{theorem}

\begin{proof}

We first consider the case $S=\{y\}$. Then $ST(G+(x,S))$ is obtained from $ST(G)$ by inserting on the tree-edge incident to $y$ a degree $3$ star node $u$ adjacent to $x$ and whose  centre is $\rho_u(uy)$. Thanks to Theorem~\ref{th:3-leaf}, $G+(x,S)$ is a $3$-leaf power if and only if the neighbor of $y$ if condition 2 is satisfied.

From now on, we assume that $|S|>1$ and prove that $G+(x,S)$ is DH if and only if conditions 1 and 3 hold.
\begin{itemize}
\item[$\Rightarrow$] Let us consider the three different ways $ST(G)$ can be transformed into $ST(G+(x,S))$:
\begin{enumerate}
\item \textit{Vertex $x$ is attached to a node $u$ of $ST(G)$.}
Assume that $u$ is a clique node. Then,  by Corollary~\ref{cor:neighbor-node}, $u$ is a fully accessible clique node of $T(S)$ and $T(S)$ does not contain any partially accessible node. It follows that every star node of $T(S)$ is oriented towards any tree-edge of $T(S)$ incident to $u$. Consider the case $u$ is a star node. Then by Theorem~\ref{th:3-leaf} and since $|S|>1$, the neighbor $v$ of $u$, such that $\rho_u(uv)$ is the centre of $G_v$, is a clique. It follows that  $v$ is the partially accessible node of $T(S)$ and $S$ is the set of leaves adjacent to $v$.

\item \textit{A node $u$ of $ST(G)$ is node-split into two adjacent nodes $v$ and $v'$ and the tree-edge $vv'$ is subdivided to insert a new node $w$ adjacent to $x$.} As observed in the proof of Theorem~\ref{th:charac}, $u$ is partially accessible (this is a consequence of Corollary~\ref{cor:neighbor-node}). Assume that $u$ is a star node. Then, by Theorem~\ref{th:3-leaf}, $w$ cannot be a clique node, since otherwise it would neighbor two star nodes, namely $v$ and $v'$. But if $w$ is a star node, then the tree-edge $e$ such that $\rho_u(e)$ is the centre of $G_u$ is adjacent to a star node $v\not=u$: contradicting Theorem~\ref{th:3-leaf} again. It follows that $u$ has to be a clique node. This forces $w$ to be a star node. Theorem~\ref{th:3-leaf} then implies that $G+(x,S)$ is a 3-leaf power graph if and only if $u$ is the unique node of $ST(G)$ (otherwise the set of star nodes in $ST(G+(x,S))$ would not be connected). 

\item \textit{A tree-edge $e$ of $ST(G)$ is subdivided to insert a new node $w$ adjacent to $x$.} 
If $w$ is a clique node, then by Corollary~\ref{cor:neighbor-node}, $T(S)$ does not contain any partially accessible node. By Theorem~\ref{th:3-leaf}, $G+x$ is a 3-leaf power graph if and only if $e$ is incident to a leaf of $ST(G)$. 
Assume that $w$ is a star node with centre $\rho_w(vw)$. As $|S|>1$, $v$ is not a leaf. By Corollary~\ref{cor:neighbor-node}, $v$ is partially accessible. By Theorem~\ref{th:3-leaf}, $G+x$ is a 3-leaf power graph if and only if $v$ is a clique node. Moreover in that case, observe that $S$ is precisely the set of leaves of $T$ adjacent to $v$.
\end{enumerate}

\item[$\Leftarrow$] We just observe that if condititons (3.a) and (3,b) hold, then the construction of $ST(G+(x,S))$ described in the proof of Theorem~\ref{th:charac} yields a split tree that satisfies Theorem~\ref{th:3-leaf}. We describe the two cases more precisely. Assume condition (3.a) holds. Let $e$ be the tree-edge of $T(S)$ towards which the fully-mixed star nodes are oriented. Then either $e$ is incident to a leaf, or $e$ is incident to a star and a clique. In both, cases, the construction of $ST(G+(x,S))$ from $ST(G)$ described in the proof of Theorem~\ref{th:charac} shows that $ST(G+(x,S))$ satisfies the conditions of Theorem~\ref{th:3-leaf}. Assume now that $T(S)$ contains a partially accessible node and condition (3.b) holds. Again from the proof of Theorem~\ref{th:charac}, we know that to get $ST(G+(x,S))$ from $ST(G)$, the partially accessible node $u$ is node-split. Since $u$ is a clique node, it is then straightforward to check that condition (3.b) implies that $ST(G+(x,S))$ satisfies the conditions of Theorem~\ref{th:3-leaf}. 
\end{itemize}
\end{proof}

\section{Vertex-only fully-dynamic recognition algorithms}
\label{sec:vertex-algo}

The main result presented in this section is an optimal vertex-only fully dynamic algorithm that maintains the split tree representation of a DH graph. For both insertion and deletion queries, the split tree can be updated in time $O(d(x))$, where $d(x)$ is the degree of the vertex to be inserted or deleted. In the case of an insertion, the algorithm can check whether the resulting graph is DH or not. 
As corollaries, we obtain linear time recognition and isomorphism algorithms for DH graphs.
We also give a short overview of how this algorithm can be specialized for the cases of cographs and of $3$-leaf powers.

Let us first describe the data-structure we use to implement the split tree of the input graph.

\paragraph{Data-structure.}
The following data structure is used to encode the clique-star tree $ST(G)=(T,\mathcal{F})$ of the given connected DH graph $G$:
\begin{enumerate}
\item  a (rooted) representation of the tree $T$. The root of $T$ is chosen arbitrarily and is only required for the seek of computational efficiency;
\item as the graphs of $\mathcal{F}$ are cliques or stars, each node of $T$ only needs a \emph{clique-star mark} distinguishing the type of each node, the degree of the node and in the case of a star  a \emph{centre mark} to distinguish its centre from the other marker-vertices;
\end{enumerate}
Such a data structure is clearly an $O(n)$ space representation of any DH graph on $n$ vertices.

\subsection{Vertex-insertion in DH graphs}
\label{sub:vertex-DH}

The insertion algorithm works in three steps. Given a DH graph $G$ represented by its split tree $ST(G)$ and a new vertex $x$ together with a set of vertices $S$ of $G$: 1) we first compute the subtree $T(S)$; 2) then we check whether the conditions of Theorem~\ref{th:charac} are satisfied; and finally 3) if the augmented graph $G+x$ turns out to be DH, we update the split tree data-structure (otherwise the algorithm stops).

\paragraph{\textbf{Computing the smallest subtree spanning a set of leaves.}}

Given a set $S$ of leaves of a tree $T$, we need to identify the smallest subtree $T(S)$ spanning $S$, and to store the degrees of its nodes. This problem is easy to solve on rooted trees by a  bottom-up marking process in time $O(|T(S)|)$ as follows:

\begin{enumerate}
\item Mark each leaf of $S$. Along the algorithm, a marked node is \emph{active} if it is not the root and its father is not marked.
\item Each active node marks its father if: 1) the root is not marked and there are at least two active vertices, or 2) the root is marked and there is at least one active node.
\item While the root of the subtree $T'$ induced by the marked nodes is a leaf of $T'$ but does not belong to $S$, then remove this (root) node from $T'$, let its child be the new root of $T'$ and check again. Eventually return $T(S)=T'$.
\end{enumerate}

By Lemma~\ref{lem:tree-size}, if the augmented graph $G+x$ is DH, the size of $T(S)$ (its number of nodes) is at most $2.|S|$. To prevent a non-linear complexity in the case $G+x$ is not DH, while computing $T(S)$, we need to count the number of marked nodes. More precisely after step 2,
the number of marked nodes is at most $2.|T(S)|$ (since the number of deleted nodes in step 3 cannot exceed the number of marked nodes). Hence if the graph is DH, this number of marked nodes is at most $4.|S|$. Whenever more than $4.|S|$ nodes have been marked during step 2, the algorithm stops and claims that the graph $G+x$ is not DH.
In every case, it is easy to check that the above algorithm has  $O(|S|)$ running time. Its correctness is straightforward.

\paragraph{\textbf{Testing conditions of Theorem~\ref{th:charac}.}}

The first two conditions of Theorem~\ref{th:charac} are fairly easy to check by following Definition~\ref{def:accessible}: a node $u$ is \emph{fully accessible} if its degrees in $T(S)$ and $T$ are the same; 
$u$ is \emph{singly accessible} if it is a star, if it has degree $2$ in $T(S)$ and if the neighbor $v$ of $u$, such that $\rho_u(uv)$ is the star centre, belongs to $T(S)$; and $u$ is \emph{partially accessible} otherwise (such a node has to be unique if it exists). These tests cost $O(|T(S)|)$.

We can now assume that the first two conditions of Theorem~\ref{th:charac} are fulfilled. Since the case $|S|=1$ is trivial, we also assume that $|S|>1$.

We define \emph{local orientations} on nodes of a tree as the choice,  for each node $u$, of a node $f(u)$ such that either $f(u)=u$ or $f(u)$ is a neighbor of $u$. Local orientations are \emph{compatible} if 1) $f(u)=u$ implies $f(v)=u$ for every neighbor $v$ of $u$, and 2) $f(u)=v$ implies $f(w)=u$ for every neighbor $w\not=v$ of $u$.
An easy exercise is to see that if local orientations are compatible then exactly one of the two following properties holds: either there exists a unique node $u$ with $f(u)=u$, in which case $u$ is called \emph{node-root}, or there exists a unique tree-edge $uv$ with $f(u)=v$ and $f(v)=u$, in which case $uv$ is called~\emph{tree-edge-root}.

Testing the third condition of Theorem~\ref{th:charac} consists of building, if possible,
compatible local orientations in the subtree $T(S)$:
\begin{enumerate}
\item Let $u$ be a leaf of $T(S)$. Then $f(u)$ is the unique neighbor of $u$. 

\item Let $u$ be a star node of $T(S)$.  If $u$ is partially accessible, then $f(u)=u$.
If $u$ is singly accessible, then $f(u)$ is the unique neighbor $v$ of $u$ belonging to $T(S)$ such that $\rho_u(uv)$ is a degree-1 vertex of the star. 
If $u$ is fully accessible, then $f(u)$ is the neighbor $v$ of $u$ such that $\rho_u(uv)$ is the centre of the star. 

\item Let $u$ be a clique node of $T(S)$. If $u$ is partially accessible, then $f(u)=u$. Otherwise, $u$ is fully accessible and its neighbors are leaves or star nodes.  If $f(v)=u$ for every neighbor $v$ of $u$ then $f(u)=u$. 
If $f(v)=u$ for every neighbor $v$ of $u$ but one, say $w$, then $f(u)=w$. Otherwise $u$ is an \emph{obstruction}. 
\end{enumerate}

The third condition of Theorem \ref{th:charac} is satisfied if and only if 1)  there is no obstruction and 2) local orientations of $T(S)$ are compatible. This test can be performed in time $O(|T(S)|)$ by a search of $T(S)$.
Hence the conditions of Theorem \ref{th:charac} can be tested in $O(|T(S)|)$ time. 
Moreover if the test is satisfied, the search of $T(S)$ locates the node-root or the tree-edge-root.

\paragraph{\textbf{Updating the split tree.}}

We now assume that $G+(x,S)$ is DH (i.e. conditions of Theorem~\ref{th:charac} are satisfied).
So by Theorem~\ref{th:charac} the split tree has either a unique node-root or a  unique tree-edge-root.
To update the split tree, we may subdivide an insertion tree-edge into two new tree-edges.
Notice that, since we maintain an (artificial) orientation of the tree, this subdivision can be done in $O(1)$.
There are three~cases to consider (see~Figure~\ref{fig:insertion_cases}), after a possible single node-split preprocess (see~Figure~\ref{fig:insertion_preprocess}).

\begin{enumerate}

\setcounter{enumi}{-1} 

\item \emph{Single node-split preprocess:} If there is a node-root $u$ being partially accessible, then, depending on degree conditions on $u$, a preliminary update of  $T$ consisting of a node-split of the node $u$ is required. Let $U$, resp. $A$, be the set of tree-edges incident to $u$  in $T$, resp. in $T(S)$. 

\begin{figure}[h]

 
\inserfig{1}{insertion_preprocess}
\caption{Vertex-insertion preprocessing step: a node-split on the node-root $u$ is requited to separate the set $A$ of tree-edges (i.e. those incident to $u$ and belonging to $T(S)$ - drawn with an arrow in the figure) from the others.} 
\label{fig:insertion_preprocess}
\end{figure}

\begin{enumerate}
\item If $u$ is a clique node with $|U\setminus A|\geq 2$, then $u$ is node-split. Two new adjacent clique nodes $v$ and $w$ are created in $T$. The marker-vertices of $v$ (resp. $w$) correspond to $A$, resp. $U\setminus A$,  except one which corresponds to $vw$.
In this case, $v$ is now the (partially accessible) node-root.

\item If $u$ is a star node, the centre of which is mapped to the tree-edge $e$, and $|(U\setminus A)\setminus e)|\geq 1$,
then $u$ is node-split and replaced by two adjacent star nodes $v$ and $w$.
Then the extremities of the star $G_v$ correspond to $A\setminus\{e\}$ and its centre to $vw$ (we have $|A\setminus\{e\}|>1$ since $u$ is not singly accessible), likewise the extremities of the star $G_w$ correspond to $(U\setminus A)\cup\{vw\}$ and its centre  to $e$.

\begin{enumerate}

\item If $e\not\in A$, then the node $v$ becomes the (partially accessible) node-root.

\item If $e\in A$, then the tree-edge $vw$ is now the tree-edge-root.

\end{enumerate}
\end{enumerate}

\item \emph{The root of $T(S)$ is a partially accessible node $v$, or $S$ is reduced to a unique leaf $v$.}
Let $w$ be its neighbor in $T$ that does not belong to $T(S)$. 
Then the insertion tree-edge is $e=vw$, and $ST(G+(x,S))$ is obtained by subdividing $vw$ into two tree-edge $vr$ and $rw$, where $r$ a degree $3$ star node whose centre is $\rho_w(vr)$ and to which $x$ is adjacent. 
Finally if $w$ is a star with centre $\rho_v(wr)$, we proceed a node-join operation on the tree-edge $wr$ .

\item \emph{The root of $T(S)$ is a node $v$ which is not partially accessible.}
By the definition of the local orientation $f$, the node $v$ is a clique node, and $ST(G+(x,S))$ is obtained by adding the new leaf $x$ adjacent to $v$ whose degree thereby increases by one.

\item \emph{The root of $T(S)$ is a tree-edge $vw$.}
Then $ST(G+(x,S))$ is obtained by subdividing $vw$ with a clique node $r$ of degree $3$ and making the leaf $x$ adjacent to $r$. 
\end{enumerate}

\begin{figure}[h]
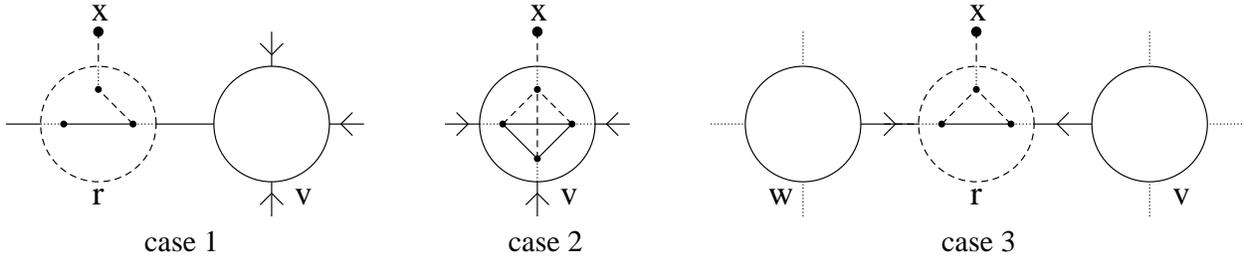
 
\inserfig{1}{insertion_cases}
\caption{The three different cases for the vertex-insertion: 1) the root of $T(S)$ is a partially accessible node $v$; 2) the root of $T(S)$ is a node $v$ which is not partially accessible; and 3) the root of $T(S)$ is a tree-edge $vw$. The modified split tree is obtained by inserting dashed node or tree-edges.} \label{fig:insertion_cases}
\end{figure}

\begin{theorem}[Vertex insertion]
\label{th:vins}
Let $G+(x,S)$ be a graph such that $G$ is a connected distance hereditary graph. Given the data structure of the split tree $ST(G)$,
testing whether $G+(x,S)$ is distance hereditary and if so computing  the data structure of $ST(G+(x,S))$ can be done in $O(|S|)$~time.
\end{theorem}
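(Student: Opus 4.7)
The plan is to verify that each of the three stages of the insertion algorithm already described before the theorem (computation of $T(S)$, verification of the conditions of Theorem~\ref{th:charac}, and update of the split tree) runs in $O(|S|)$ time, and that the stages can be composed without any additional overhead. Correctness of the procedure is essentially established by Theorem~\ref{th:charac} and by the case analysis of the update step; what remains is the complexity analysis.

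The key quantitative ingredient is Lemma~\ref{lem:tree-size}, which guarantees that whenever $G+(x,S)$ is distance hereditary, the smallest subtree $T(S)$ has at most $2|S|$ nodes. First I would invoke this to bound the bottom-up marking procedure that builds $T(S)$: if at some point more than $4|S|$ nodes get marked in the marking phase (recall the argument that step~2 marks at most $2\cdot|T(S)|$ nodes before the pruning of step~3), then by Lemma~\ref{lem:tree-size} we can safely halt and declare $G+(x,S)$ not distance hereditary. Thus this phase runs in $O(|S|)$ time regardless of whether or not the augmented graph is distance hereditary.

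Next I would argue that testing the three conditions of Theorem~\ref{th:charac} takes $O(|T(S)|)=O(|S|)$ time. Conditions 1 and 2 reduce to checking, for every node $u$ of $T(S)$, its \emph{type} (fully, singly or partially accessible): because the data structure stores the degree, clique/star mark and centre mark at each node, the type of $u$ is determined from the degree of $u$ in $T(S)$, the degree of $u$ in $T$, and (for stars) whether the centre-tree-edge of $u$ lies in $T(S)$, all of which cost $O(1)$ per node. Condition~3 amounts to propagating the local orientation $f$ described before the theorem by a single traversal of $T(S)$, checking along the way that no clique node is an obstruction and that the orientations are compatible, and, when they are, recording the unique node-root or edge-root.

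Finally, updating the split tree takes only $O(1)$ structural modifications (the single-split preprocess and then one of the three insertion cases, possibly followed by one join), each of which touches a constant number of nodes and tree-edges; the cost is absorbed in the $O(|S|)$ bound. The main obstacle, and the only place where the argument is not entirely routine, is the non-distance-hereditary case, where there is no a priori bound on $|T(S)|$: this is precisely what the $4|S|$ cut-off in the marking phase is designed to handle, and using Lemma~\ref{lem:tree-size} to justify this cut-off is the crucial step that makes the algorithm optimal.
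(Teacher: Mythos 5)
Your proposal is correct and follows essentially the same route as the paper: bound each of the three stages by $O(|S|)$, using Lemma~\ref{lem:tree-size} both to justify the $4|S|$ cut-off in the marking phase and to convert $O(|T(S)|)$ into $O(|S|)$. The only imprecision is your claim that each update-step modification touches a constant number of nodes and tree-edges: the paper points out that the single-split preprocess on a partially accessible node-root costs $O(|T(S)|)$ (the edge set $A$ must be moved to a new node), but since you absorb this into the $O(|S|)$ bound anyway, the conclusion stands.
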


\begin{proof}
The correctness follows from the discussion above and the proof of Theorem~\ref{th:charac}.

Concerning the complexity issues, every tree modification operation can be done in $O(1)$ time, except the splitting in case 0 which requires $O(|T(S)|)$ time (by deleting $A$ from $u$ to get $w$, and adding $A$ to a new empty node $v$).
Any other operation time to maintain the data structure of the split tree (root, degrees...) requires $O(1)$ time. Then, the complexity for the whole insertion algorithm derives from previous steps and the fact that $O(|T(S)|)=O(|S|)$ if the algorithm has passed the $T(S)$ computation  step.
\end{proof}

\bigskip
Let us remark that our vertex-insertion algorithm yields a linear time recognition algorithm of (static) DH graphs, thereby achieving the best known bound but also simplifying the previous non-incremental ones~\cite{HM90,DHP01,Bre05}. It also yields a linear time isomorphism algorithm, thereby achieving the best known bound again with a simpler setting than in \cite{NUU07}.

\begin{corollary}[Static recognition] \label{cor:static}
The vertex-insertion routine enables to recognize distance hereditary graphs in linear time.
\end{corollary}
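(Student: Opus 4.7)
The plan is to reduce static recognition to a sequence of incremental insertions, exploiting the hereditary nature of distance hereditary graphs and the linear amortized cost of Theorem~\ref{th:vins}. Fix any ordering $v_1,\dots,v_n$ of $V(G)$, let $G_i=G[\{v_1,\dots,v_i\}]$, and let $S_i=N_G(v_i)\cap\{v_1,\dots,v_{i-1}\}$. Starting from the trivial split tree of $G_1$, at step $i$ I would invoke the vertex insertion routine with input vertex $v_i$ and neighborhood $S_i$: either it returns $ST(G_i)$, or it reports that $G_i$ is not distance hereditary, in which case we declare that $G$ is not distance hereditary and stop.

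For correctness, the key observation is that the class of distance hereditary graphs is hereditary (the distance condition of the definition is preserved under induced subgraphs, and this is also immediate from the fact that the property of being totally decomposable by the split decomposition is closed under taking induced subgraphs). Hence if $G$ is distance hereditary then every $G_i$ is distance hereditary and, by Theorem~\ref{th:vins}, every call returns $ST(G_i)$; conversely, if no call fails then the final call produces $ST(G_n)=ST(G)$, certifying that $G$ is distance hereditary.

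For the complexity bound, Theorem~\ref{th:vins} guarantees that the $i$-th call runs in $O(|S_i|+1)$ time. Each edge of $G$ is contained in exactly one $S_i$ (namely that of its later endpoint in the ordering), so $\sum_{i=1}^n|S_i| = |E(G)| = m$, and the overall cost is $O(n+m)$, i.e.\ linear in the size of $G$.

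The one subtlety, and the main obstacle to address, is that Theorem~\ref{th:vins} is stated only for a connected distance hereditary graph $G$ together with a nonempty neighborhood $S$. When $S_i=\emptyset$, or when $v_i$ glues together several previously disjoint components, the current data structure is no longer a single split tree but a forest. I would handle this by maintaining a forest of split trees indexed by the connected components of the graph built so far: when $S_i=\emptyset$, create a singleton tree; when $v_i$ has neighbors in a unique existing component, invoke the connected-case routine on that component's split tree; when $v_i$ has neighbors in several components $C_1,\dots,C_k$, merge these components by attaching a single new star node to $v_i$ and to the relevant access points in the $k$ split trees, which amounts to $k-1$ tree link operations. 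Each such operation is accounted for by at least one edge of $S_i$, so the amortized cost of step $i$ remains $O(|S_i|+1)$ and the global linear bound is preserved.
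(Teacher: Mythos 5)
Your core reduction -- insert the vertices one by one, use hereditariness of the DH class for correctness, and charge the cost of step $i$ to $|S_i|$ so that the total is $O(n+m)$ -- is exactly the paper's argument. Where you diverge is in handling connectivity, and this is where your proposal carries an undischarged proof obligation. The paper disposes of the issue in one line: it does not take an arbitrary ordering but a search ordering (e.g.\ BFS) of each connected component, so that every prefix $G[\{x_1,\dots,x_i\}]$ is connected and every call to the insertion routine is made under exactly the hypotheses of Theorem~\ref{th:vins}. You instead keep an arbitrary ordering and maintain a forest of split trees, merging components when $v_i$ has neighbours in several of them. That can be made to work, but the merge step is not free: you must argue (i) that $G_i$ is distance hereditary if and only if each $C_j+(v_i,S_i\cap C_j)$ is (true, because the forbidden induced subgraphs -- long cycles, gem, house, domino -- are all $2$-connected, so DH-ness is a property of the blocks, but this is nowhere in your write-up), and (ii) that attaching a new star node to the $k$ access points actually produces the \emph{reduced} clique--star tree of the glued graph, which in general requires specifying where the centre of the new star points and performing star-joins when some access node is itself a star oriented towards the new node. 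None of this is hard, but it is precisely the kind of split-tree surgery the paper's connected-ordering trick lets you avoid entirely; as written, your "which amounts to $k-1$ tree link operations" papers over the only nontrivial content of your variant.
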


\begin{proof}
As the insertion algorithm works only on connected graphs, we have to proceed the vertices in an ordering $x_1,\dots x_n$ such that, for every $1\leqslant i\leqslant n$, $G[\{x_1,\dots x_i\}]$ is connected. Any search (e.g. BFS) computes such an ordering in linear time. As the global complexity cost is linear in the sum of the degrees, linear time follows.
\end{proof}

\begin{corollary}[Isomorphism] \label{cor:iso}
The vertex-insertion routine enables to test distance hereditary graph isomorphism in linear time.
\end{corollary}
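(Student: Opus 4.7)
The plan is to reduce distance hereditary graph isomorphism to labelled tree isomorphism on the split trees, and then to invoke a linear-time unrooted tree isomorphism procedure.

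First, given two connected DH graphs $G_1$ and $G_2$ on $n$ vertices, I would apply the static recognition corollary to build the split trees $ST(G_1)$ and $ST(G_2)$ in linear time. The nodes carry only a constant amount of local data: a clique/star mark, and for star nodes a designation of which incident tree-edge carries the centre marker vertex. By Theorem~\ref{th:cun} (Cunningham's Theorem reformulated), each connected DH graph corresponds to a \emph{unique} reduced clique-star tree, so $G_1\simeq G_2$ if and only if there is an isomorphism between $ST(G_1)$ and $ST(G_2)$ as labelled trees, i.e.\ a tree isomorphism $\varphi$ mapping clique nodes to clique nodes, star nodes to star nodes, and carrying the centre edge of each star node to the centre edge of its image. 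The disconnected case reduces to this one by testing isomorphism of the multisets of connected components, which can also be done in linear time by partitioning components by size and applying the connected test.

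Second, I would test labelled tree isomorphism in linear time by the classical centroid-plus-Aho--Hopcroft--Ullman scheme. Compute the (one or two) centroids of each split tree in linear time; root each tree at its centroid (or introduce an artificial root on the centroid edge if there are two centroids, with an agreed label that differs from both clique and star). Once the trees are rooted, the star-centre designation on each non-root node $u$ becomes simply the information ``is the centre-edge of $u$ pointing to the parent of $u$, or to one of its children, and if the latter, which one''. Up to this extra local tag, the AHU algorithm assigns canonical integer codes to subtrees bottom-up: at each node one forms the sorted list of (child-label, child-code, centre-flag) triples and compresses it via radix sort into a new integer. Two rooted labelled trees are isomorphic iff their roots receive the same code; the total cost is $O(n)$.

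The main obstacle is handling the star-centre designation correctly during the bottom-up canonical encoding, because the centre-edge of a star can point either to the parent or to one child, so the ``shape'' of a star node depends on the rooting. This is handled by incorporating into each node's signature not just the sorted list of children's codes but also a one-bit flag saying whether the centre edge goes up to the parent, plus, if it goes down, a marker attached to the code of the distinguished child before sorting. With this refined signature the AHU compression still runs in linear total time, and correctness follows because an isomorphism of reduced clique-star trees must preserve the centre edge at every star node (otherwise the accessibility relation would be altered, contradicting Definition~\ref{def:accessibility}). Combining these three phases gives overall $O(n)$ time to decide whether $G_1$ and $G_2$ are isomorphic.
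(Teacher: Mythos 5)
Your proposal is correct and follows essentially the same route as the paper: build the (unique, by Theorem~\ref{th:cun}) split trees in linear time via the recognition algorithm, then reduce to linear-time tree isomorphism \`a la Aho--Hopcroft--Ullman. You simply spell out details the paper leaves implicit, namely the centroid rooting and the careful treatment of the clique/star marks and star-centre designations in the canonical encoding.
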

\begin{proof}
To test isomorphism between two DH graphs, it suffices to test isomorphism between the two corresponding split trees. The split tree of a DH graph can be constructed in linear time by our recognition algorithm and has size linear in the number of vertices of the graph (Lemma \ref{lem:tree-size}). Thereby any linear time tree isomorphism algorithm can be used (e.g. \cite{AHU74}).
\end{proof}

\subsection{Vertex-deletion in DH graphs}
\label{sub:vertex-del-DH}

Removing a vertex $x$ from a DH graph $G$ always yields a DH graph $G-x$. 
Let $ST(G)$ be the split tree of $G$.
Updating the data structure of the split tree can be done as follows.

\begin{enumerate}
\item Remove the leaf $x$ and update the degree of its neighbor $v$.
\item If $v$ now has degree $2$, then remove $v$ and add a tree-edge between its neighbors $u$ and $w$. If the resulting clique-star tree is not reduced, proceed a node-join on the tree-edge $uw$.
\item If $v$ is a star node whose centre neighbor was $x$, then $G-x$ is no longer connected, and the split-trees of each connected component are the components of $T-\{v,x\}$.
\end{enumerate}

\begin{lemma}[Vertex deletion]
\label{th:vdel}
Let $G$ be a connected distance hereditary graph and $x$ be a degree $d$ vertex of $G$.
Given the data structure of split tree $ST(G)$,
testing whether $G-x$ is a connected distance hereditary graph and if so computing  the data structure of $ST(G-x)$ can be done in $O(d)$ time.
\end{lemma}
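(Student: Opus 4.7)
The plan is to separate correctness from complexity. For correctness, observe that distance hereditary graphs form a hereditary class, so if $G-x$ is connected it is automatically distance hereditary; the only obligations are (i) to detect whether $G-x$ is connected, and (ii) to transform $ST(G)$ into the (possibly disconnected) reduced split tree of $G-x$ by local surgery around the unique neighbour $v$ of the leaf $x$ in $T$.

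Let $a=\rho_v(vx)$ be the marker vertex mapped to the tree-edge $vx$. By Definition~\ref{def:accessibility}, removing the leaf $x$ and the incident tree-edge from $T$ and the marker vertex $a$ from the label $G_v$ yields a graph-labelled tree whose accessibility graph is exactly $G-x$. Since $G$ is DH, $G_v$ is a clique or a star by Theorem~\ref{th:cun}, so $G_v-a$ is again a clique or a star unless $G_v$ is a star and $a$ is its centre, in which case $G_v-a$ becomes an independent set. Applying Lemma~\ref{lem:connected} to this intermediate graph-labelled tree shows that $G-x$ is disconnected iff the latter situation occurs, and in that case the accessibility components are precisely the subtrees of $T-\{v,x\}$; this justifies step 3.

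Assume now that $G-x$ is connected. The only node whose degree or label changed is $v$, so the only violations of "reduced" that can arise are at $v$. If $v$ still has degree $\geqslant 3$ in the updated tree, then no join becomes possible (since no two adjacent labels were simultaneously modified) and no node has degree $\leqslant 2$, so the clique-star tree produced is reduced, hence equal to $ST(G-x)$ by Theorem~\ref{th:cun}. If $v$ has degree $2$, step 2 contracts it into a single tree-edge $uw$ and performs a clique-join or star-join on $uw$ if the resulting labels allow it — this is precisely the normalization procedure of Section 2.2, so once again the output is the unique reduced split tree of $G-x$, i.e. $ST(G-x)$.

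For complexity, every action performed touches only $x$, its neighbour $v$, and at most the two neighbours of $v$ involved in the potential contraction and join: testing the type of $G_v$ and the role of $a$, removing a leaf, decrementing a degree, contracting a degree-$2$ node, and at most one local clique- or star-join. With the data structure maintained as described (degrees, clique-star marks, centre marks), each of these takes $O(1)$ time, for a total cost of $O(1)$, which is in particular $O(d)$. The only subtlety worth flagging — and the main place where care is needed — is verifying that no non-local modifications to the tree or the node labels are required to re-establish the reduced form; this follows because all changes are confined to $v$ and so by Theorem~\ref{th:cun} the uniqueness of the reduced split tree closes the argument.
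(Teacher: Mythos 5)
Your correctness argument is sound and follows the same route as the paper: remove the leaf $x$ and the marker $a=\rho_v(vx)$, observe that $G-x$ disconnects exactly when $G_v$ is a star centred at $a$, and otherwise contract $v$ if it drops to degree $2$ and re-reduce by at most one join on the new tree-edge $uw$, with uniqueness of the reduced tree (Theorem~\ref{th:cun}) closing the argument. The gap is in the complexity analysis, and it sits precisely on the one step that is not trivially constant-time: the join on $uw$. You assert that this join is ``local'' and costs $O(1)$, but with the data structure the paper maintains (a rooted tree with degrees and clique/star/centre marks), joining $u$ and $w$ means merging two tree nodes, i.e.\ re-attaching all tree-edges of one of them to the other; this costs $\Theta(\min(d(u),d(w)))$, not $O(1)$. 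This is not a worst case you can wave away: take $v$ a ternary clique node adjacent to $x$, $u$, $w$ with $G_u$ and $G_w$ stars of large degree whose centres face $v$; after contracting $v$ the star-join on $uw$ touches $\Theta(\min(d(u),d(w)))$ edges, and this quantity grows with the instance.

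What rescues the $O(d)$ bound --- and what your proposal is missing --- is an accessibility argument tying $\min(d(u),d(w))$ to $d=|N(x)|$. When a join is required, at least one of $u$, $w$ is fully accessible from $x$ (every subtree hanging off it, other than the one towards $v$, contains a neighbour of $x$, by Lemma~\ref{lem:accessible} together with the structure of the clique or star labels involved), so that node has degree at most $|N(x)|+1$ and the join costs $O(d)$. This is exactly how the paper argues the bound. Your final conclusion ($O(d)$) is the right one, but the intermediate claim ``each of these takes $O(1)$ time, for a total cost of $O(1)$'' is false as stated, and the substantive part of the complexity proof is absent.
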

\begin{proof}
Every operation, except the node-join, can be achieved in $O(1)$ time. The complexity of the node-join on the tree-edge $uw$ is $min(d(u),d(w))$, where $d(u),d(w)$ are respectively the degree of node $u$ and node $w$.
Since at least one of these nodes is fully accessible, this minimum degree is smaller than $d$, the degree of $x$.
Hence this node-join operation costs $O(d)$.
\end{proof}
\bigskip

To summarize the results of vertex dynamic DH graphs, with Theorem \ref{th:vins} and Lemma \ref{th:vdel}, we have proved that:

\begin{theorem}
There exists a vertex fully dynamic recognition algorithm for connected distance hereditary graphs, maintaining  the split tree, with complexity $O(d)$ per vertex-insertion or deletion operation involving $d$ edges.
\end{theorem}

\subsection{Vertex modifications in cographs}
\label{sub:vertex-cograph}

To check whether the augmented graph $G+(x,S)$ is a cograph, our vertex-insertion algorithm for DH could be used. According to Theorem~\ref{th:inc-cograph}, we just need an extra test to verify that the tree-root has a node in the subtree $T(S)$ or is neighboring a node of $T(S)$. Notice that as the original graph $G$ is a cograph, the star nodes define a natural orientation which can be used to  compute $T(S)$. Let us also remark that, as a consequence of Theorem~\ref{th:inc-cograph}, the set of singly accessible nodes (which are stars) has to belong to a path from the tree-root of $ST(G)$ to some node $u$. It follows that to test condition 3 of Theorem~\ref{th:charac}, the local orientations can be avoided. This path property for the singly accessible nodes was already noticed (in other terms) in the characterization proposed in~\cite{CPS85}. Finally, we need an extra work to update the tree-root as described in the proof of Theorem~\ref{th:charact-cograph}. This can also be done in constant time. It follows that the resulting complexity is $O(d)$ by insertion as in the incremental recognition algorithm of Corneil, Perl and Stewart~\cite{CPS85} (which is based on the modular decomposition tree).

As cographs are hereditary graphs, the vertex-deletion always yields a cograph. Notice also that removing a vertex does not affect the orientation of the remaining star-nodes in the split tree. It follows that our vertex-deletion algorithm for DH graph can be used as well for the vertex-deletion of cographs.

\begin{theorem} 
There exists a  vertex fully dynamic recognition algorithm for connected cographs, maintaining  the split tree, with complexity $O(d)$ per vertex-insertion or vertex-deletion operation involving $d$ edges.
\end{theorem}

\subsection{Vertex modifications in $3$-leaf powers}
\label{sub:vertex-3leaf}

Again the DH vertex-insertion algorithm can be easily specialized to  work on $3$-leaf powers. 
Thanks to Theorem~\ref{th:3-leaf-inc}, insertion of a pendant vertex $x$ neighboring $y$ is restricted to the case where a leaf $y$ is adjacent to a star node or the split tree has a unique node. This can be checked in $O(1)$ time. In the other cases, we just need to test whether the subtree $T(S)$ contains or not a partially accessible node. This only requires a search of $T(S)$ whose size is $O(|S|)$. Concerning the deletion algorithm, as $3$-leaf powers are hereditary graphs, we just apply the DH vertex-deletion algorithm.

\begin{theorem}
There exists a  vertex fully dynamic recognition algorithm for connected $3$-leaf powers, maintaining  the split tree, with complexity $O(d)$ per vertex-insertion or vertex-deletion operation involving $d$ edges.
\end{theorem}

Notice that since the family of $3$-leaf power is hereditary, this vertex incremental recognition algorithm also applies to static graph. The time complexity is linear as for the recognition algorithm proposed in~\cite{BL06}. Moreover our algorithm can be easily adapted to output the tree root when the input graph is a $3$-leaf power.

\section{Edge modifications: characterizations and algorithms.}

In this section we show that the split tree representation is also the right tool to deal with edge modifications in totally decomposable graphs. Indeed, based on the forbidden induced subgraph characterizations of the three graph families we have considered so far (DH graphs, cographs and $3$-leaf powers), we identify necessary and sufficient conditions under which  given a graph $G$ and an edge $e$, the modified graph $G+e$ (or $G-e$) belongs to the same family than $G$. Using the graph-labelled tree representation, these conditions consist in checking if a given path in the split tree belongs to a small finite set of configurations. These simple characterizations yield to simple constant time edge fully-dynamic algorithms. Let us mention that such algorithmic results were already known for cographs~\cite{SS04} and DH graphs~\cite{TC07}. For cographs, the edge fully-dynamic algorithm in~\cite{SS04} relies on a modular decomposition based characterization which, again, we are able to transpose in the split decomposition settings, and which are derived as a particular case of the DH edge modification algorithm. Concerning the DH graphs, the constant time algorithm of~\cite{TC07} is way more complicated than the one we propose here. It relies on a tricky analysis on the BFS layering structure~\cite{HM90} of DH graphs and up to our knowledge no simple characterization could be identified from that work. 
No result of this flavour was known for $3$-leaf powers.

\subsection{Edge-modification in distance hereditary graphs.}
\label{sub:edge-DH}

This subsection states our results on edge modifications in DH graphs.
The combinatorial characterization Theorem \ref{th:edge_dynamic}
directly  implies the algorithm of Corollary \ref{th:cor_edge_dynamic}
and is proved in the next subsection.

Let $G$ be a connected DH graph and $ST(G)=(T,\mathcal{F})$ be its split tree. If $x$ and $y$ are two vertices of $G$, we denote $P(x,y)$ the graph labelled tree formed by the path in $T$ between the leaf $x$ and the leaf $y$, with nodes labelled the same way as in $ST(G)$. 
As $ST(G)$ is a clique-star tree, $P(x,y)$ naturally defines a word $W(x,y)$ whose letters identify the type of the graphs labelling the nodes in $P(x,y)$. 
A alphabet of four symbols $A=\{K,S,S_x,S_y\}$ is enough to describe $W(x,y)$:
\begin{itemize}
\item the letter $K$ stands for the clique nodes;
\item the letter $S$ stands for the star nodes $v$, the centre $\rho_v(e)$ of which is mapped to the tree-edge $e$ that does not  belong to $P(x,y)$; and
\item the letter $S_x$ (resp. $S_y)$ stands for the star nodes $v$, the centre $\rho_v(e)$ of which is mapped to the tree-edge $e$ that belongs to the subpath of $P(x,y)-v$ containing $x$ (resp. $y$).
\end{itemize}

Observe that $xy\in E(G)$ if and only if $W(x,y)$ is $S$-free (i.e. does not contain the letter $S$). 
When describing words, letters in  brackets can be deleted: e.g. $K(S)K$ stands for the words $KK$ and $KSK$.

\begin{theorem} \label{th:edge_dynamic}
Let $G$ be a connected DH graph and $ST(G)=(T,\mathcal{F})$ be its split tree. Let $x$ and $y$ be two vertices of $G$ and $W(x,y)$ be the word labelling the path $P(x,y)$ between $x$ and $y$ in $T$. Then
\begin{enumerate}
\item If $xy\notin E$, then $G+xy$ is distance hereditary if and only if $W(x,y)$ is one of the following words:\\ 
\centerline{$(S_x)SS(S_y)$ \hfil $(S_x)SK(S_y)$ \hfil $(S_x)KS(S_y)$ \hfil $(S_x)S(S_y)$}
\item If $xy\notin E$, then $G-xy$ is distance hereditary if and only if $W(x,y)$ is one of the following words:\\
\centerline{$(S_x)S_yS_x(S_y)$ \hfil $(S_x)S_yK(S_y)$ \hfil $(S_x)KS_x(S_y)$ \hfil $(S_x)K(S_y)$ \hfil $(S_x)(S_y)$}

Moreover if $W(x,y)=(S_x)(S_y)$, then $G-xy$ is no longer connected.
\end{enumerate}
\end{theorem}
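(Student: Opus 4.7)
I would prove the theorem using the Bandelt--Mulder forbidden induced subgraph characterization of distance hereditary graphs: a graph is DH if and only if it contains no induced $C_k$ for $k\geqslant 5$, gem, house, or domino. The word $W(x,y)$ compactly encodes the local structure along $P(x,y)$, and the idea is that the listed configurations are exactly those where the modification $G\pm xy$ can be realized as a small local change in the split tree, and all other configurations force a forbidden induced subgraph.

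\textbf{Preliminary constraints.} First I would record the structural restrictions on $W(x,y)$ imposed by the fact that $ST(G)$ is reduced: no two consecutive letters can form the pattern $KK$ (otherwise a clique-join would apply), and no two consecutive star letters can have their centres facing each other along the tree-edge between them, which rules out $S_xS_x$ and $S_yS_y$. Moreover, the accessibility criterion (Definition~\ref{def:accessibility} and Corollary~\ref{cor:neighbor-node}) gives the key fact $xy\in E(G)$ iff $W(x,y)$ is $S$-free, so Case~1 forces at least one letter $S$ in $W(x,y)$ and Case~2 forbids it. Together these drastically restrict the relevant words, and make the enumeration in the theorem exhaustive once we argue for each of the few remaining shapes.

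\textbf{Sufficiency.} For each of the four listed words in Case~1 and five in Case~2, I would explicitly describe how $ST(G)$ transforms into a reduced clique-star tree $(T',\mathcal{F}')$ with $G(T',\mathcal{F}')=G\pm xy$. The updates are all local operations along $P(x,y)$: changing the type of a single node (e.g.\ flipping a $K$ to an $S$ or relabelling the centre of a star), subdividing one tree-edge, or performing a single clique-split/star-split followed if necessary by a clique-join or star-join to restore reducedness. Since a graph is DH iff it admits a clique-star tree representation, producing such a $(T',\mathcal{F}')$ certifies that $G\pm xy$ is DH.

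\textbf{Necessity.} Assuming $W(x,y)$ does not belong to the listed forms, I would exhibit a forbidden induced subgraph in $G\pm xy$. The essential tool is Lemma~\ref{lem:accessible}: every maximal subtree of $T$ hanging off a node $v$ of $P(x,y)$ contains a leaf accessible from $v$, so one can pluck out auxiliary leaves realizing any required adjacencies. The analysis splits roughly as follows: (i) if $|W(x,y)|\geqslant 5$, a careful choice of one accessible leaf per intermediate node yields, together with $x$ and $y$, an induced $C_k$ of length $\geqslant 5$; (ii) if $|W(x,y)|=4$ but the internal pattern is not allowed (an extra $K$ on the wrong side, or a misplaced $S_x/S_y$), the auxiliary leaves form a house or a domino; (iii) shorter bad patterns such as $KSK$, $KSS_y$, $S_xSK$ or $S_xKSS_y$ in Case~1, and $KS_xK$ or overly long pure $K$-words in Case~2, give rise to a gem. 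The final assertion that $G-xy$ is disconnected when $W(x,y)=(S_x)(S_y)$ follows directly from Lemma~\ref{lem:connected}, since removing $xy$ makes the tree-edge at the centre of the unique endpoint star (or the single tree-edge $P(x,y)$) isolate a connected component.

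\textbf{Main obstacle.} The hard part is the bookkeeping in the necessity direction: for every bad word one must select auxiliary leaves so that the required non-edges are also realized, i.e.\ so that no unwanted chord appears in the extracted subgraph. This requires exploiting Corollary~\ref{cor:neighbor-node} in the reverse direction to certify non-accessibility across $S$-letters and across the ``wrong side'' of an $S_x$ or $S_y$ node. Once this is done uniformly, the proof becomes a finite case analysis driven by the short catalogue of forbidden subgraphs.
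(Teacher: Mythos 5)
Your overall strategy --- reduce to the Bandelt--Mulder forbidden induced subgraphs and read them off along $P(x,y)$ --- is the same as the paper's, but you are missing the localization step that does most of the work there. Lemmas~\ref{lem:word_subgraph} and~\ref{lem:edge_CNS} join all nodes of $P(x,y)$ into a single node whose label is the caterpillar graph $G_{W(x,y)}$; since a graph is DH exactly when every label of one of its graph-labelled trees is DH, and the modification $\pm xy$ changes only that one label, one gets ``$G\pm xy$ is DH if and only if $G_{W(x,y)}\pm xy$ is DH''. This disposes of your sufficiency direction in one line --- no nine-case split-tree surgery needs to be verified as accessibility-preserving --- and reduces necessity to a single small graph. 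The paper then turns the hunt for forbidden subgraphs into a hunt for forbidden subwords (Lemma~\ref{lem:subgraph}: connected induced subgraphs of $G_W$ containing $x,y$ correspond exactly to $S$-maintained subwords; Lemma~\ref{lem:subwords} and the two tables), so the certification of non-edges that you flag as your ``main obstacle'' is handled once and for all by the subword correspondence rather than ad hoc for each configuration.

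More seriously, the case analysis you sketch for necessity is wrong in places, which matters because that enumeration is the entire content of the theorem. Three of the four ``shorter bad patterns'' you claim force a gem in Case 1 --- $KSS_y$, $S_xSK$ and $S_xKSS_y$ --- are instances of the allowed words $(S_x)KS(S_y)$ and $(S_x)SK(S_y)$ from the statement itself, so they cannot yield a forbidden subgraph; only $KSK$ is genuinely forbidden there. Likewise ``overly long pure $K$-words'' cannot occur in Case 2: a reduced word has no $KK$ factor, so two $K$'s already force a $KS_xK$ or $KS_yK$ gem pattern, and the long words that actually need eliminating in the deletion case are the alternating $S_y/S_x$ words, killed by the domino subword $S_yS_xS_yS_x$ --- a case your sketch omits entirely. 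Finally, Lemma~\ref{lem:connected} is the wrong reference for the disconnection claim when $W(x,y)=(S_x)(S_y)$; what is used is that the relevant tree-edge (or star centre pointing at $x$ or $y$) realizes a split whose unique crossing edge is $xy$. The approach is salvageable, but as written the enumeration has not been carried out correctly.
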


\begin{corollary} \label{th:cor_edge_dynamic}

The following algorithm tests and updates the data-structure of the split tree for the insertion or deletion of an edge $xy$ in a (connected) distance hereditary graph $G$ in constant time.
\end{corollary}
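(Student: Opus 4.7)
The proof combines Theorem~\ref{th:edge_dynamic} with the observation that every admissible word listed there has length at most four, so the path $P(x,y)$ spans a constant number of nodes whenever the edge modification preserves distance-heredity. The algorithm therefore splits into three constant-time phases: a local reading of the split tree around $x$ and $y$ to identify $W(x,y)$, a pattern match against the finite list of words of Theorem~\ref{th:edge_dynamic}, and, in case of success, a pattern-specific rewriting of the data structure.

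For the reading phase I would start from the two leaves and walk up toward the arbitrary root of the split tree, one parent at a time, marking the visited nodes. As soon as a node is marked by both walks, the lowest common ancestor, and hence the whole path $P(x,y)$, is known; I would abort as soon as more than a fixed number of nodes have been visited without meeting, since in that case $W(x,y)$ is already too long to match any admissible pattern and we can answer ``no''. Once $P(x,y)$ is known, its letter sequence is read by inspecting, at each internal node along it, the clique-star mark and, in the star case, whether the centre mark coincides with the incoming tree-edge, the outgoing tree-edge, or neither; this distinguishes the four letters $K$, $S$, $S_x$, $S_y$ in $O(1)$ per node. Matching against the finite list of Theorem~\ref{th:edge_dynamic} is a table lookup.

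For the updating phase I would store, for each admissible pattern of Theorem~\ref{th:edge_dynamic}, a precomputed rewriting of the $O(1)$ nodes of $P(x,y)$ and of their immediate neighbours that produces the split tree of $G+xy$ or $G-xy$. For example, the insertion pattern $(S_x)SS(S_y)$ calls for reorienting the two internal stars so that their centres face each other, and possibly fusing them together with the bracketed nodes by star- or clique-joins; the deletion pattern $(S_x)K(S_y)$ calls for splitting the clique by inserting a new star-node between its two sides; the degenerate case $(S_x)(S_y)$ at a deletion is handled by removing the single tree-edge between the two marked leaves, which disconnects the tree. Each such rewriting involves a bounded number of pointer updates, degree changes and centre-mark adjustments, and therefore runs in $O(1)$.

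The \emph{main obstacle} is to argue that the prescribed local rewriting indeed produces a \emph{reduced} clique-star tree, with no hidden propagation outside the bounded neighbourhood of $P(x,y)$ that was touched. This is handled by case analysis on the matched pattern: the only places where a new clique- or star-join can become applicable, or where a new degree-two node can appear, are the two endpoints of $P(x,y)$ together with their immediate neighbours outside $P(x,y)$. Corollary~\ref{cor:split_list} together with the explicit form of the admissible patterns then guarantees that at most one clique-join or star-join is needed at each endpoint. A final constant-time check-and-merge step at the two endpoints restores reducedness, yielding the split tree of the modified graph within an overall $O(1)$ budget.
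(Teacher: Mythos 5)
Your proposal is correct and follows essentially the same route as the paper: a parallel upward walk from $x$ and $y$ toward the maintained root with early abort once more than a constant number of nodes are visited, a table lookup against the finite list of admissible words from Theorem~\ref{th:edge_dynamic}, and a pattern-specific constant-size local rewriting followed by a bounded number of join operations (each on ternary nodes, hence $O(1)$) to restore reducedness. The only cosmetic difference is that you invoke Corollary~\ref{cor:split_list} to bound the clean-up joins, whereas the paper simply observes that at most two splits and two joins occur and that each involves a ternary node.
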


\begin{enumerate}

\item Test if $W(x,y)$ has length at most $4$ and satisfies conditions of Theorem~\ref{th:edge_dynamic}. 

\item Update the split tree of $G$. Nodes of letters in  brackets are called~{\it extreme}.%

\begin{enumerate}

\item Node-split every non-extreme node of $W(x,y)$ that is not ternary so that in the resulting clique-star tree, all the non-extreme node of $W(x,y)$ are ternary.

\item Replace the non-extreme nodes by ternary nodes according to the following table. If $W(x,y)$ contains two non-extreme nodes, say $u$ and $v$, then the neighbor $u'$ of $u$ (resp. $v'$ of $v$), that does not belong to $W(x,y)$, becomes adjacent to $v$ (resp. $u$). See Figure \ref{fig:edge_algo}. Extreme nodes are left unchanged.
\medskip

{
\noindent
\hfill{
\begin{tabular}{|c|c|}
\hline
\multicolumn{2}{| c |}{
edge-insertion $\longrightarrow$
}\\
\multicolumn{2}{| c |}{
$\longleftarrow$ edge-deletion
}\\
\hline

$\ \ \ \ (S_x)SS(S_y)\ \ \ \ $				& $\ \ \ \ (S_x)S_yS_x(S_y)\ \ \ \ $ \\ 
$(S_x)SK(S_y)$				& $(S_x)S_yK(S_y)$ \\ 
$(S_x)KS(S_y)$				& $(S_x)KS_x(S_y)$ \\ 
$(S_x)S(S_y)$				& $(S_x)K(S_y)$ \\ 

\hline
\end{tabular}
}\hfill
}

\medskip

\item If necessary, proceed (at most two) node-join operations involving the nodes that have been changed to get a reduced graph-labelled~tree.

\end{enumerate}

\end{enumerate}

\begin{figure}[h]
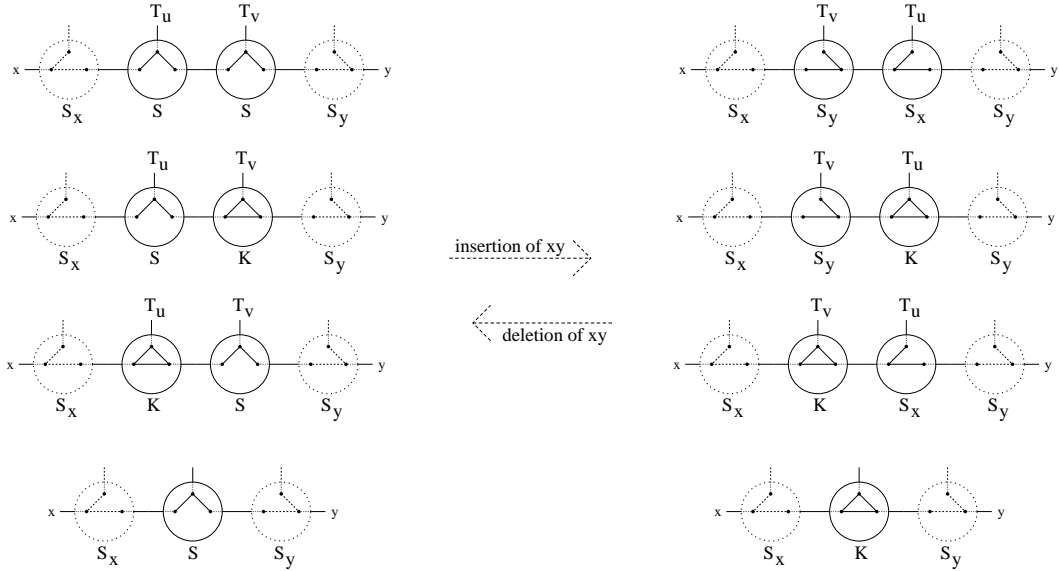

\inserfig{0.85}{edge_algo}
\caption{Constant time dynamic algorithm for edge modification in DH graphs (Corollary \ref{th:cor_edge_dynamic})}
\label{fig:edge_algo}
\end{figure}

\begin{proof}
The correctness of the algorithm is a consequence of Theorem \ref{th:edge_dynamic} and the fact that the split tree transformations are safe (see Figure~\ref{fig:edge_algo}).
Let us turn to the complexity analysis. We assume (as we did in Section 4) that an artificial root of the split tree is maintained (remember that the graph and the split tree are connected).
Step 1 can be done easily in constant time, by searching the split tree in parallel from $x$ and $y$ towards the root (if the least common ancestor of $x$ and $y$ is found after $4$ steps or more, then the length of the path $P(x,y)$ is larger than $4$).
Step 2 also requires constant time.
There are at most two node-split operations and two node-join operations respectively at steps (a) and (c), each of which is constant time since it involves a ternary node.
And the transformation at step (b) is obviously constant time.
\end{proof}

\begin{figure}[htbh]
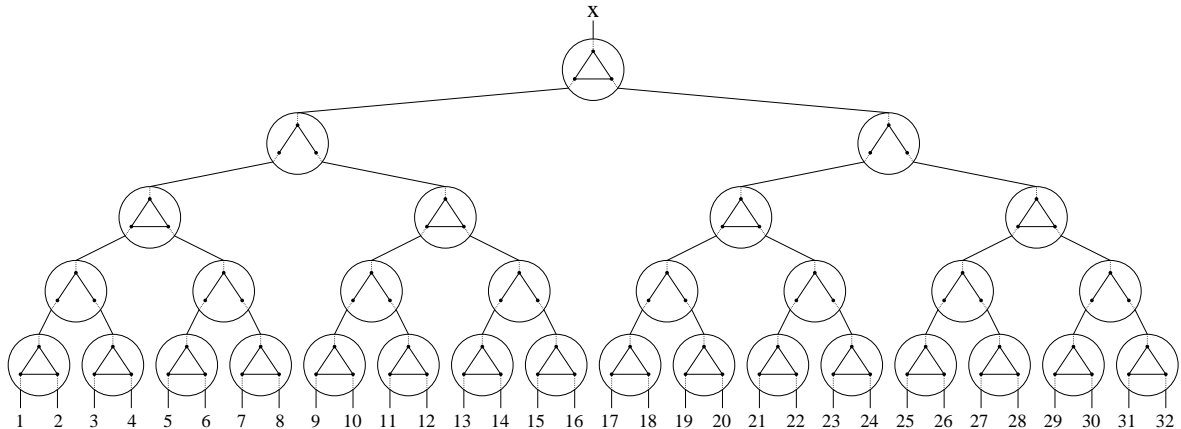
 
\inserfig{0.95}{contre-exemple}
\caption{A DH graph (and cograph) such that removing any edge incident to the vertex $x$ provides a non-DH graph: the length of the path from $x$ to any other leaf is greater than 5.
}
\label{fig:contre-exemple}
\end{figure}

\begin{remark}
\label{rk:contre-exemple}
From Theorem \ref{th:edge_dynamic}, we can easily build an example of a DH graph (and cograph)
having a vertex such that removing {\bf any} edge incident to this vertex  provides a non-DH graph. It is depicted on Figure \ref{fig:contre-exemple}. This example shows that an edge-only dynamic recognition algorithm for DH graphs  cannot be used to obtain a vertex-only one.
\end{remark}

\subsection{Proof of Theorem \ref{th:edge_dynamic}}

As mentioned above, our edge-modification characterization of DH graphs relies on the forbidden induced subgraph characterization: a graph is distance hereditary if and only if it does not contain a
cycle of length at least $5$ ($C_k$ for $k\geqslant 5$), a \emph{gem}, a \emph{house}, nor a \emph{domino}  (see Figure~\ref{fig:HDG}) as induced subgraph~\cite{BM86}. 

\begin{figure}[h]
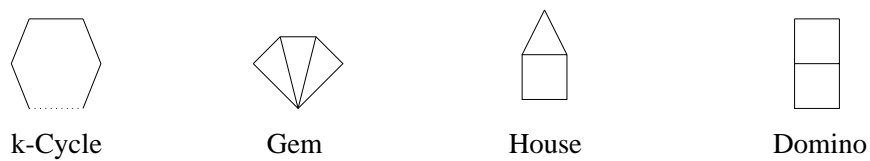

\inserfig{0.7}{forbidden_subgraphs_DH}
\caption{The gem, the  house, and the domino  are together with the cycles $C_k$ ($k\geqslant 5$) the forbidden induced subgraphs for DH graphs}
\label{fig:HDG}
\end{figure}
We first need to introduce some notations and to state some basic properties and technical lemmas. 
We call \emph{factor}, in a word $W$, a set of consecutive letters of $W$.
We call \emph{$S$-subword}
a word obtained from $W$ by deleting some letters different from $S$. 
As for the clique-star trees, we say that a word is \emph{reduced} if it does not contain the following factors: $KK$, $S_yS_y$, $S_xS_x$, $S_yS$ and $SS_x$. 
With a word $W=w_1w_2\dots w_r$ on $A$, one can associate a clique-star tree $P_W$ whose underlying tree is a path of ternary nodes with hanging leaves
(i.e. $P_W$ is a {caterpillar}). 
Say that the first and last extreme nodes respectively have leaves $x$ and $y$, chosen to be the \emph{extreme-leaves} of $W$.
Then, the nodes of $P_W$ are labelled by graphs (with three vertices) accordingly to the letters of $W$ w.r.t. $x$ and $y$, just the same way as $P(x,y)$ corresponds to $W(x,y)$, as defined in the beginning of this section. 
We will denote $G_W$ the DH graph defined as the accessibility graph of the clique-star tree $P_W$.
Let $W$ be a word on $A$ with extreme-leaves $x,y$. Assuming $xy\not\in E(G_W)$, the word $W$  is called \emph{forbidden for edge-insertion} if $G_W+xy$ is not a DH graph; otherwise $W$ is \emph{safe for edge-insertion}. Simlarly, assuming $xy\in E(G_W)$, the word $W$   is  called \emph{forbidden for edge-deletion} if  $G_W-xy$ is not a DH graph; otherwise $W$ is \emph{safe for edge-deletion}.
The proof of Theorem \ref{th:edge_dynamic} relies on a characterization of the safe words (for insertion and deletion) by forbidden excluded subwords.

\begin{lemma} \label{lem:word_subgraph}
Let $x$ and $y$ be two vertices of a distance hereditary graph $G$. Then there exists a graph-labelled tree of $G$ with a node $u$ neighboring leaves $x$ and $y$ such that $G_u$ is isomorphic to $G_{W(x,y)}$.
Hence, in particular, $G_{W(x,y)}$ is isomorphic to an induced subgraph of $G$.
\end{lemma}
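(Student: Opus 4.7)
The plan is to start from the split tree $ST(G)=(T,\mathcal{F})$ and transform it, via a succession of accessibility-preserving split and join operations, into a graph-labelled tree of $G$ in which the whole path $P(x,y)$ has been collapsed into a single node $u$ whose two leaf-neighbours are exactly $x$ and $y$. Since splits and joins preserve the accessibility graph, the result will still be a graph-labelled tree of $G$; I will then identify $G_u$ with $G_{W(x,y)}$.

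First, I would carry out a normalization step. Writing $P(x,y)=x,v_1,\dots,v_r,y$, each $v_i$ has degree at least three in $T$; two of its incident edges lie on $P(x,y)$ and the others are off-path. Whenever $v_i$ has more than one off-path incident edge, I would apply a split on $v_i$ that groups together all off-path edges (except, when the letter $w_i$ is $S$, the unique off-path edge carrying the centre of the star) into a single new off-path incident edge, leaving $v_i$ with exactly three incident tree-edges. A short case analysis over the four letter types ($K$, $S$, $S_x$, $S_y$) shows that such a split is valid, and that the label of the node that remains at position $v_i$ is either $K_3$ (when $w_i=K$) or a star on three vertices whose centre still lies on the same tree-edge as before (when $w_i\in\{S,S_x,S_y\}$). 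Consequently the word labelling $P(x,y)$ is preserved and still equals $W(x,y)$, while each path node is now ternary with exactly one off-path incident edge, leading to an off-path subtree $T_i$.

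I would then join every internal tree-edge of $P(x,y)$, merging $v_1,\dots,v_r$ into a single node $u$ whose $r+2$ incident edges lead to $x$, to $y$, and to the subtrees $T_1,\dots,T_r$. To identify $G_u$ with $G_{W(x,y)}$, I would consider the auxiliary graph-labelled tree obtained by replacing each off-path subtree $T_i$ by a single leaf $z_i$: by construction this is exactly the caterpillar $P_{W(x,y)}$, so its accessibility graph is $G_{W(x,y)}$. Performing the same sequence of joins on this auxiliary tree collapses it to a single internal node whose label, by the definition of the accessibility graph of a one-internal-node tree, is precisely $G_{W(x,y)}$. But the join operation on two adjacent nodes only depends on the two labels involved and on their common marker vertex, so the label obtained at $u$ is the same in the original and in the auxiliary tree. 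Hence $G_u\cong G_{W(x,y)}$, and the ``in particular'' clause follows directly from Corollary~\ref{cor:subgraph}. The main obstacle is the case analysis in the normalization step, where one must verify that the centre of a star label is not displaced by the split; in particular the $S$ case requires keeping the central off-path edge attached to $v_i$ rather than bundling it with the other off-path edges.
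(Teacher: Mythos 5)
Your overall strategy is exactly the paper's: split the nodes of $P(x,y)$ so that the path becomes a spine of ternary nodes, join the whole spine into a single node $u$ adjacent to the leaves $x$ and $y$, identify $G_u$ with $G_{W(x,y)}$, and invoke Corollary~\ref{cor:subgraph}. The auxiliary-caterpillar argument for the identification and the observation that a join depends only on the two labels and the shared marker are both fine.

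However, the step you yourself single out as the main obstacle is resolved the wrong way, and as written it would fail. For a node $v_i$ carrying the letter $S$, the centre lies on an off-path edge; if $v_i$ also has $k\geqslant 1$ further off-path extremity edges, your prescription is to bundle those $k$ edges while keeping the centre edge attached to $v_i$. When $k=1$ no such split even exists (Definition~\ref{def:split} requires both sides to have at least two vertices), and when $k\geqslant 2$ the split $\bigl(\{q_1,\dots,q_k\},\{c,p_1,p_2\}\bigr)$ leaves $v_i$ with \emph{four} incident edges: two path edges, the centre edge, and the new bundled edge. After the joins, $u$ then retains two marker vertices from $v_i$ (the centre $c$ and the bundle marker $q$, with $q$ pendant on $c$), so $G_u$ is $G_{W(x,y)}$ with extra pendant vertices rather than $G_{W(x,y)}$ itself; correspondingly, your auxiliary tree has two leaves hanging at that spine node and is not the caterpillar $P_{W(x,y)}$. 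The correct normalization is the opposite of what you propose: bundle \emph{all} off-path edges, centre included, into a single new off-path edge. The bipartition $\bigl(\{c,q_1,\dots,q_k\},\{p_1,p_2\}\bigr)$ is a valid split of the star, and the piece kept at $v_i$ is the star induced by $\{p_1,p_2\}$ together with a marker vertex taken in $N(\{p_1,p_2\})=\{c\}$, i.e.\ a ternary star whose centre is mapped to the unique off-path edge — so the letter is still $S$ and the node is ternary, after which the rest of your argument goes through.
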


\begin{proof}
By definition, the graph-labelled tree $P_{W(x,y)}$ is isomorphic to the graph-labelled tree
obtained from $P(x,y)$ by substituting all nodes in $P(x,y)$ with ternary nodes corresponding to the same letters. Hence,
node-splitting in $ST(G)$ all nodes belonging to $P(x,y)$, in such a way that the path from $x$ to $y$ is preserved in the tree structure and is now labelled by ternary nodes, yields a subtree isomorphic to $P_{W(x,y)}$.
Joining all the nodes of this subtree provides a node, adjacent to leaves $x$ and $y$, and whose label is isomorphic to $G_{W(x,y)}$. It follows from Corollary~\ref{cor:subgraph} that $G_{W(x,y)}$ is an induced subgraph of $G$.
\end{proof}

\begin{lemma} \label{lem:edge_CNS}
Let $x$ and $y$ be two vertices of a distance hereditary graph $G=(V,E)$. 
If $xy\not\in E$, the graph $G+xy$ is distance hereditary if and only if the word $W(x,y)$ is not forbidden for edge-insertion.
If $xy\in E$, the graph $G-xy$ is distance hereditary if and only if the word $W(x,y)$ is not forbidden for edge-deletion.
\end{lemma}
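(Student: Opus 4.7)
My plan is to leverage the correspondence between $G$ and the caterpillar graph $G_{W(x,y)}$ provided by Lemma~\ref{lem:word_subgraph}: toggling the edge $xy$ in $G$ reduces, via a suitable substitution inside a graph-labelled tree, to performing the same toggle on the canonical sample graph $G_{W(x,y)}$. Throughout, let $G\oplus xy$ denote $G+xy$ when $xy\notin E$ and $G-xy$ when $xy\in E$, and call $W(x,y)$ \emph{safe} in the corresponding sense.

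For the necessary direction I would argue by heredity. Assuming $G\oplus xy$ is distance hereditary, Lemma~\ref{lem:word_subgraph} identifies $G_{W(x,y)}$ with an induced subgraph of $G$ in which the extremities of the word correspond to $x$ and $y$ themselves. Hence $G_{W(x,y)}\oplus xy$ is an induced subgraph of $G\oplus xy$, and because the class of DH graphs is hereditary, this induced subgraph is itself distance hereditary; that is, $W(x,y)$ is safe.

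For the sufficient direction, assume $W(x,y)$ is safe. By Lemma~\ref{lem:word_subgraph} there exists a graph-labelled tree $(T',\mathcal{F}')$ of $G$ in which some node $u$ has the leaves $x$ and $y$ among its neighbours and carries the label $G_u\cong G_{W(x,y)}$; every other internal node of $T'$ inherits its clique or star label from $ST(G)$. Set $m_x=\rho_u(ux)$ and $m_y=\rho_u(uy)$, and form $(T',\mathcal{F}'')$ by toggling the edge $m_xm_y$ inside $G_u$. The central claim I would establish is that the accessibility graph of $(T',\mathcal{F}'')$ equals $G\oplus xy$. Once this is in hand, the hypothesis supplies a clique-star tree $(T_0,\mathcal{F}_0)$ for the modified label; substituting $(T_0,\mathcal{F}_0)$ for $u$ in $(T',\mathcal{F}'')$ can be realized as a sequence of splits followed by joins, both of which preserve the accessibility graph, and it produces a clique-star tree whose accessibility graph is $G\oplus xy$, witnessing that this graph is distance hereditary.

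The main obstacle will be the accessibility verification. The key observation is that $x$ and $y$ are themselves leaves adjacent to $u$, so the marker vertex $m_x$ (respectively $m_y$) corresponds to a tree-edge of $T'$ incident only to the leaf $x$ (respectively $y$); therefore the only path in $T'$ whose entry and exit markers at $u$ are exactly $m_x$ and $m_y$ is the direct $x$-to-$y$ path through $u$. A short case analysis over pairs of leaves $\alpha,\beta$ — according to whether their tree-path avoids $u$, passes through $u$ with an entry/exit marker pair distinct from $\{m_x,m_y\}$, or is precisely the length-two path from $x$ to $y$ — then shows that toggling the single edge $m_xm_y$ in $G_u$ toggles exactly the single adjacency $xy$ in the accessibility graph, as required.
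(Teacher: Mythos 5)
Your proposal is correct and follows essentially the same route as the paper: both reduce the question to the caterpillar graph $G_{W(x,y)}$ via Lemma~\ref{lem:word_subgraph}, observing that toggling $xy$ in $G$ amounts to toggling the edge $\rho_u(ux)\rho_u(uy)$ in the label of the node $u$ adjacent to the leaves $x$ and $y$, and that this affects no other accessibility relation. The only (cosmetic) difference is that you argue the necessity direction explicitly through heredity of the DH class and the sufficiency direction through re-substituting a clique-star tree for the modified label, whereas the paper packages both directions into the single statement that a graph is DH if and only if all labels of one of its graph-labelled trees are DH.
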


\begin{proof}
Assume $xy\not\in E$. By definition, the graph $G_{W(x,y)}+xy$ is DH if and only if 
$W(x,y)$ is not forbidden for edge-insertion.
We prove that $G+xy$ is DH if and only if $G_{W(x,y)}+xy$ is DH.
By Lemma \ref{lem:word_subgraph}, there exists a graph-labelled tree $(T,\mathcal{F})$ of $G$ containing a node $u$ such that $G_u$ is isomorphic to $G_{W(x,y)}$.
As leaves $x$ and $y$ are adjacent to the node $u$ of $T$, replacing $G_u$ with $G_{W(x,y)}+xy$ yields a graph labelled tree whose accessibility graph is $G+xy$.
As a graph $G$ is DH if and only if all labels in a graph-labelled tree of $G$ are DH, the result obviously follows.
The proof for edge-deletion is similar.
\end{proof}

\begin{lemma} \label{lem:subgraph}
Let $x$ and $y$ be two vertices of a distance hereditary graph $G$. 
Every connected induced subgraph $H$ of $G_{W(x,y)}$ with $x,y\in V(H)$
is isomorphic to some graph $G_{W_H}$ where $W_H$ is a $S$-subword  of $W(x,y)$. Conversely, every such graph $G_{W_H}$ is isomorphic to some such connected induced subgraph $H$.
\end{lemma}

\begin{proof}
Let $W=W(x,y)=w_1...w_r$, and let $\{x=z_0,z_1,\dots z_r,y=z_{r+1}\}$
be the set of vertices of $G_W$, such that the ordering $z_1,...,z_r$ corresponds to the ordering of leaves encountered from $x$ to $y$ in the caterpillar $P_W$.
Let $H$ be an induced subgraph of $G_W$ such that $V(H)=\{x,z_{i_1}\dots z_{i_k},y\}$ with $i_1<\dots < i_k$.
Since $P_W$ is a caterpillar, $H$ is connected if and only if 
for every bipartition $(A,B)$ of $V(H)$, such that $A=\{z_i\in V(H)\mid i\leqslant j<r+1\}$ and $B=\{z_i\in V(H)\mid 0<j<i\}$ for some $j$, $H$ contains an edge between some vertex of $A$ and some vertex of $B$.
By the definition of accessibility, such an edge exists if and only if none of the letters $w_j$ of $W$ such that $z_j\not\in V(H)$ is a $S$. It follows that $H$ is connected if and only if the word $W_H=w_{i_1}w_{i_2}...w_{i_{k}}$ is a $S$-subword of $W$.
Finally, as an edge exists between two vertices of $G_W[V(H)]$
if and only if the corresponding letters in $W$ can be joined
by a sequence of letters in $\{K,S_x,S_y\}$, we have that $G_W[V(H)]$ is isomorphic to $G_{W_H}$. 
Also, the converse is straightforward.
\end{proof}
\bigskip

Let us consider the DH graphs obtained by removing, resp. adding, an edge $xy$ to one of the DH forbidden induced subgraphs $H$ (cycles, gem, house or domino). 
It turns out that the split tree of each one is a caterpillar with ternary nodes (see Figure \ref{fig:subwords_insertion}, resp. Figure \ref{fig:subwords_deletion}). 
Hence, they are determined by their associated words denoted $W_{H-xy}(x,y)$, resp. $W_{H+xy}(x,y)$.

\begin{figure}

 
\inserfig{0.85}{subwords_insertion}
\caption{proof of Theorem \ref{th:edge_dynamic}, insertion case. Split trees of graphs $H-xy$ for $H$ a DH forbidden induced subgraph.  Only useful graphs, i.e. DH ones, are represented. In the table of insertion forbidden subwords, in comparison, repetitions are deleted, and symmetric words are added.}
\label{fig:subwords_insertion}
\end{figure}

\begin{figure}[h]
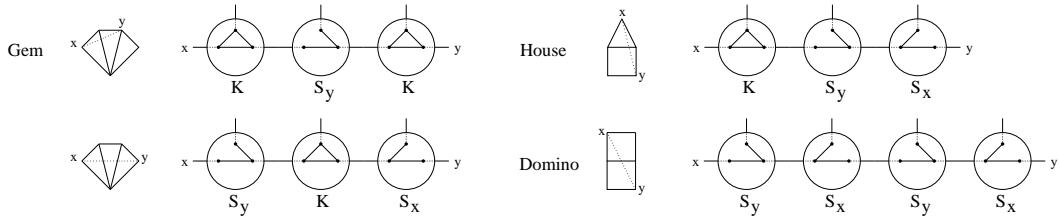

\inserfig{0.85}{subwords_deletion}
\caption{proof of Theorem \ref{th:edge_dynamic}, deletion case. Split trees of graphs $H+xy$ for $H$ a DH forbidden induced subgraph.  Only useful graphs, i.e. DH ones, are represented. In the table of 
deletion forbidden subwords, in comparison, repetitions are deleted, and symmetric words are added.}
\label{fig:subwords_deletion}
\end{figure}

\begin{lemma} \label{lem:subwords}
A word $W$ with extreme-leaves $x,y$ is forbidden for edge-insertion, resp. edge-deletion,
if and only if it has a $S$-subword of type $W_{H-xy}(x,y)$, resp.
$W_{H+xy}(x,y)$, for $H$ a distance hereditary
forbidden induced subgraph.
\end{lemma}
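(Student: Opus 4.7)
The plan is to reduce the statement to Lemma~\ref{lem:edge_CNS} combined with the forbidden induced subgraph characterization of distance hereditary graphs~\cite{BM86}, and then to apply Lemma~\ref{lem:subgraph} to translate between induced subgraphs of $G_W$ and $S$-maintained subwords of $W$. By Lemma~\ref{lem:edge_CNS}, a word $W$ with extremities $x,y$ is forbidden for edge-insertion if and only if $G_W+xy$ contains a forbidden induced subgraph $F$ in the family $\{C_k,k\geqslant 5\}\cup\{\text{gem, house, domino}\}$, and similarly $W$ is forbidden for edge-deletion if and only if $G_W-xy$ contains such an $F$.

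For the forward direction of the insertion case, I would first observe that any such $F\subseteq G_W+xy$ must use the new edge $xy$: otherwise $F$ would already be induced in $G_W$, contradicting the fact that $G_W$ is DH. Hence $x,y\in V(F)$ and the subgraph of $G_W$ induced on $V(F)$ is exactly $F-xy$. I would then verify that $F-xy$ is connected for every forbidden $F$: cycles minus one edge are paths, while the gem, house, and domino are $2$-edge-connected. With connectivity in hand, Lemma~\ref{lem:subgraph} provides an $S$-maintained subword $W'$ of $W$ with $G_{W'}\simeq F-xy$; by construction $W'$ is then a word of type $W_{H-xy}(x,y)$ listed in Figure~\ref{fig:subwords_insertion}.

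The converse direction is immediate from the converse part of Lemma~\ref{lem:subgraph}: if $W$ admits a subword of type $W_{H-xy}(x,y)$, then $G_W$ contains a connected induced subgraph isomorphic to $H-xy$, so adding the edge $xy$ turns this induced subgraph into an induced copy of $H$ inside $G_W+xy$, hence $G_W+xy$ is not DH. The deletion case runs along entirely symmetric lines: a forbidden $F$ in $G_W-xy$ cannot use the absent edge $xy$ but must contain both $x$ and $y$ (else $F$ would already be induced in $G_W$), so $F+xy$ is a connected induced subgraph of $G_W$ and Lemma~\ref{lem:subgraph} yields the desired subword of type $W_{H+xy}(x,y)$.

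The main technical obstacle is the finite case enumeration underpinning Figures~\ref{fig:subwords_insertion} and~\ref{fig:subwords_deletion}: for every forbidden $H$ and every pair $x,y$ of designated vertices of $H$ joined by an edge (for insertion) or not joined (for deletion), one must compute the split tree of $H-xy$ (resp.\ $H+xy$), verify that it is a caterpillar, and read off the corresponding word in the alphabet $\{K,S,S_x,S_y\}$. Modulo the obvious $S_x\leftrightarrow S_y$ symmetry (swapping the roles of $x$ and $y$) and the fact that longer cycles produce subwords which already contain the short-cycle subwords, this enumeration is finite and the resulting list is exactly the one displayed in the two figures.
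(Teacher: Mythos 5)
Your proposal is correct and follows essentially the same route as the paper: reduce non-distance-hereditariness of $G_W\pm xy$ to the presence of a forbidden induced subgraph containing both $x$ and $y$, then invoke Lemma~\ref{lem:subgraph} to translate such connected induced subgraphs into $S$-maintained subwords, with the finite enumeration of the words $W_{H\mp xy}(x,y)$ delegated to the tables. Your explicit check that $H-xy$ remains connected for each forbidden $H$ is a small point the paper leaves implicit, but it does not change the argument.
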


\begin{proof}
We prove the statement for edge-insertion. Edge-deletion case is similar.
By definition, a word $W$, whose extreme-leaves are $x$ and $y$, is forbidden for edge-insertion if and only if $G_W+xy$ is not DH, i.e. $G_W+xy$ contains one of the DH forbidden induced subgraphs, say $H$, which also contains vertices $x$ and $y$ (since $G_W$ is DH). 
Now by Lemma \ref{lem:subgraph}, $H-xy$ is a connected induced subgraph of $G_W$ with vertices $x$ and $y$ if and only if
$H-xy=G_{W'}$ for some $S$-subword $W'$ of $W$,
that is if and only if the word $W'=W_{H-xy}(x,y)$ defined by the caterpillar split tree of $H-xy$
is a $S$-subword of $W$.
\end{proof}

\bigskip
For each  DH forbidden induced subgraph $H$ such that $H-xy$ (resp. $H+xy$) is DH, we obtain a list of \emph{(edge-)insertion forbidden subwords},
resp. \emph{(edge-)deletion forbidden subwords}, of type $W_{H-xy}(x,y)$, resp.
$W_{H+xy}(x,y)$. They are given by the following tables.

\begin{center}
\begin{tabular}{|c||c|c|c|c|}
\hline
subgraphs & $C_k$ ($k\leqslant 5$)& ~~~~Gem~~~~ & ~~~House~~~ & ~Domino~ \\
\hline
\hline
& & $SKS_x$ &  $SS_yR$ &  \\
insertion &  $S\ldots S$ & $S_yKS$ & $S_yS_xS$ &  $SS_yS_xS$  \\
forbidden  &  with & $KS_xS$ & $KSS$ &  $S_yS_xSS$ \\
subwords &$k\geqslant 3$ S's & $SS_yK$ & $SSK$ & $SSS_yS_x$ \\
& & $KSK$ & $SKS$ &  \\
\hline
\end{tabular}
\end{center}

\begin{center}
\begin{tabular}{|c||c|c|c|}
\hline
subgraphs & ~~~~Gem~~~~&  ~~~House~~~ & ~Domino~ \\
\hline
\hline
deletion & $KS_yK$ &  $KS_yS_x$ &     \\
forbidden  & $KS_xK$ &   $S_yS_xK$ &  $S_yS_xS_yS_x$  \\
subwords & $S_yKS_x$ &  &  \\
\hline
\end{tabular}
\end{center}

\bigskip

\noindent{\it Proof of Theorem \ref{th:edge_dynamic}:}
By Lemma \ref{lem:edge_CNS} and Lemma \ref{lem:subwords},
it remains to show that no $S$-subword of $W(x,y)$ belongs to the list of edge-insertion (resp. edge-deletion) forbidden subwords if and only if $W(x,y)$ is one of words described in condition 1 (resp. condition 2) of the theorem. Observe that the words of conditions 1 and 2 do not contain any forbidden words. Let us prove the converse.

\begin{enumerate}
\item 
Assume that no $S$-subword of $W(x,y)$ belongs to the list of forbidden subwords for edge-insertion.

Notice that $W(x,y)$ contains at most two $S$'s otherwise it would contain a forbidden subwords corresponding to the cycles $C_k$, $k\geqslant 5$.

First consider the case $W(x,y)$ contains two $S$'s. By the $KSS$, $SSK$, $SKS$ House's forbidden subwords, $W(x,y)$ has no $K$ letter. By the Domino's forbidden subwords $W(x,y)$ is of the form $(S_x)SS(S_y)$. More precisely, as $W(x,y)$ is reduced, it does not contain the factors $S_xS_x$ or $S_yS_y$ and if a factor with no $S$ contains a $S_x$ (resp. a $S_y$), then $S_x$ (resp. $S_y)$ has to be the first (resp. last) letter of that factor. It follows that $W(x,y)$ is of the form $(S_x)(S_y)S(S_x)(S_y)S(S_x)(S_y)$. But again since $W(x,y)$ is reduced, it does not contain the factors $S_yS$ or $SS_x$. Thereby $W(x,y)$ is of the form $(S_x)SS(S_y)$.

Let us now consider the case $W(x,y)$ contains only one $S$. Hence $W(x,y)$ is of the form $wSw'$ where $w$ and $w'$ are reduced words on $\{K,S_y,S_x\}$. Then, by the $SS_yS_x$, $S_yS_xS$ House's forbidden subwords, $S_yS_x$ is not a $S$-subword of $w$ and $w'$. By the $SS_yK, S_yKS$ Gem's forbidden subword,  $KS_x$ is not a $S$-subword of $w$ and $w'$ neither. It follows that $w$ and $w'$ can only be the words $(S_x)(K)(S_y)$. More precisely if $w$ or $w'$ contains a $S_x$ (resp. $S_y$), then $S_x$ (resp. $S_y$) has to be the first (resp. last) letter. Moreover by the $KSK$ Gem's forbidden subword, at most one word among $w$ and $w'$ contains a $K$ letter. Finally, since $W(x,y)$ is reduced, it does not contain $S_yS$ or $SS_x$ as factors. Thereby $W(x,y)$ is of the form $(S_x)(K)S(S_y)$ or $(S_x)S(K)(S_y)$.

\item 
Assume that no $S$-subword $W(x,y)$ belongs to the list of forbidden subwords for edge-deletion.

First $W(x,y)$ contains at most one letter $K$.
Otherwise, since it is reduced and contains no factor $KK$, it would contain a $KS_yK$ or $KS_xK$ Gem's excluded subword.

Assume that $W(x,y)$ contains one letter $K$. 
Hence $W(x,y)$ is of the form $xKx'$ where
$x$ and $x'$ are reduced words on $\{S_y,S_x\}$.
Then, by the $KS_yS_x$, $S_yS_xK$ House's excluded subwords, 
$x$ and $x'$ must be of the form $(S_x)(S_y)$.
Precisely, if $x$ or $x'$ contains a $S_x$, resp. $S_y$, then $S_x$, resp. $S_y$, must be the first letter, resp. last. 
Moreover, by the  $S_yKS_x$ Gem's excluded subword, 
if $x$ contains a $S_y$, resp. $x'$ contains a $S_x$, 
then $x'$ does not contain a $S_x$ letter, resp. $x$ does not contain a $S_y$.
Hence $W(x,y)$ is of the form $(S_x)(S_y)K(S_y)$ or $(S_x)K(S_x)(S_y)$.

Assume $W(x,y)$ does not contains the letter $K$.
Then, by the $S_yS_xS_yS_x$ Domino's excluded subword, 
$W(x,y)$ must be of the form $(S_x)(S_y)(S_x)(S_y)$,
where any letter in  brackets can be deleted if it gives a reduced word, that is of the form $(S_x)S_yS_x(S_y)$ or $(S_x)(S_y)$.
\endproof
\end{enumerate}

\subsection{Edge-modification in cographs.}
\label{sub:edge-cograph}

As already mentioned, cographs are $P_4$-free graphs (see Figure \ref{fig:p4}).
The split tree of a $P_4$ on vertices $\{x,y,a,b\}$ is formed by two adjacent star nodes,
hence it is associated with the word $W=SS$, $S_xS_y$, $S_xS$ or $SS_y$ depending on which leaves $x$ and $y$ correspond to.
Adapting Theorem \ref{th:edge_dynamic} and Corollary \ref{th:cor_edge_dynamic} leads to a similar characterization and a similar constant time algorithm, 
equivalent to the one given in \cite{SS04} in terms of cotrees.

The characterization and algorithm for connected cographs are obtained simply by replacing, in Theorem \ref{th:edge_dynamic} and Corollary \ref{th:cor_edge_dynamic},
the list of possible words $W(x,y)$ and respective transformations of the split tree, by the ones given in the following table.

\medskip
{
\noindent
\hfill{
\begin{tabular}{|c|c|}
\hline
\multicolumn{2}{| c |}{
edge-insertion $\longrightarrow$
}\\
\multicolumn{2}{| c |}{
$\longleftarrow$ edge-deletion
}\\
\hline

$\ \ \ SK\ \ \ $				& $S_yK$ \\ 
$KS$				& $KS_x$ \\ 
$S$				& $K$ \\ 

\hline
\end{tabular}
}\hfill
}

\medskip
Using the transformation linking the split decomposition to the modular decomposition, one can check that the above words correspond to the cotree configurations identified in~\cite{SS04} that allow edge-insertion or edge-deletion in a cograph.

\begin{theorem}
There exists an edge-modification fully dynamic recognition algorithm for connected cographs, maintaining  the split tree, with complexity $O(1)$ per edge-insertion or edge-deletion.
\end{theorem}

\begin{proof}
Let $G$ be a cograph and $x,y$ be two vertices of $G$. Since a cograph is DH, the necessary conditions of Theorem \ref{th:edge_dynamic} apply to $G$, that is: $W(x,y)$ belongs to the lists of words provided by Theorem \ref{th:edge_dynamic}.
Moreover the word $W(x,y)$ cannot contain $SS$, $S_xS$, $SS_y$ or $S_xS_y$ as a $S$-subword
(otherwise, by Lemma \ref{lem:accessible}, a $P_4$ would exist in $G$, which is a cograph: contradiction).
It is straightforward to check that the only words satisfying these properties
in the lists given in Theorem \ref{th:edge_dynamic} are those given in the actual theorem, namely:
\begin{enumerate}
\item if $xy\notin E$ and $G+xy$ is a cograph, then $W(x,y)$ is either $S$, $SK$ or $KS$;
\item if $xy\in E$ and $G-xy$ is a cograph, then $W(x,y)$ is either $K$, $S_yK$ or $KS_x$.
\end{enumerate}

The transformations of the edge-modification algorithm of cograph (see the above table) are special cases
of  the ones described and analysed in Corollary \ref{th:cor_edge_dynamic}. So the time complexity follows. It remains to check that  the transformations of the split tree, described in the above table, do not create a $P_4$ in the edge-modified graph.
\begin{itemize}
\item Assume that $W(x,y)=SK$ and that $G+xy$ has an induced $P_4$ (the case $W(x,y)=KS$ is symmetric). Let $\{a,b,x,y\}$ be the vertices of that $P_4$. As the split tree of $G$ is partitioned into the path $W(x,y)$ and two subtrees respectively attached to the $S$ node (resp. $K$ node) of $P(x,y)$ and  disjoint from $P(x,y)$, the vertices $a$ and $b$ cannot be leaves of the same subtree. Let $T_a$ be the subtree containing the leaf $a$ and $T_b$ be the subtree containing the leaf $b$. Let us note that since $W(x,y)=SK$,
the three vertices $y$, $a$ and $b$ induce a clique and none of its edges is modified in $G+xy$, contradicting the fact that $\{a,b,x,y\}$ induces a $P_4$ in $G+xy$.

\item Assume that $W(x,y)=S_yK$ and that $G-xy$ has an induced $P_4$ (the case $W(x,y)=KS_x$ is symmetric). As before let $\{a,b,x,y\}$ be the vertices of that $P_4$ and let $T_a$ (resp. $T_b$) be the maximal tree of $ST(G)-W(x,y)$ containing the leaf $a$ (resp. $b$).  This again implies that $\{a,b,y\}$ induces a clique that is not modified by the removal of $xy$, contradicting the fact that $\{a,b,x,y\}$ induces a $P_4$ in $G-xy$.

\item Assume $W(x,y)=S$ and thus $xy\notin E$ (resp. $W(x,y)=K$ and thus $xy\in E$), then $x$ and $y$ are false (resp. true) twins in $G$ ($xa$ is an edge if and only if $ya$ is an edge). This remains unchanged by the insertion (resp. deletion) of $xy$: there is no $P_4$ in $G+xy$ (resp. $G-xy$) containing $x$ and $y$: $G+xy$ (resp. $G-xy$) is a cograph.%
\end{itemize}%
\end{proof}

\subsection{Edge-modification in $3$-leaf powers.}
\label{sub:edge-3leaf}

As 3-leaf power are DH, the edge-insertion must satisfy the properties of the edge-insertion in DH graph.
As a corollary of Theorem \ref{th:3-leaf}, the split tree of a 3-leaf power graph
does not contain a path of three nodes labelled successively by a star, a clique, and a  star.
Hence, the words  $SKS_x$, $S_yKS$, $S_xKS_x$, $S_yKS_y$ and $SKS$, 
have to be deleted from the list of Theorem \ref{th:edge_dynamic}.
This simplification turns out to be not sufficient,
conditions on the degrees and on adjacent nodes have to be added.

The characterization and algorithm for connected 3-leaf power graphs are obtained by:

\begin{enumerate} 

\item replacing, in Theorem \ref{th:edge_dynamic} and Corollary \ref{th:cor_edge_dynamic},
the list of possible words $W(x,y)$ and respective transformations of the split tree, by the ones given in the following table.
\smallskip

{
\noindent
\hfill{
\begin{tabular}{|c|c|}
\hline
\multicolumn{2}{| c |}{
edge-insertion $\longrightarrow$
}\\
\multicolumn{2}{| c |}{
$\longleftarrow$ edge-deletion
}\\
\hline

$(S_x)SK$				& $(S_x)S_yK$ \\ 
$KS(S_y)$				& $KS_x(S_y)$ \\ 
$(S_x)S$				& $(S_x)K$ \\ 
$S(S_y)$				& $K(S_y)$ \\ 
\hline
\end{tabular}
}\hfill
}

\item adding supplementary conditions on the safe words:

\begin{enumerate}

\item If $W(x,y)\in\{(S_x)SK, (S_x)S_yK,KS(S_y), KS_x(S_y)\}$, then
the letter  corresponding to a star and which is not in  brackets must come from a ternary star node $u$ of $ST(G)$ 
such that the neighbor of $u$ not in $P(x,y)$ is either a clique or a leaf.

\item If $W(x,y)\in\{S_xS,SS_y\}$, then
the node $u$ corresponding to letter $S$ must come from a ternary star node of $ST(G)$ such that the neighbor of $u$ not in $P(x,y)$ is either a clique or a leaf.

\item If $W(x,y)=K$, then
the corresponding clique node $u$ is either ternary and adjacent to a star node which is not oriented towards $u$,
or is not ternary, but the unique node of $ST(G)$.

\end{enumerate}
\end{enumerate}

\begin{theorem}
There exists an edge-modification fully-dynamic recognition algorithm for connected 3-leaf power graphs, maintaining  the split tree, with complexity $O(1)$ per edge-insertion or edge-deletion.
\end{theorem}

\begin{proof}
Since a 3-leaf power graph is DH, the necessary conditions of Theorem \ref{th:edge_dynamic} remain necessary for 3-leaf power graphs, 
that is: $W(x,y)$ is in the list of words provided by Theorem \ref{th:edge_dynamic}.
We recall that, by Theorem \ref{th:3-leaf}, the split tree of a DH graph is the split tree of a 3-leaf graph
if and only if the set of star nodes form a connected subtree and every star is oriented towards a clique or a leaf.
Hence, the word $W(x,y)$ cannot contain a letter $K$ between two letters corresponding to stars $S$, $S_x$, or $S_y$.
It is straightforward to check that the only words satisfying these properties
in the lists given in Theorem \ref{th:edge_dynamic} are those given in the above table plus
the associated words $(S_x)SS(S_y)$ and $(S_x)S_yS_x(S_y)$ (which are obtained from each other by respectively the insertion of $xy$ and the deletion of $xy$).
These latter two words cannot be considered in the list for 3-leaf power graphs,
since, in $(S_x)S_yS_x(S_y)$, two star nodes are oriented towards a star,
which would contradict Theorem \ref{th:3-leaf}.
The transformations provided by the actual algorithm are particular cases
of  the ones provided in Corollary \ref{th:cor_edge_dynamic}.

So it remains to check that the supplementary conditions on the degrees are necessary and sufficient
to have that the graph modified by these transformations is still a $3$-leaf power.

\begin{itemize}
\item Assume $W(x,y)=(S_x)SK$ is transformed into $(S_x)S_yK$ under the insertion of $xy$. 
Let $u$ be the node of $P(x,y)$ which gives the letter $S$ in  $W(x,y)$. If $u$ is not ternary, it has to be node-split into two star nodes $u'$ and $v$, with node $u'$ still belonging to $P(x,y)$ (see step 2.a in algorithm of Corollary~\ref{th:cor_edge_dynamic}). Then in $ST(G+xy)$, node $v$ is made adjacent to the a clique node of $P(x,y)$ (see step 2.b) in algorithm of Corollary~\ref{th:cor_edge_dynamic}). This is in contradiction with Theorem \ref{th:3-leaf} since that clique node neighbors two star nodes. 
Suppose now that the $S$ node $u$ is ternary in $ST(G)$ and let $T_a$ (resp. $T_b$) be the maximal tree of $ST(G)-P(x,y)$ attached to $u$ (resp. the $K$ node of $P(x,y)$). It follows that $ST(G+xy)$ satisfies the conditions of Theorem \ref{th:3-leaf} since $T_a$ is a clique or a leaf,
and the tree $T_b$ a leaf.

\item Assume $W(x,y)=(S_x)S_yK$ is transformed into $(S_x)SK$ under the deletion of $xy$. Let $u$ be the node of $P(x,y)$ which gives the letter $S_y$ in $W(x,y)$. As in the previous case, $u$ has to be a ternary node. Otherwise, it has to be node-split into two star nodes $u'$ and $v$, with $u'$ still belonging to $P(x,y)$. Again by the transformation algorithm described in Corollary~\ref{th:cor_edge_dynamic}, the clique node of $P(x,y)$ in $ST(G-xy)$ is neighboring two star nodes, contradicting Theorem \ref{th:3-leaf}. Finally,  by Theorem \ref{th:3-leaf}, the node of $ST(G)-P(x,y)$ adjacent to $u$ is a clique or a leaf and the node of $ST(G)-P(x,y)$ adjacent to the clique node of $P(x,y)$ is a leaf. It follows that $ST(G-xy)$ satisfies the conditions of Theorem \ref{th:3-leaf}.

\item The cases  $W(x,y)=KS(S_y)$ and $W(x,y)=KS_x(S_y)$ are symmetric to the previous ones.

\item Assume $W(x,y)=S_xS$ is transformed into $S_xK$ under the insertion of $xy$.  
The same arguments as above imply that the node giving letter $S$ is ternary (and hence has its centre adjacent to a clique or a leaf),
otherwise a clique would appear between two stars while inserting $xy$, contradicting Theorem \ref{th:3-leaf}.

\item Assume $W(x,y)=S_xK$ is transformed into $S_xS$ under the deletion of $xy$. Then
$ST(G-xy)$ necessary satisfies the conditions of Theorem \ref{th:3-leaf} since the clique node in $P(x,y)$ is adjacent to  a leaf by Theorem \ref{th:3-leaf} condition 2.

\item The cases  $W(x,y)=SS_y$ and $W(x,y)=KS_y$ are symmetric to the previous ones.

\item  Assume $W(x,y)=S$ is transformed into $K$ under the insertion of $xy$. Let $u$ be the node of $ST(G)$ which gives the letter $S$ in $W(x,y)$ and let $v$ be the neighbor of $u$ such that $\rho_u(uv)$ is the centre of the star $G_u$. By Theorem \ref{th:3-leaf}, $v$ is either a clique node of a leaf. If $u$ is a ternary node, then $G+xy$ is a clique and hence a $3$-leaf power. Otherwise, $u$ has to be node-split into two star nodes $u'$ and $v$ (as in the previous cases). In $ST(G+xy)$, the node $u'$ neighboring the leaves $x$ and $y$ is changed into a clique node. It follows that $ST(G+xy)$ satisfies the conditions of Theorem \ref{th:3-leaf}.

\item Assume $W(x,y)=K$ is transformed into $S$ under the deletion of $xy$.
If the clique node $u$ in $P(x,y)$ is not ternary, then all its neighbors in $ST(G)$ have to be leaves. Assume $u$ neighbors a star node $w$. As the edge-modification algorithm splits $u$ into two clique nodes $u'$ and $v$ and then change $u'$ into a star, the clique node $v$ would neighbor two star nodes in $ST(G-xy)$, contradicting Theorem \ref{th:3-leaf}.
So assume $u$ is ternary, then its third neighbor $v$ distinct from $x$ and $y$ is a leaf or a star. If $v$ is a star and $\rho_v(uv)$ is the centre of the star $G_v$, then the conditions of Theorem \ref{th:3-leaf} are not satisfied. Otherwise, it is clear that $ST(G-xy)$ satisfies  the conditions of Theorem \ref{th:3-leaf}.
\end{itemize}

Finally, we just have to check that the supplementary conditions can be tested in constant time.
The fact that a node of $P(x,y)$ is ternary or not can be checked in constant time.
When the node is ternary, the fact that the adjacent node not in $P(x,y)$ is a clique, or a leaf,
or a star oriented towards the clique, is also constant time
by checking the type of this adjacent node, and in the last case,
by testing whether the centre marker vertex of the star is mapped to a tree-edge incident to a ternary node.
In the last case where a clique is not ternary, testing if all other nodes of $ST(G)$ are leaves
is done simply by testing if $G$ is a clique.
\end{proof}


\bibliographystyle{plain}


\appendix

\section{Appendix}


For self-containment 
of the paper, we provide below a direct proof of Theorem \ref{th:cun}
in the setting of graph-labelled trees. 
It relies 
on next Proposition \ref{prop:split_list}, a result already underlying in \cite{Cun82}, somehow the converse of Lemma \ref{lem:split-edge}, 
and providing also a 
proof 
%
to the well-known fact that the splits of the graph form a bipartitive~family. 

\begin{proposition} \label{prop:split_list}
Let $(T,\mathcal{F})$ be a reduced graph-labelled tree with prime and degenerate labels obtained by split decomposition of a connected graph $G=(V,E)$.
Then every split of the graph $G$ is the bipartition induced by removing a tree-edge
of $T'$, where $T'$ is obtained from $(T,\mathcal{F})$ by at most one node-split of a degenerate node.
%
\end{proposition}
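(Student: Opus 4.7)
The plan is to perform a case analysis based on the relative positions of the subtrees $T(A)$ and $T(B)$ in $T$. Since $A \cup B = V$, these two subtrees together span the entire leaf set, and their intersection $I = T(A) \cap T(B)$ (a subtree, possibly empty) describes where the split $(A,B)$ is ``located'' in $T$.

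In the first case $I = \emptyset$, the subtrees $T(A)$ and $T(B)$ are separated by a unique tree-edge $e$ of $T$ (necessarily non-incident to a leaf, since both subtrees extend beyond $e$). By Lemma \ref{lem:split-edge}, removing $e$ induces a split $(L_1, L_2)$ of $G$ with $A \subseteq L_1$ and $B \subseteq L_2$ (after renaming). Since $V = A \cup B = L_1 \cup L_2$, equality holds, and $e$ directly realizes $(A,B)$; then $T' = T$ works, with zero split operations performed.

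In the second case $I \neq \emptyset$, the main claim is that $I$ consists of a single node $v$. Granting this, every neighbor $u$ of $v$ determines a subtree $T_u$ whose leaves all lie in $A$ or all lie in $B$ (by minimality of $T(A)$ and $T(B)$), and both colors appear among the subtrees at $v$. This induces a nontrivial bipartition $\pi_v$ of $V(G_v)$. I would then show that $\pi_v$ is a split of $G_v$: given markers $m,m'$ in opposite parts of $\pi_v$ with $m$ having a neighbor in the opposite part (and similarly $m'$), Lemma \ref{lem:accessible} supplies representative leaves in the corresponding subtrees which, by accessibility through $v$, witness $A$-$B$ edges in $G$; the split property of $(A,B)$ then forces the required cross-adjacencies, which Corollary \ref{cor:neighbor-node} translates back into the edge $mm' \in E(G_v)$. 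Since $G_v$ is prime or degenerate and prime graphs have no splits, $G_v$ must be degenerate. Splitting $v$ along $\pi_v$ produces $T'$ with a new tree-edge $e'$ whose removal realizes exactly $(A,B)$, so $T'$ is obtained from $T$ by a single split on a degenerate node, as required.

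The hard part will be establishing that $I$ is a single node when it is nonempty. I would argue by contradiction: suppose $I$ contains a tree-edge $uv$, so that both sides of $T - uv$ carry leaves from both $A$ and $B$. By Lemma \ref{lem:split-edge}, $(L_u, L_v)$ is itself a split of $G$ that crosses $(A, B)$. Using Lemma \ref{lem:accessible}, I would pick representatives in each of the four sets $L_u \cap A$, $L_u \cap B$, $L_v \cap A$, $L_v \cap B$, and then combine the split conditions imposed on $G$ by both $(A,B)$ and $(L_u, L_v)$. The resulting adjacency constraints force, via Corollary \ref{cor:neighbor-node}, very specific structure on the labels $G_u$ and $G_v$ (essentially imposing that the markers of $G_u$ and $G_v$ at the edge $uv$ admit a nontrivial bipartition respecting the split). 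Extracting a contradiction amounts to showing that this structure either enables a clique-join or star-join across $uv$ (violating reducedness of $T$) or gives a nontrivial split inside the label of $u$ or $v$ (violating the prime/degenerate hypothesis).
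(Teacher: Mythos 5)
Your overall architecture is sound and in fact a little cleaner than the paper's: by intersecting $T(A)$ with $T(B)$ (rather than the subtrees spanned by $N(B)$ and $N(A)$, as the paper does), your case $I=\emptyset$ really does need no split operation, and your case $I=\{v\}$ correctly reduces to showing that the induced bipartition of the markers of $G_v$ is a split of $G_v$ — that part of your sketch (Lemma~\ref{lem:accessible} to get accessible representatives in each subtree at $v$, then the complete bipartite structure between $N(A)$ and $N(B)$ to transfer adjacency back into $G_v$) can be made to work, though you should note explicitly that $v\in T(A)\cap T(B)$ forces at least two $A$-subtrees and two $B$-subtrees at $v$, so that both parts of $\pi_v$ have size at least $2$ as Definition~\ref{def:split} requires.

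The genuine gap is the step you yourself flag as the hard part: you never actually prove that $I$ cannot contain a tree-edge $uv$, and the strategy you outline for it does not go through as stated. Two concrete problems. First, Lemma~\ref{lem:accessible} hands you one accessible leaf per subtree hanging off $u$ or $v$, with no control over whether that leaf lies in $A$ or in $B$, and no guarantee that it lies in $N(B)$ or $N(A)$; since the split condition on $(A,B)$ only constrains adjacencies between $N(B)$ and $N(A)$, representatives chosen merely from the four corner sets $L_u\cap A$, $L_u\cap B$, $L_v\cap A$, $L_v\cap B$ may carry no adjacency information at all, and combining ``the split conditions imposed by both $(A,B)$ and $(L_u,L_v)$'' yields nothing for them. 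This is precisely why the paper works from the outset with $A'=N(B)$ and $B'=N(A)$, which are completely joined to each other, so that accessibility paths between any $A'$-leaf and any $B'$-leaf exist automatically. Second, your proposed contradiction ``a nontrivial split inside the label of $u$ or $v$, violating the prime/degenerate hypothesis'' is not a contradiction: degenerate labels (cliques and stars) have many splits, and the hypothesis allows labels to be degenerate. The only available contradiction is with reducedness, and reaching it requires showing that \emph{both} $G_u$ and $G_v$ are forced to be cliques (or a joinable star pair); this is the content of the paper's first assertion, whose proof occupies a full paragraph of case analysis using connectivity of $G$ and of the labels. As written, your argument for the central claim is a description of what remains to be shown rather than a proof of it.
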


\begin{proof}
Let $(A,B)$  be a split of $G$, with $V=A\uplus B$,
$A'\subseteq A$, $B'\subseteq B$, and all edges between $A$ and $B$
having their extremities in $A'$ and $B'$.
We consider $V$ as the set of leaves of $T$.
For a node $N$ of $T$ and a vertex $v$ of the label $G_N\in\mathcal{F}$ of $N$,
we say that $v$ is $A'$-accessible, resp. $B'$-accessible, 
if there exists $u\in A'$, resp. $u\in B'$, such that $N$ is $u$-accessible and
$v$ is the marker-vertex of $G_N$ associated with the tree-edge of $T$ in the path from $N$ to $u$.

First, we prove the following assertion: if $N$ is a node of $T$ with label $G_N$
and three  vertices $u,v,w$ of $G_N$ are such that $u$ is both $A'$-accessible and $B'$-accessible,
$v\not= u$ is $A'$-accessible and $w\not= u$ is $B'$-accessible, then $N$ is a clique and every vertex of $G_N$ is either $A'$-accessible or $B'$-accessible.
We can assume $v\not=w$. Indeed, if $v=w$, then let $x$ be a vertex of $G_N$ adjacent to $u$ or $v=w$ (it exists by connectivity Lemma \ref{lem:connected}).
By Lemma \ref{lem:accessible}, there exists a leaf $x'$ of $T$ such that $N$ is $x'$-accessible
and $x$ is associated with the tree-edge of $T$ in the path between $N$ and $x'$ in $T$.
Then $x'$ is adjacent to a vertex in $A'$ and a vertex in $B'$, hence it belongs to $A'\cup B'$,
hence $x$ is $A'$-accessible or $B'$-accessible. So we can change $v$ or $w$ into $x$, and we assume now that $v\not=w$ in the above hypothesis.
Since $V=A\uplus B$ is a split of $G$, there is an edge in $G_N$ 
between every $A'$-accessible vertex of $G_N$ and every $B'$-accessible vertex of $G_N$.
Assume there exist a vertex $y$ of $G_N$ which is not $A'$-accessible nor $B'$-accessible.
By Lemma \ref{lem:accessible}, there exists a leaf $z$ of $T$ such that $N$ is $z$-accessible
and $y$ is associated with the tree-edge of $T$ in the path between $N$ and $z$ in $T$.
Such a leaf $z$ belongs either to $A\setminus A'$ or to $B\setminus B'$, otherwise there would be an edge
between $y$ and $u$ in $G_N$ and $z$ would be adjacent to a vertex in $A'$ and to a vertex in $B'$, hence
it would belong to $A'\cup B'$ and $y$ would be $A'$-accessible or $B'$-accessible.
Assume for example that $z$ belongs to $A\setminus A'$. Then the vertex $y$ cannot be adjacent to a $B'$-accessible vertex in $G_N$.
Then we consider the bipartition of vertices of $G_N$ into 
1) all $A'$-accessible vertices except $u$, plus the vertex $y$, plus all other vertices 
which are not $A'$-accessible nor $B'$-accessible and not adjacent to a $B'$-accessible vertex in $G_N$,
and 2) all $B'$-accessible vertices including $u$, plus other vertices 
which are not $A'$-accessible nor $B'$-accessible and not adjacent to a $A'$-accessible vertex in $G_N$.
The two parts of this bipartition have at least two elements, and thus it forms a split of $G_N$.
This implies $G_N$ is degenerate, hence it is a clique since it contains a $K_3$.
And this implies that a vertex $y$ which is not $A'$-accessible nor $B'$-accessible does not exist.

Now consider the subtrees $T[A']$ and $T[B']$ of $T$ spanned respectively by $A'$ and $B'$.
We prove the assertion: $T[A']$ and $T[B']$ have at most one common node.
Assume that these two subtrees have at least two common nodes of $T$. Then they have a common path of $T$, and then
they have a common tree-edge $e$ and two adjacent common nodes $R$ and $S$, which are the extremities of $e$.
For each of these nodes, by definition of $T[A']$ and $T[B']$, 
there exists a leaf in $A'$ and a leaf in $B'$ such that the path from the node to the leaf does not contain $e$.
Since all leaves in $A'$ are accessible from all leaves in $B'$, each node $R$ and $S$ has an $A'$-accessible vertex and a $B'$-accessible vertex,
not associated with $e$. That is: both $R$ and $S$ satisfy the hypothesis of the previous assertion.
Hence both $R$ and $S$ are cliques, a contradiction with the fact that the graph-labelled tree is reduced.

So, there are two possible cases:  either 1) there exists  a tree-edge $e$ of $T$ such that all leaves in $A'$ are in one connected component of $T-e$,
and all leaves in $B'$ are in the other connected component, 
or 2) there exists a node $N$ such that the connected components of $T-N$ contain either leaves in $A'$ or leaves in $B'$ but not both, and at least two connected components contain leaves in $A'$ resp. $B'$ (otherwise a tree-edge would satisfy the case 1 property). 

In the first case, we denote $N_A$, resp. $N_B$ the extremity of the tree-edge $e$ on the side of $A'$, resp. $B'$, vertices. 
If the bipartition of $V$ induced by $T-e$ is $V=A\uplus B$, then we have the result.
Otherwise, there exists for example a leaf $a$ in $A$ in the connected component of $T-e$ containing leaves in $B'$.
Consider the accessibility graph $H$ of this connected component, where the marker-vertex associated with $e$ in $N_B$ is a vertex $v_e$.
Since $H$ does not contain vertices in $A'$, a vertex in $A$ is adjacent in $H$ either to $v_e$ or to another vertex belonging to $A$.
Hence $v_e$ is an articulation vertex of $H$, such that each connected component of $H-v_e$ has its set of vertices included in $A\setminus A'$ or in $B$.
Then the vertex $v_e$ of $N_B$ is an articulation vertex of its label. 
Indeed, say that a vertex $v\not=v_e$ is $A$-accessible, resp. $B$-accessible, 
if there exists a leaf $w\in A$, resp. $w\in B$, such that $N_B$ is $w$-accessible and $v$ is associated to the path between $N_B$ and $w$.
Then the vertices of $N_B-v_e$ are either $A$-accessible or $B$-accessible, maybe both, but these two sets are not empty and
a $A$-accessible vertex is not adjacent to a $B$-accessible vertex.
Since the label of $N_B$ has an articulation vertex, it has obviously a split, hence it is degenerate, and it is not a clique, hence it is a star.
Then a vertex $v$ of $N_B-v_e$ is either $A$-accessible or $B$-accessible, but not both. 
Otherwise the node at the other extremity of the tree-edge associated with $v$ would have the same property as $N_B$ in terms of articulation vertex, 
hence it would be a star also, which would be a contradiction with the fact that the graph-labelled tree is reduced.
Now, there cannot be a leaf $b$ in $B$ in the connected component of $T-e$ containing leaves in $A'$, otherwise
$N_A$ would also be a star and there would be an edge between a vertex of $A\setminus A'$ and a vertex of $B\setminus B'$.
So, finally, splitting the node $N_B$ into $v_e$ plus the $A$-accessible vertices on one hand, and the $B$-accessible vertices on the other hand,
creates a tree-edge $e$ which separates the leaves of the tree into $A$ and $B$.

In the second case, since there is an edge in $G$ between every vertex in $A'$ and  every vertex in $B'$,
then every marker-vertex in the label of $N$ is either $A'$-accessible or $B'$-accessible, but not both.
Hence there is a split in $N$ formed by $A'$-accessible and $B'$-accessible vertices. 
Hence $N$ is a clique node. Assume a leaf $a\in A$ is in a connected component of $T-N$ containing vertices in $B'$. Then $a\not\in A'$.
Let $v$ be the marker-vertex of $N$ associated with the tree-edge of $T$ adjacent to this connected component.
Then $v$ has to be $A$-accessible. Otherwise, the accessibility graph of the connected component would not be connected, contradicting Lemma \ref{lem:connected}.
Since there are at least two connected components of $T-N$ containing vertices in $B'$, and since $N$ is a clique,
then there is an edge between a vertex in $A\setminus A'$ and a vertex in $B'$, which is a contradiction.
So the connected components of $T-N$ contain either leaves in $A$ or leaves in $B$ but not both.
And, finally, splitting the node $N$ respectively with this partition
creates a tree-edge $e$ which separates the leaves of the tree into $A$ and $B$.
\end{proof}
\bigskip

\noindent {\it Proof of Theorem \ref{th:cun}}:
With Proposition \ref{prop:split_list}, the list of splits of the graph determines the list of partitions of leaves of the tree induced by tree-edge or node removals.
It is an easy combinatorial property that this list determines completely the tree. Then the labels are necessarily determined by the graph structure.
So, the whole graph-labelled tree is uniquely determined by the graph.
\endproof

\end{document}